%% file: main.tex
\documentclass[twocolumn,10pt]{article}
\usepackage[margin=0.85in]{geometry}
\newif\ifanon\anonfalse
\newif\ifarxiv\arxivtrue
\newif\ifconf\conffalse
\usepackage[utf8]{inputenc}
\usepackage{amsmath,amssymb,amsthm,mathtools}
\usepackage{booktabs,graphicx,xcolor}
\usepackage[shortlabels]{enumitem}
\usepackage[round]{natbib}
\usepackage{stfloats}
\usepackage[colorlinks=true,linkcolor=blue!55!black,citecolor=blue!55!black,urlcolor=blue!55!black]{hyperref}

\newcommand{\todo}[1]{}

\newtheorem{theorem}{Theorem}
\newtheorem{proposition}{Proposition}
\newtheorem{corollary}{Corollary}
\newtheorem{lemma}{Lemma}
\theoremstyle{definition}
\newtheorem{assumption}{Assumption}
\theoremstyle{remark}
\newtheorem{remark}{Remark}

\newcommand{\D}[1]{\mathcal D_{#1}}
\newcommand{\Real}{\mathbb R}
\newcommand{\E}{\mathbb E}
\newcommand{\Prob}{\mathbb P}
\newcommand{\Neff}{N_{\mathrm{eff}}}
\newcommand{\hdelta}{\widehat\delta}
\newcommand{\homega}{\widehat\omega}
\newcommand{\hc}{\widehat c_\alpha}
\newcommand{\Rhat}{\widehat{R}}
\newcommand{\dB}{d_B}
\newcommand{\Winf}{W_{\infty,0}}
\DeclareMathOperator{\tr}{tr}

\title{Where Do Two Populations of Persistence Diagrams Differ?\\
  Calibrated Local Inference at a Fixed Budget}
\author{\begin{tabular}{ccc}
  Pramita Bagchi$^{1}$ & Edward Bae$^{1}$ & Atish Mitra$^{2}$ \\[4pt]
  Alexander D. Silberman$^{1}$ & \v{Z}iga Virk$^{3}$ & Sushovan Majhi$^{1}$
  \end{tabular}\\[8pt]
  {\normalsize $^{1}$The George Washington University \quad
   $^{2}$Montana Technological University \quad
   $^{3}$University of Ljubljana}}
\date{}

\begin{document}
\maketitle

\begin{abstract}
Many two-sample tests for populations of persistence diagrams assess global differences without identifying the regions of the birth--death plane that contribute to them. We study simultaneous
inference for local mean contrasts when the number of available diagrams is fixed. They are differences in expected weighted feature mass within $\ell_\infty$ neighborhoods at several centers and radii. We estimate these contrasts using additive
landmark responses. A Gaussian multiplier bootstrap calibrates simultaneous confidence intervals while allowing unequal group covariances. The neighborhoods whose intervals
exclude zero form a map with approximate family-wise error control, and selecting a subset of original intervals for display preserves their joint coverage guarantee. On the simultaneous coverage event, every reported neighborhood lies within twice its radius of the support of the mean-measure difference. A geometric result gives sufficient radius conditions for a displaced
feature to produce a nonzero contrast. A comparison of sufficient
detection thresholds quantifies the tradeoff between reducing the
number of tested coordinates and reserving observations for an
independent pilot. In simulations with $40$ to $120$ diagrams per class, the bands achieved
$94\%$--$98\%$ simultaneous coverage under both the strict null and
equal means with unequal covariances. In the latter setting, a
permutation maximum and the pooled-$t$ implementation of the
two-stage persistence-image test of Moon and Lazar rejected in up
to $32\%$ and $26\%$ of runs, respectively. In the fixed-budget
simulations, spending a third of the observations on a pilot to
choose landmarks or radii located changes less often than a
prespecified grid at a single radius. On the MUTAG benchmark, the localized region concentrates on
rings of fused-ring systems, an exploratory reading.
\end{abstract}

\section{Introduction}
\label{sec:intro}

A persistence diagram records when topological features appear and disappear along a filtration. In population studies, each observation is represented by a persistence diagram; observations might be images, networks, or spatial point patterns. Groups may differ in how often particular features occur, their birth and death values, or their variability across observations. Populations of diagrams can be compared using permutation tests on diagram distances \citep{RobinsonTurner2017} or tests based on kernel and mean embeddings \citep{Reininghaus2015,Kusano2016,Gretton2012,PaperIII}.
Other approaches use persistence landscapes \citep{Bubenik15}, pooled lifetimes \citep{STRAND}, or persistence intensities \citep{HanKimKim2026}.

For local inference, \citet{MoonLazar2023} test filtered persistence-image pixels with false-discovery control, while \citet{Chen2017PIF} provide pointwise bands for persistence intensities.
Confidence sets for a single diagram address a different inferential target \citep{Fasy2017,Chazal2014,Chazal2014b}. Our focus is simultaneous confidence intervals for local mean contrasts across birth--death locations and neighborhood radii.

Local inference has a resolution problem. A small neighborhood separates
nearby changes, but its response is zero in most diagrams; a large one
collects more mass but can merge a feature lost in one place with a
feature gained nearby. Choosing a radius changes both the estimand and
its precision, and a method that leaves the choice unspecified is hard to
use when few diagrams are available.

\paragraph{This paper.}
We estimate local mean contrasts using additive landmark responses from PLACE and PALACE \citep{Mitra2024,PALACE,PLACE}; Figure~\ref{fig:map} shows the resulting map on a molecular benchmark. Each response is a weighted count of diagram points in an
$\ell_\infty$ neighborhood at a specified center and radius. Its mean contrast is an integral against the difference of the two mean counting measures \citep{DivolChazal2019,DivolPolonik2019,DivolLacombe2021}. A Gaussian multiplier bootstrap calibrates all coordinates jointly \citep{Chernozhukov2013-jl,Chernozhukov2017}.
The band-validity theorem (Theorem~\ref{thm:band}) follows by applying their approximation results to these responses.
\begin{enumerate}[label=(\roman*),nosep]
\item \emph{Simultaneous local inference}
   (\S\ref{sec:method}, Corollary~\ref{cor:fwer}): joint intervals cover
  contrasts over centers and radii. Reporting a data-dependent subset
  of the original intervals preserves their simultaneous coverage
  guarantee. On the coverage event, a spatial bound relates reported
  neighborhoods to the support of the mean-measure difference
  (Proposition~\ref{prop:extent}). Under a localized displacement model,
  we also give radius conditions ensuring a nonzero mean contrast
  (Theorem~\ref{thm:local}).
\item \emph{Neighborhood choice at a fixed budget} (Corollary~\ref{cor:pilot},
  \S\ref{sec:exp-localize-budget}): for pilot selection of a subfamily from a prespecified family,
  we compare sufficient detection thresholds, accounting for both
  the loss of inference observations and the reduction in the
  number of tested coordinates. In our main simulation designs,
  a prespecified grid at a single radius, using every diagram,
  gave the highest localization rates.
\item \emph{Lifetimes are not enough}: any statistic of the lifetime
  multisets has power equal to its level against shifts along the
  diagonal (Proposition~\ref{prop:invariance}).
\end{enumerate}

\begin{figure}[t]
  \centering
  \IfFileExists{figures/map_mutag_rev.pdf}{\includegraphics[width=\columnwidth]{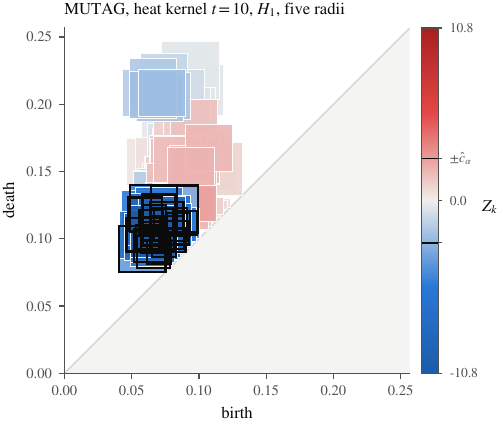}}{\todo{\texttt{\detokenize{figures/map_mutag_rev.pdf}}}}
  \caption{Local inference on MUTAG (mutagenic against nonmutagenic
  compounds; heat-kernel signature, $t=10$, degree one). Class-aware
  landmarks are fitted on a pilot third of $175$ deduplicated molecules,
  each with five radii; squares are filled by the standardized contrast $Z_k$ of \S\ref{sec:method}, and at each center
  the smallest radius whose interval excludes zero is outlined. All $58$
  localized coordinates, at $26$ centers, carry more mass in the
  mutagenic class. Over $20$ random splits the localized set is never
  empty and $97\%$ of a split's localized neighborhoods are matched by
  another split's (\S\ref{sec:exp-mol}).}
  \label{fig:map}
\end{figure}

\section{Local contrasts}
\label{sec:setup}

\paragraph{Diagrams and landmark responses.}
Let $\D{n}$ contain persistence diagrams with at most $n$
points $(b,d)$ satisfying $0\le b<d\le L$. The counting measure 
$N_D$ assigns to each measurable region $A$ the number of points 
of $D$ in $A$, including multiplicities. For $X\sim P$, the
mean measure $\bar\mu_P(A)=\E_P[N_X(A)]$ gives the expected
count in that region \citep{DivolChazal2019}. Its definition
requires no matching of features across diagrams. 
Write $\pi(b,d)=(d-b)/2$. For the one-point diagrams
$\{p\}$ and $\{a\}$, abbreviate their bottleneck distance as
$\dB(p,a)=\min\{\|p-a\|_\infty,\max(\pi(p),\pi(a))\}$.
A landmark family $\nu=\{(p_k,r_k,w_k)\}_{k=1}^{\ell}$
specifies centers in the diagram domain, radii
$0<r_k\le\pi(p_k)$, and weights $w_k>0$.
Centers may appear at several radii.
For an off-diagonal point $a$ and a diagram $D\in\D{n}$,
define the point response and diagram response, respectively, by
\begin{equation}\label{eq:embedding}
  \varphi_k(a)=w_k(r_k-\dB(p_k,a))_+,~
  \Phi_k(D)=\int\varphi_k\,dN_D.
\end{equation}
Together, the diagram responses form a linear
representation \citep{DivolPolonik2019}. 
The point response $\varphi_k$ is $w_k$-Lipschitz in
$\dB$, and $0\le\Phi_k(D)\le nw_kr_k$. 
Within the diagram domain, the positive-response region
$\mathcal B^\circ_k=\{a:\dB(p_k,a)<r_k\}$ is the
intersection of the frame with the open $\ell_\infty$
square of half-side $r_k$ centered at $p_k$.
The radius cap keeps this region off the diagonal.
For diagram laws $P$ and $Q$ on $\D{n}$, define
\begin{equation}\label{eq:estimand}
  \delta_k=\E_P\Phi_k(X)-\E_Q\Phi_k(Y)
  =\int\varphi_k\,d(\bar\mu_P-\bar\mu_Q).
\end{equation}
This contrast measures a difference in weighted local mass.
Moving points within $\mathcal B^\circ_k$ can change $\delta_k$
without changing the expected count. Opposing local changes
can cancel, giving $\delta_k=0$. 
Appendix~\ref{app:transport} gives the padded transport
construction and a lower confidence bound on
mean-measure separation.

\paragraph{Sampling and admissible learning.}
The inference observations are independent samples $X_1,\dots,X_m\sim P$
and $Y_1,\dots,Y_{m'}\sim Q$. Write $N=m+m'$ and
$\Neff=mm'/N$. The response vectors
$U_i=(\Phi_k(X_i))_k$ and $V_j=(\Phi_k(Y_j))_k$
have covariances $\Sigma_P$ and $\Sigma_Q$.
The scaled difference $\sqrt{\Neff}(\bar U-\bar V)$ has
covariance matrix $\Gamma$ and coordinate variances
$\omega_k^2$, given by
\begin{equation}\label{eq:gamma}
  \Gamma=\tfrac{m'}{N}\Sigma_P+\tfrac{m}{N}\Sigma_Q,
  \qquad \omega_k^2=\Gamma_{kk}.
\end{equation}
\begin{assumption}[Fixed family]\label{ass:fixed}
The centers, radii, weights, and retained coordinates are prespecified, 
or are functions of a labeled pilot sample
independent of the inference observations. 
When a pilot is used, probability statements are
conditional on it.
\end{assumption}
A pilot may fit class-aware farthest-point centers \citep{PALACE}, cap
radii, screen coordinates, and estimate scales. 
Normalization factors $\homega_k$ may be
prespecified, pilot-based, or estimated from the
inference observations. Inference-based estimates
require uniform relative consistency in Appendix~\ref{app:studentization},
which also gives groupwise and pooled sample-variance
constructions. With a prespecified family, all observations can be used
for inference with either deterministic normalizing
factors, such as $\homega_k=nw_kr_k$, or inference-based
estimates satisfying this condition. Under the strict null $P=Q$, every $\delta_k=0$. The coordinate mean null $\delta=0$ permits unequal group
covariances and may hold even when $\bar\mu_P\ne\bar\mu_Q$,
because the chosen responses need not distinguish all
differences between mean measures.

\section{Inference and neighborhood choice}
\label{sec:method}

For $0<\alpha<1$, we construct a simultaneous band
with nominal coverage $1-\alpha$ for a fixed landmark
family. Neighborhoods can be examined jointly across
radii or chosen on an independent pilot before inference.

\paragraph{A simultaneous band.}
Let $\hdelta=\bar U-\bar V$ and $Z_k=\sqrt{\Neff}\,\hdelta_k/\homega_k$.
Compute the normalization factors $\homega_k$ once and hold them
fixed across multiplier draws. 
For each bootstrap replicate, draw mutually independent
standard normal multipliers $e_i,e'_j$, independently
of the observations. Use the same multipliers across
all centers and radii to compute
{\small\begin{align}
  Z^*_k&=\frac{\sqrt{\Neff}}{\homega_k}\Bigl(\frac1m\sum_i e_i(U_{ik}-\bar U_k)
-\frac1{m'}\sum_j e'_j(V_{jk}-\bar V_k)\Bigr).\nonumber
\end{align}}
Repeat this step independently $B$ times. Let $\hc$ be the $\lceil B(1-\alpha)\rceil$th smallest
of the resulting values of $\max_k|Z^*_k|$. Define the band and localization set by
\begin{equation}\label{eq:band}
  \mathcal C_k=\bigl[\hdelta_k\pm\hc\,\homega_k/\sqrt{\Neff}\bigr],\qquad
  \Rhat=\{k:|Z_k|>\hc\}.
\end{equation}
The set $\Rhat$ indexes intervals that exclude zero.
For these coordinates, the sign of $\hdelta_k$ gives
the reported direction of the local contrast.
Prespecified or pilot-based normalizing factors need
not estimate standard deviations. Inference-based factors, 
including groupwise and pooled SD estimates, must satisfy 
the uniform relative consistency conditions in 
Appendix~\ref{app:studentization}. The multiplier distribution 
calibrates the normalized contrasts jointly.
For optional step-down testing, $\Rhat_{\mathrm{sd}}$
denotes the Romano--Wolf rejection set computed from
the same bootstrap draws \citep{RomanoWolf2005}. 
Under $P=Q$, label permutation also gives a
finite-sample valid test, with label-dependent
normalization recomputed for each permutation.
It is not generally exact under the mean null.

\paragraph{Several radii and selected reporting.}
Choose centers and base radii $r_0(p)>0$ under
Assumption~\ref{ass:fixed}. For a finite prespecified
grid $\mathcal H \subset (0,\infty)$ of multipliers, use
$r=\min\{hr_0(p),\pi(p)\}$, $h\in\mathcal H$.
Keep identical coordinates created by the radius cap
only once. Shared multipliers account for dependence
among responses. Additional radii may increase the
joint critical value, while sparse responses can make
variance estimation and calibration less reliable.
After computing the joint band, any subset of its
intervals may be displayed. For example, report the
smallest tested radius at each center whose interval
excludes zero. This identifies the finest tested scale
with a detected contrast. It does not rule out
contrasts at smaller radii.
Retain the original intervals, normalizing factors
and joint critical value. Selection preserves their
simultaneous coverage. Recalibrating a new band on
the selected subset requires separate justification.

\paragraph{Choosing a configuration on a pilot.}
Split the independent pilot into fitting and validation
parts. For each radius multiplier $h\in\mathcal H$,
construct the candidate landmark family and its retained
coordinate set $I_h$ on the fitting part.
Discard candidates with no retained coordinates.
On the validation part, estimate the contrasts
$\tilde\delta_{h,k}$ and compute normalization
factors $\tilde\omega_{h,k}$ using the intended final
normalization rule and planned inference allocation.
Let $\tilde c_\alpha(h)$ be the multiplier critical value
computed from the validation covariances, using these factors
and the planned inference allocation (Appendix~\ref{app:budget}).
Select a maximizer of
\begin{equation}\label{eq:score}
  \widehat J(h)=
  \max_{k\in I_h} \left\{
    \frac{|\tilde\delta_{h,k}|}{\tilde\omega_{h,k}}
    -\frac{\tilde c_\alpha(h)}{\sqrt{\Neff}}\right\}.
\end{equation}
Here $\Neff$ uses the planned inference sample sizes.
The score targets the strongest anticipated detection
within a candidate family. All selection quantities
use only the pilot.
For the selected family, retain the pilot normalization
factors or recompute them from the inference observations
under the conditions in Appendix~\ref{app:studentization}.
Pilot selection preserves coverage under the conditions
for the chosen normalization regime.
The selected configuration affects power and spatial
resolution. Comparisons therefore hold the total number
of available diagrams fixed
(Section~\ref{sec:exp-budget}).

\paragraph{Cross-validation and sample splitting.}
Cross-validation may replace the pilot score \eqref{eq:score}, provided
all selection uses only the independent pilot. Alternatively, split
each group into two halves. Use one half to select a neighborhood
family, construct a band on the other at nominal coverage
$1-\alpha/2$, and swap roles. Under the conditions for each band,
a union bound gives joint coverage at least $1-\alpha$ minus the sum
of the two calibration errors. Retain the two original interval
families separately. Appendix~\ref{app:cv-screening} studies this
construction for CV selection of regions and radii.

\section{Theory}
\label{sec:theory}

\begin{theorem}[Simultaneous band]\label{thm:band}
Suppose Assumption~\ref{ass:fixed} holds, and the positive
normalizing factors $\homega_k$ are prespecified or
computed from the independent pilot. Assume that $\sup_D\max_{k\le \ell}|\Phi_k(D)|\le B_\nu$,
$\min_{k \le \ell}\omega_k\ge\sigma_{\min}>0$, and
$\lambda_0\le m/N\le1-\lambda_0$ for some $0<\lambda_0\le1/2$. 
There is $C$ depending only on $B_\nu/\sigma_{\min}$ and
$\lambda_0$ such that, for nominal level $0<\alpha<1$, number of retained coordinates $\ell$, 
and number of bootstrap draws $B$, the band \eqref{eq:band} satisfies
\[
  \Bigl|\Prob\{\delta_k\in\mathcal C_k\ \forall k\}-(1-\alpha)\Bigr|
  \le C\Bigl(\frac{\log^7(2\ell N)}{N}\Bigr)^{1/6}+\frac{1}{B+1}.
\]
The same bound holds for any nonempty coordinate subset
fixed independently of the inference observations,
using its own critical value. The probability is taken 
over the inference observations and bootstrap multipliers, 
conditional on the pilot if one is used.
\end{theorem}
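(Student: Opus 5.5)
We reduce the coverage event to a statement about a maximum of a normalized two-sample mean, then apply the high-dimensional Gaussian and multiplier-bootstrap approximations of \citet{Chernozhukov2013-jl,Chernozhukov2017}. Conditional on the pilot, the family $\nu$ and the factors $\homega_k$ are fixed by Assumption~\ref{ass:fixed}. The inference observations stay independent with laws $P,Q$. Since $\delta_k\in\mathcal C_k$ for all $k$ if and only if $T:=\max_k\sqrt{\Neff}|\hdelta_k-\delta_k|/\homega_k\le\hc$, it suffices to compare the law of $T$ with the conditional law of $T^*:=\max_k|Z^*_k|$, whose empirical $(1-\alpha)$-quantile over $B$ draws is $\hc$. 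We also rewrite $\sqrt{\Neff}(\hdelta-\delta)$ as a single normalized sum of $N$ independent, nonidentically distributed centered vectors. These are $\xi_i=\sqrt{N m'/m}\,(U_i-\E U_i)$ for $i\le m$ and $\xi_{m+j}=-\sqrt{Nm/m'}\,(V_j-\E V_j)$, so that $\sqrt{\Neff}(\hdelta-\delta)=N^{-1/2}\sum_i\xi_i$ with covariance $\Gamma$.

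\textbf{Step 1: Gaussian approximation.} Divide coordinate $k$ by $\homega_k$ and compare with the normalized version by $\omega_k$. Because $\homega_k$ is an arbitrary fixed positive constant, we work directly with the vectors $\xi_i/\homega_k$. The Kolmogorov distance over max-statistics is scale-equivariant per coordinate, since hyperrectangles are preserved by diagonal rescaling. We can therefore apply the hyperrectangle central limit theorem of \citet{Chernozhukov2017} to $\xi_i/\omega_k$. Its summands are bounded by $B_\nu\sqrt{N}/(\sqrt{\lambda_0}\,\cdot\,\sigma_{\min})$ up to constants, after dividing by $\sqrt N$ appropriately: each $|\xi_{ik}|/\omega_k\le 2B_\nu\lambda_0^{-1/2}/\sigma_{\min}$. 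Their coordinate variances average to one, so the bounded-moment conditions hold with $B_N$ a constant depending only on $B_\nu/\sigma_{\min}$ and $\lambda_0$. The theorem then gives
\[\sup_t|\Prob(T\le t)-\Prob(\max_k|G_k|/\homega_k\cdot\omega_k... )|\]
More precisely, it gives $\sup_t|\Prob(T\le t)-\Prob(T_G\le t)|\le C(\log^7(2\ell N)/N)^{1/6}$, where $T_G=\max_k|G_k|/\homega_k$ and $G\sim N(0,\Gamma)$. The absolute values are handled by doubling the dimension to $2\ell$.

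\textbf{Step 2: Bootstrap consistency.} Conditional on the data, $\sqrt{\Neff}(\ldots)$ in $Z^*$ is Gaussian with covariance $\widehat\Gamma=\frac{m'}{N}\widehat\Sigma_P+\frac mN\widehat\Sigma_Q$, where these are the groupwise empirical covariances with divisors $m$ and $m'$. We then use Gaussian comparison (Nazarov's anti-concentration together with the comparison inequality), which bounds the Kolmogorov distance between $T_G$ and $T^*$ given the data by $C(\Delta\log^2(2\ell))^{1/3}$, with $\Delta=\max_{k,k'}|\widehat\Gamma_{kk'}-\Gamma_{kk'}|/(\omega_k\omega_{k'})$. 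We will bound $\Delta$ by Hoeffding/Bernstein plus a union bound over $\ell^2$ entries, using boundedness by $B_\nu$. On an event of probability $1-1/N$, this gives $\Delta\lesssim\sqrt{\log(\ell N)/N}$. Both the comparison error and the exceptional probability are then absorbed into the $(\log^7(2\ell N)/N)^{1/6}$ rate. Integrating over the data, we replace the random quantile $c^*_\alpha$ of the conditional law by quantiles of $T_G$ shifted by this error, following the standard argument in Chernozhukov et al.

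\textbf{Step 3: Finite $B$.} Let $\hc$ be the $\lceil B(1-\alpha)\rceil$th order statistic of $B$ conditionally i.i.d.\ draws $T^*_b$. Treat $T$ together with the $T^*_b$ as $B+1$ draws that, up to the errors of Steps 1--2, are exchangeable with a continuous common law. The rank of $T$ among them is then uniform, so $\Prob(T\le\hc)$ lies within $1/(B+1)$ of $1-\alpha$. Formally, we will couple $T$ with an independent conditional draw of $T^*$ via Steps 1--2, since $T_G$ has a continuous law. The subset claim follows because a fixed subset is itself a fixed family with the same constants, and its own critical value is computed identically.

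The main obstacle is Step 2's passage from conditional quantiles to unconditional coverage, combined with Step 3. The comparison error is random and the quantile is both random and empirical. I would first try the device of Chernozhukov et al.\ (2013, Thm.~3.2) of sandwiching $\hc$ between quantiles of $T_G$ at levels $\alpha\pm\pi_N$ on the high-probability event. For finite $B$, I would then handle the order-statistic argument conditionally, using the fact that the conditional law of $T^*$ is continuous because $\homega_k>0$ and the maximum of a nondegenerate Gaussian has no atoms. This continuity needs $\widehat\Gamma$ to have a nonzero diagonal, which holds on the high-probability event.
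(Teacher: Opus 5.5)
Your Steps~1 and~2 match the paper's proof. Both write $\sqrt{\Neff}(\widehat\delta-\delta)$ as one sum of $N$ independent bounded centered vectors. Both note that fixed positive $\homega_k$ map hyperrectangles to hyperrectangles, so Proposition~2.1 of \citet{Chernozhukov2017} applies with no condition on $\homega_k$ beyond positivity. Both use Hoeffding plus a union bound over the within-group covariance entries on an event of probability $1-1/N$, followed by the conditional Gaussian comparison at rate $\Delta^{1/3}\log^{2/3}(2\ell)=O((\log^5(2\ell N)/N)^{1/6})$. Normalizing by $\omega_k$ instead of $\sigma_{\min}$ is immaterial. One slip: the summands should be $\xi_i=\sqrt{m'/m}\,(U_i-\E U_i)$ and $\xi_{m+j}=-\sqrt{m/m'}\,(V_j-\E V_j)$. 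Your extra factor $\sqrt N$ is wrong, although the envelope $2B_\nu\lambda_0^{-1/2}/\sigma_{\min}$ you state is the one for the correct scaling.

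The genuine gap is Step~3, which is where the exact $1/(B+1)$ term and a constant free of $B$ and $\alpha$ must come from. Steps~1--2 give only marginal Kolmogorov bounds: one for the law of $T$, and, on the good event, one for the random conditional distribution function $F^*$ of $T^*$. Since $T$ and $F^*$ are computed from the same observations, these bounds say nothing directly about the joint law of $T,T_1^*,\dots,T_B^*$. So ``approximately exchangeable, hence uniform rank'' is not yet an argument.

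Your fallback, sandwiching $\hc$ between quantiles of $T_G$ at levels $1-\alpha\pm\pi_N$, fails for finite $B$. The statistic $\hc$ is an order statistic of $B$ draws, and it lies between two conditional quantiles only with binomial probability. Controlling it this way, for instance by a Dvoretzky--Kiefer--Wolfowitz bound on the draws, adds an error of order $B^{-1/2}$ rather than $1/(B+1)$.

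The paper closes this step as follows.
\begin{itemize}
\item On the good event $F^*$ is continuous. Conditionally on the observations, $T\le\hc$ holds exactly when at most $j-1$ of the draws fall strictly below $T$, where $j=\lceil B(1-\alpha)\rceil$. Hence $\Prob^*(T\le\hc)=g_{B,j}(V)$, with $V=F^*(T)$ and $g_{B,j}(v)=\Prob\{\operatorname{Bin}(B,v)\le j-1\}$.
\item Next, $V$ itself is close in Kolmogorov distance to the uniform law on $[0,1]$. The variable $W=F(T)$ is within $Cr_N$ of uniform by Step~1 and continuity of $F$. On the good event, $|V-W|\le\sup_t|F^*(t)-F(t)|\le s_N$ by Step~2.
\item Since $g_{B,j}$ decreases from $1$ to $0$, integration by parts gives $|\E g_{B,j}(V)-\int_0^1 g_{B,j}(v)\,dv|\le\sup_v|\Prob(V\le v)-v|$.
\item The Beta integral gives $\int_0^1 g_{B,j}(v)\,dv=j/(B+1)$, which is within $1/(B+1)$ of $1-\alpha$.
\end{itemize}

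So the sandwich must be applied to the whole conditional distribution function $F^*$, uniformly in the level, to obtain near-uniformity of $F^*(T)$. The order statistic must then be handled by the exact binomial identity rather than by locating $\hc$. Your ``coupling with an independent conditional draw'' becomes rigorous only once you prove that $F^*(T)$, and not merely $T$, is nearly uniform.
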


The proof is given in Appendix~\ref{app:proofs}.
The bound may be uninformative at small sample sizes;
Section~\ref{sec:exp-calib} assesses finite-sample calibration. The next result allows normalizing factors estimated
from the inference observations.

\begin{corollary}[Estimated normalization]
\label{cor:estimated-normalization}
Suppose the boundedness, nondegeneracy and allocation conditions of
Theorem~\ref{thm:band} hold with $\log^7(2\ell N)/N\to0$, and the
positive factors $\homega_k$, computed from the inference observations
and held fixed across multiplier draws, satisfy
$\max_{k\le\ell}|\homega_k/\tau_{N,k}-1|\cdot\log(2\ell)\xrightarrow{\Prob}0$
for deterministic reference scales $\tau_{N,k}$ with $\omega_k/\tau_{N,k}$
bounded above and below uniformly in $k$ and $N$
(Proposition~\ref{prop:studentized-band}). Then
the band \eqref{eq:band} satisfies
\[
  \left|
  \Prob\{\delta_k\in\mathcal C_k\ \forall k\}
  -(1-\alpha)
  \right|
  \le \varepsilon_N+\frac{1}{B+1},
\]
where $\varepsilon_N\to0$ as $N\to\infty$. The groupwise and pooled constructions in
Appendix~\ref{app:studentization} satisfy these
conditions under their stated assumptions.
\end{corollary}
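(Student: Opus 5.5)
The approach is a perturbation argument around Theorem~\ref{thm:band}, run at the reference scales $\tau_{N,k}$ instead of the estimated factors. Define the oracle statistics $T=\max_k\sqrt{\Neff}|\hdelta_k-\delta_k|/\tau_{N,k}$ and $T^*=\max_k|Z^{*,\tau}_k|$, where $Z^{*,\tau}_k$ is $Z^*_k$ with $\homega_k$ replaced by $\tau_{N,k}$. Because the $\tau_{N,k}$ are deterministic, they qualify as prespecified positive factors in Theorem~\ref{thm:band}. Its proof, through the Chernozhukov--Chetverikov--Kato Gaussian and multiplier approximations applied to the bounded responses with $\omega_k\ge\sigma_{\min}$ and balanced allocation, gives the following. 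The Kolmogorov distance between the law of $T$ and the conditional law of $T^*$, uniformly over thresholds, is at most $C(\log^7(2\ell N)/N)^{1/6}$ with probability close to one. The extra ratio $\omega_k/\tau_{N,k}$ is bounded above and below, so it only changes the constant. I will also extract the anti-concentration bound for the Gaussian maximum $\max_k|G_k/\tau_{N,k}|$, $G\sim N(0,\Gamma)$: with probability over bands of width $\epsilon$ at most $C\epsilon\sqrt{\log(2\ell)}$. This uses that the coordinate variances $\omega_k^2/\tau_{N,k}^2$ are bounded below.

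The coverage event $\{\delta_k\in\mathcal C_k\ \forall k\}$ is $\{\max_k\sqrt{\Neff}|\hdelta_k-\delta_k|/\homega_k\le \hc\}$. Write $\rho_N=\max_k|\homega_k/\tau_{N,k}-1|$. On $\{\rho_N\le\eta\}$ with $\eta<1/2$, rescaling coordinatewise gives
\[
T/(1+\eta)\le \max_k\sqrt{\Neff}|\hdelta_k-\delta_k|/\homega_k\le T/(1-\eta).
\]
The multiplier statistic sandwiches the same way, since $\homega_k$ is held fixed across draws. The empirical bootstrap quantile $\hc$ is therefore bracketed by the $(1\pm\eta)^{\mp1}$ multiples of the empirical quantile of $T^*$, and the $1/(B+1)$ term arises exactly as in Theorem~\ref{thm:band} from the order statistic. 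The coverage probability then lies between $\Prob\{T\le c^\tau(1-\eta)/(1+\eta)\}$ and $\Prob\{T\le c^\tau(1+\eta)/(1-\eta)\}$, up to $\Prob\{\rho_N>\eta\}$. Here $c^\tau$ is the multiplier quantile of $T^*$.

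The main obstacle is converting this multiplicative slack into vanishing probability. The perturbation $c^\tau\cdot O(\eta)$ must be small relative to the anti-concentration scale. Since $c^\tau\lesssim\sqrt{\log(2\ell)}$ by the Gaussian maximal inequality (the scaled variances are bounded), the slack is $O(\eta\sqrt{\log(2\ell)})$. Anti-concentration then costs $C\eta\log(2\ell)$. I will choose $\eta=\eta_N\downarrow0$ slowly, so that $\eta_N\log(2\ell)\to0$ while $\Prob\{\rho_N\log(2\ell)>\eta_N\log(2\ell)\}\to0$. This is possible by the hypothesis $\rho_N\log(2\ell)\xrightarrow{\Prob}0$, via a diagonal choice.

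One further subtlety is that $\hc$ depends on the data through $\homega_k$ as well as through the multipliers. I will handle this by bounding the conditional multiplier quantile on the high-probability event where the Gaussian comparison holds; the standard CCK argument allows a random quantile on such an event. Collecting the terms gives
\[
\varepsilon_N=C(\log^7(2\ell N)/N)^{1/6}+C\eta_N\log(2\ell)+\Prob\{\rho_N>\eta_N\}+o(1)\to0,
\]
with the $1/(B+1)$ term unchanged. The final sentence of the corollary follows by citing Proposition~\ref{prop:studentized-band} for the groupwise and pooled estimators.
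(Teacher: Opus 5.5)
Your proposal is correct and is essentially the paper's argument for Proposition~\ref{prop:studentized-band}. It uses the same ingredients: the Gaussian and multiplier approximations of Theorem~\ref{thm:band} at deterministic reference scales, the multiplicative sandwich when $\max_k|\homega_k/\tau_{N,k}-1|$ is small, the $O_{\Prob}(\sqrt{\log(2\ell)})$ size of the maxima, Nazarov's inequality so that the rescaling costs a multiple of $\max_k|\homega_k/\tau_{N,k}-1|\log(2\ell)$, and the order-statistic argument for the $1/(B+1)$ term.

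The only difference is bookkeeping. The paper transfers the Kolmogorov approximations to the $\homega$-normalized statistics and then reruns Steps~4--5 of Appendix~\ref{app:band-proof} verbatim. Your random finite-$B$ threshold should be handled the same way: apply that $F^*(T)$-uniformity and binomial-tail argument to your rescaled oracle statistic $T(1\pm\eta_N)/(1\mp\eta_N)$. This gives control uniformly in $B$, which a $B=\infty$ quantile sandwich does not.
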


The proof is given in Appendix~\ref{app:proof_est_norm}.
The guarantee is asymptotic. In our simulations
(\S\ref{sec:exp-calib}) the pooled choice held its level at every
allocation, while the groupwise choice overrejected whenever one
group was three times the size of the other.

\begin{corollary}[Error control and selection]\label{cor:fwer}
For fixed or pilot-based normalization, let $\varepsilon_{N,\ell}$ denote the full error bound
in Theorem~\ref{thm:band}. For estimated normalization, use $\varepsilon_{N,\ell}=\varepsilon_N+1/(B+1)$ under the conditions of
Proposition~\ref{prop:studentized-band}. Let
$H_0=\{k:\delta_k=0\}$. Then $\Prob\{\Rhat\cap H_0\ne\emptyset\}\le\alpha+\varepsilon_{N,\ell}$
and the same holds for $\Rhat_{\mathrm{sd}}\supseteq\Rhat$. On the simultaneous coverage event, every
data-dependent subset of the original intervals covers its targets.
\end{corollary}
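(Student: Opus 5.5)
The plan is to reduce every claim to the simultaneous coverage event
\[
  \mathcal E=\{\delta_k\in\mathcal C_k\ \forall k\le\ell\}
\]
and then invoke Theorem~\ref{thm:band} or Corollary~\ref{cor:estimated-normalization}. In the fixed or pilot-based regime, Theorem~\ref{thm:band} gives $\Prob(\mathcal E)\ge 1-\alpha-\varepsilon_{N,\ell}$, using only the lower half of its two-sided bound. In the estimated-normalization regime, Corollary~\ref{cor:estimated-normalization} gives the same with $\varepsilon_{N,\ell}=\varepsilon_N+1/(B+1)$. When a pilot is used, every probability is conditional on it, as in Assumption~\ref{ass:fixed}.

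\textbf{Family-wise error.} First show $\{\Rhat\cap H_0\ne\emptyset\}\subseteq\mathcal E^c$. Take $k\in\Rhat\cap H_0$. Then $|Z_k|>\hc$, which means $|\hdelta_k|>\hc\,\homega_k/\sqrt{\Neff}$, since $\homega_k>0$. So $0\notin\mathcal C_k$. But $\delta_k=0$, so $\delta_k\notin\mathcal C_k$, and $\mathcal E$ fails. Taking complements gives $\Prob\{\Rhat\cap H_0\ne\emptyset\}\le 1-\Prob(\mathcal E)\le\alpha+\varepsilon_{N,\ell}$. The endpoint convention needs a brief check: $\mathcal C_k$ is closed and $\Rhat$ uses a strict inequality, so a coordinate is rejected exactly when zero is strictly outside $\mathcal C_k$.

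\textbf{Step-down procedure.} For $\Rhat_{\mathrm{sd}}$ I would use the standard Romano--Wolf argument \citep{RomanoWolf2005} in two parts.

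\emph{Inclusion.} The first step of the procedure uses the full-family critical value $\hc$, so it rejects exactly $\Rhat$. Later steps only add rejections, so $\Rhat_{\mathrm{sd}}\supseteq\Rhat$.

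\emph{Error control.} The critical value computed from the same multiplier draws for a subset $S$ is monotone in $S$: the maximum over a smaller set is smaller, so $\hc(S)\le\hc(S')$ whenever $S\subseteq S'$. Suppose the procedure makes a first false rejection of some $k\in H_0$. At that step the current active set contains all of $H_0$. By monotonicity its critical value is at least $\hc(H_0)$. Hence $\max_{k\in H_0}|Z_k|>\hc(H_0)$, the critical value for the subfamily $H_0$ alone.

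The event $\{\max_{k\in H_0}|Z_k|>\hc(H_0)\}$ is the non-coverage event for the band restricted to $H_0$. The set $H_0$ is determined by $(P,Q)$ and the family, so it is fixed independently of the inference observations. The subset clause of Theorem~\ref{thm:band} (or its analogue under estimated normalization) then bounds this event by $\alpha+\varepsilon_{N,|H_0|}$, which is at most $\alpha+\varepsilon_{N,\ell}$ because the bound is monotone in $\ell$. If $H_0$ is empty the claim is trivial.

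I expect the main obstacle to be checking that the estimated-normalization result really extends to the fixed subset $H_0$. I would argue that the conditions of Proposition~\ref{prop:studentized-band} are inherited by any subset of coordinates, since uniform relative consistency over all $\ell$ coordinates implies it over $H_0$ and $\log(2|H_0|)\le\log(2\ell)$.

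\textbf{Selection.} On $\mathcal E$, every index $k$ satisfies $\delta_k\in\mathcal C_k$. So for any subset $S$, including one chosen from the data after the fact, we have $\delta_k\in\mathcal C_k$ for all $k\in S$. This is a deterministic inclusion of events, so no further probability calculation is needed. Its validity depends on keeping the original $\hc$ and $\homega_k$ rather than recalibrating on $S$.
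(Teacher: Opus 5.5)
Your proposal is correct and follows the paper's proof step by step. The shared steps are: the FWER of $\Rhat$ reduces to failure of the coverage event; the monotone-critical-value, first-false-rejection argument reduces the step-down FWER to non-coverage of the fixed subfamily $H_0$ with its own critical value $\hc(H_0)$; the inclusion $\Rhat\subseteq\Rhat_{\mathrm{sd}}$ holds; and selected reporting follows from a deterministic event inclusion. One refinement is needed under estimated normalization: there the $\varepsilon_N$ of Proposition~\ref{prop:studentized-band} is only existential, and the bound is not explicitly monotone in $\ell$, so instead of citing monotonicity you should, as the paper does, choose $\varepsilon_N$ to dominate the vanishing errors for both the full family and $H_0$.
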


The proof is given in Appendix~\ref{app:fwer-proof}. The next result gives a sufficient condition for
detecting local contrasts.

\begin{proposition}[Detection]\label{prop:detect}
Assume the applicable conditions of Theorem~\ref{thm:band} or
Proposition~\ref{prop:studentized-band}, with error bound $\varepsilon_{N,\ell}$ as defined
in Corollary~\ref{cor:fwer}. Let $\hat\rho^2=\max_k\operatorname{Var}^*(Z^*_k)$, where $\operatorname{Var}^*$ is variance conditional on the inference observations. For constants $\kappa_0,\rho_0>0$ and $\gamma_N\in[0,1]$, suppose
$\Prob\left\{
    \max_k\frac{\homega_k}{\omega_k}>\kappa_0
    \ \text{or}\ \hat\rho>\rho_0
  \right\}\le\gamma_N.$
Then, with probability at least $1-\alpha-\varepsilon_{N,\ell}
  -\gamma_N-e^{-\alpha B/8},$
every coordinate satisfying
\[
  |\delta_k|>
  2\kappa_0\rho_0\omega_k
  \sqrt{\frac{2\log(4\ell/\alpha)}{\Neff}}
\]
belongs to $\Rhat$.
\end{proposition}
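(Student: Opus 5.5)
The plan is to intersect three good events and show that on their intersection every coordinate obeying the stated threshold satisfies $|Z_k|>\hc$. Event $E_1$ is simultaneous coverage, $\delta_k\in\mathcal C_k$ for all $k$. By Theorem~\ref{thm:band} or Proposition~\ref{prop:studentized-band}, it fails with probability at most $\alpha+\varepsilon_{N,\ell}$. Event $E_2$ is $\{\max_k\homega_k/\omega_k\le\kappa_0,\ \hat\rho\le\rho_0\}$, which fails with probability at most $\gamma_N$ by hypothesis. Event $E_3$ says the bootstrap critical value obeys $\hc\le\rho_0\sqrt{2\log(4\ell/\alpha)}$. A union bound then gives the stated probability $1-\alpha-\varepsilon_{N,\ell}-\gamma_N-e^{-\alpha B/8}$, provided $\Prob(E_2\cap E_3^c)\le e^{-\alpha B/8}$.

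On $E_1$, coverage gives $|\hdelta_k-\delta_k|\le\hc\homega_k/\sqrt{\Neff}$. Hence
\[
|\hdelta_k|\ge|\delta_k|-\hc\,\homega_k/\sqrt{\Neff},
\qquad\text{so}\qquad
|Z_k|\ge\frac{\sqrt{\Neff}|\delta_k|}{\homega_k}-\hc .
\]
It therefore suffices to have $\sqrt{\Neff}|\delta_k|>2\hc\homega_k$. On $E_2\cap E_3$ we have $2\hc\homega_k\le 2\kappa_0\rho_0\omega_k\sqrt{2\log(4\ell/\alpha)}$, and the assumed lower bound on $|\delta_k|$ gives exactly this. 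So every such $k$ lies in $\Rhat$. The factor $2$ in the threshold is the price of using the coverage event instead of a separate deviation bound for $\hdelta$.

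The main step is bounding the critical value. Conditional on the data, $(Z^*_k)_k$ is a centered Gaussian vector with coordinate variances at most $\hat\rho^2\le\rho_0^2$ on $E_2$. A union bound with the Gaussian tail gives
\[
\Prob^*\bigl\{\max_k|Z^*_k|>t\bigr\}\le 2\ell e^{-t^2/(2\rho_0^2)}.
\]
At $t_0=\rho_0\sqrt{2\log(4\ell/\alpha)}$ this probability is at most $\alpha/2$. The estimate $\hc$ is the $\lceil B(1-\alpha)\rceil$th order statistic of $B$ i.i.d.\ draws. It exceeds $t_0$ only if more than $B-\lceil B(1-\alpha)\rceil$ draws, roughly $\alpha B$ of them, exceed $t_0$. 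The count of exceedances is Binomial$(B,q)$ with $q\le\alpha/2$.

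A multiplicative Chernoff bound for reaching twice the mean, $\Prob\{\mathrm{Bin}\ge 2\cdot B\alpha/2\}\le\exp(-(B\alpha/2)/3)$, or a Bernstein variant, should give a tail of order $e^{-\alpha B/8}$. This holds conditionally on the data on $E_2$, so integrating gives $\Prob(E_2\cap E_3^c)\le e^{-\alpha B/8}$.

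The obstacle I expect is the ceiling and off-by-one bookkeeping in the order statistic. The event $\hc>t_0$ requires at least $\lfloor\alpha B\rfloor+1$ exceedances, which is at least $\alpha B$ but not exactly $2\cdot\alpha B/2$ after rounding. I will first try the Chernoff form $\exp(-\mu\eta^2/(2+\eta))$ with $\mu=\alpha B/2$ and $\eta=1$, which gives $e^{-\alpha B/6}\le e^{-\alpha B/8}$ with slack to absorb the rounding. Failing that, I will use Hoeffding with the $B/8$ constant.
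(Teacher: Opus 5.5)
Your proof is correct and follows the paper's argument essentially step for step. It uses the same three events, the same Gaussian union bound giving exceedance probability at most $\alpha/2$ at $\rho\sqrt{2\log(4\ell/\alpha)}$, and the same reduction $\{\hc>t\}\subseteq\{K_B>\alpha B\}$; the paper bounds this tail by Markov's inequality on $2^{K_B}$, obtaining $e^{-\alpha B(\log 2-1/2)}$, while your multiplicative Chernoff bound gives $e^{-\alpha B/6}$, and the rounding you worry about is already absorbed because $\lfloor\alpha B\rfloor+1>\alpha B$ and the binomial tail is increasing in the success probability. Do not switch to the Hoeffding fallback, though: it gives only $e^{-\alpha^2B/2}$, which is weaker than $e^{-\alpha B/8}$ whenever $\alpha<1/4$.
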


The proof is given in Appendix~\ref{app:detect-proof}. The constants account for the choice of normalization: $\kappa_0$ bounds the normalizing factors relative to the true coordinate standard deviations, while $\rho_0$ bounds the standard deviations of the normalized bootstrap coordinates. Sparse responses can weaken calibration, so the logarithmic dependence on $\ell$ does not make arbitrarily small neighborhoods reliable. Proposition~\ref{prop:lower} gives a restricted lower bound for identifying one changed coordinate in a fixed disjoint family from the inference sample alone.

The next corollary compares sufficient detection thresholds for inference on the full family using all observations and on a subfamily selected using a pilot. Pilot selection can reduce the number of tested coordinates, but leaves fewer observations for inference.

\begin{corollary}[The price of a pilot]\label{cor:pilot} 
Let $\nu$ be a prespecified family of $\ell$ coordinates. Compare inference on all observations with a procedure
that uses a fraction $f\in(0,1)$ of each group to select a nonempty subfamily $\nu_h\subseteq\nu$ of $\ell_h$
coordinates. The latter uses the remaining observations for inference. The coordinate response functions are
unchanged. Suppose both procedures satisfy Proposition~\ref{prop:detect} at the same nominal level
$\alpha$, with a common product $\kappa_0\rho_0$. For each retained coordinate, the full-family
sufficient detection threshold is smaller exactly when  $(1-f)\log(4\ell/\alpha)<\log(4\ell_h/\alpha)$.
\end{corollary}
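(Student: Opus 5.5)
Write out the two sufficient thresholds from Proposition~\ref{prop:detect} and compare them directly. The constant $2\kappa_0\rho_0$ is common to both procedures by hypothesis, so the comparison reduces to three ingredients: the coordinate standard deviation $\omega_k$, the effective sample size $\Neff$, and the logarithmic factor in the family size.

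\emph{Full procedure.} It uses all $m$ and $m'$ observations on the family $\nu$ with $\ell$ coordinates. Its threshold for coordinate $k$ is
\[
T_k^{\mathrm{full}} = 2\kappa_0\rho_0\,\omega_k\sqrt{2\log(4\ell/\alpha)/\Neff}.
\]

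\emph{Pilot procedure.} It retains $(1-f)m$ and $(1-f)m'$ observations, so its effective sample size is
\[
\Neff' = \frac{(1-f)^2 mm'}{(1-f)N} = (1-f)\,\Neff .
\]
We verify that the coordinate standard deviation is unchanged. The response functions $\Phi_k$ are the same, so $\Sigma_P$ and $\Sigma_Q$ are the same. The allocation ratios $m'/N$ and $m/N$ are invariant under scaling both groups by $1-f$. Hence, by \eqref{eq:gamma}, $\Gamma_{kk}=\omega_k^2$ is identical for every retained coordinate $k\in\nu_h$. The pilot threshold is therefore
\[
T_k^{\mathrm{pilot}} = 2\kappa_0\rho_0\,\omega_k\sqrt{2\log(4\ell_h/\alpha)/((1-f)\Neff)}.
\]

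\emph{Comparison.} Both thresholds are positive, so $T_k^{\mathrm{full}}<T_k^{\mathrm{pilot}}$ holds if and only if the squares compare the same way. Cancel the common factor $8\kappa_0^2\rho_0^2\omega_k^2/\Neff$ and multiply by $1-f>0$. This gives exactly
\[
(1-f)\log(4\ell/\alpha)<\log(4\ell_h/\alpha),
\]
which is the stated criterion. Since $\ell_h\ge1$ and $\alpha<1$, the right-hand side is positive, so no sign issues arise.

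\emph{Main obstacle.} The algebra is trivial; the care lies in the bookkeeping.
\begin{itemize}
\item The effective sample sizes must be treated as exact, or the claim interpreted with $(1-f)m$ rounded. We take the fraction to apply exactly to each group, so that the allocation ratio is preserved.
\item The comparison uses the same $\omega_k$ on both sides. This requires the unconditional population covariance. The pilot's conditioning affects only which coordinates are retained, not their distribution, because the inference sample is independent of the pilot (Assumption~\ref{ass:fixed}).
\end{itemize}
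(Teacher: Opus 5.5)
Your proposal is correct and matches the paper's proof essentially step for step. Both arguments compute $\Neff^{\mathrm{pilot}}=(1-f)\Neff^{\mathrm{full}}$, note that $\omega_k$ is unchanged because the response functions and group proportions are preserved, and reduce the comparison of the two sufficient thresholds from Proposition~\ref{prop:detect} to the stated logarithmic inequality, with the same caveat that the proportional split is taken to give integer sample sizes.
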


A pilot multiplies $\Neff$ by $1-f$ and leaves $\omega_k$ unchanged for each retained coordinate. For the grid of
\S\ref{sec:exp-budget} ($\ell=497$ coordinates over five radii, $\ell_h=105$ at one, $\alpha=0.05$) the full family sufficient detection threshold is smaller once $f>0.15$, under the common $\kappa_0\rho_0$ assumption. The comparison is between sufficient thresholds from a union bound, not between exact powers: a selector can still gain by selecting a subfamily with a smaller $\kappa_0\rho_0$, and \S\ref{sec:exp-localize-budget} measures both effects. 

\paragraph{A contrast that survives averaging.}
Populations with the same mean counting measure have the same mean responses, even when their diagram laws differ. Compare a diagram containing one point at $a$ with probability 1 with a diagram with two copies of $a$ or no point, each with probability $1/2$. These populations have $\delta=0$. A lower-distortion bound for individual diagrams does not by itself imply a nonzero population contrast. The following model gives a sufficient condition for detecting a localized feature displacement.

\begin{assumption}[Localized feature displacement]\label{ass:displaced}
Write $X=X_0\uplus F_P$ and $Y=Y_0\uplus F_Q$ in $\D{n}$, where $\uplus$ denotes multiset union and
$\E_P N_{X_0}=\E_Q N_{Y_0}$. Each added feature is either absent or a singleton, with presence probability $q\in(0,1]$ in each group. Write $A'$ and $B'$ for its location in $X$ and in $Y$ when present. For fixed off-diagonal points $a,b$ and $\epsilon\ge0$,
$\dB(A',a)\le\epsilon$ under $P$ and $\dB(B',b)\le\epsilon$ under $Q$, almost surely.
\end{assumption}

\begin{theorem}[Local detectability]\label{thm:local}
Under Assumption~\ref{ass:displaced}, suppose a retained coordinate $k$ satisfies $\dB(p_k,a)+\epsilon<r_k \le\dB(p_k,b)-\epsilon.$ Then $\delta_k\ge qw_k\{r_k-\dB(p_k,a)-\epsilon\}>0.$ In particular, if $\dB(p_k,a)\le\eta$ for some
$\eta\ge0$ and $\eta+\epsilon<r_k \le\dB(a,b)-\eta-\epsilon,$ then $\delta_k\ge qw_k(r_k-\eta-\epsilon)>0$.
\end{theorem}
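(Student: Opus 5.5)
The plan is to decompose $\delta_k$ by linearity of $\Phi_k$ over multiset union. Since $N_{X_0\uplus F_P}=N_{X_0}+N_{F_P}$, we have $\Phi_k(X)=\Phi_k(X_0)+\Phi_k(F_P)$, and similarly for $Y$. All responses are bounded by $nw_kr_k$, so every expectation is finite and
\[
\delta_k=\int\varphi_k\,d(\E_P N_{X_0}-\E_Q N_{Y_0})+\E_P\Phi_k(F_P)-\E_Q\Phi_k(F_Q).
\]
The first term vanishes by the assumption $\E_P N_{X_0}=\E_Q N_{Y_0}$. Each added feature is absent or a singleton, so $\Phi_k(F_P)=\mathbf 1\{\text{present}\}\varphi_k(A')$. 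Thus the remainder is $q\,\E_P[\varphi_k(A')\mid\text{present}]-q\,\E_Q[\varphi_k(B')\mid\text{present}]$.

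Next I bound both terms using the triangle inequality for $\dB$ on one-point diagrams. For one-point diagrams $\dB$ is the bottleneck distance, hence a metric, and it satisfies $\dB(p,x)\le\dB(p,y)+\dB(y,x)$. For the $P$ term, $\dB(p_k,A')\le\dB(p_k,a)+\epsilon<r_k$ almost surely. Hence $\varphi_k(A')\ge w_k\{r_k-\dB(p_k,a)-\epsilon\}>0$. For the $Q$ term, $\dB(p_k,B')\ge\dB(p_k,b)-\epsilon\ge r_k$. Hence $\varphi_k(B')=0$ almost surely. Combining the two gives $\delta_k\ge qw_k\{r_k-\dB(p_k,a)-\epsilon\}$, which is positive by the strict left inequality and $q>0$. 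Alternatively, one can use the $w_k$-Lipschitz property of $\varphi_k$ stated after \eqref{eq:embedding}. That gives $\varphi_k(A')\ge\varphi_k(a)-w_k\epsilon$, which matches the same bound because $\varphi_k(a)=w_k(r_k-\dB(p_k,a))$ when positive.

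For the ``in particular'' claim, I reduce to the first statement. If $\dB(p_k,a)\le\eta$, then $\dB(p_k,a)+\epsilon\le\eta+\epsilon<r_k$. The reverse triangle inequality gives $\dB(p_k,b)\ge\dB(a,b)-\dB(p_k,a)\ge\dB(a,b)-\eta$. Then $r_k\le\dB(a,b)-\eta-\epsilon\le\dB(p_k,b)-\epsilon$, so the hypothesis of the first part holds. Its conclusion is $\delta_k\ge qw_k(r_k-\dB(p_k,a)-\epsilon)\ge qw_k(r_k-\eta-\epsilon)>0$.

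The only delicate step is confirming that the abbreviated $\dB(p,a)=\min\{\|p-a\|_\infty,\max(\pi(p),\pi(a))\}$ satisfies the triangle inequality. I would justify this by identifying it with the bottleneck distance between the diagrams $\{p\}$ and $\{a\}$. The alternative matching, which sends both points to the diagonal, costs $\max(\pi(p),\pi(a))$, so the formula is exactly the bottleneck distance, and that distance is a metric on diagrams. I also need presence to be defined as an event, so that conditioning on it makes sense. If instead one reads $A'$ as defined only on the presence event, the expectation is simply $\E_P[\mathbf 1\{\text{present}\}\varphi_k(A')]$, and the same pointwise bounds apply.
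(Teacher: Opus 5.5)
Your proposal is correct and follows the paper's proof step by step. You cancel the background by additivity and equal mean measures, use the triangle inequality for $\dB$ to get $\varphi_k(A')\ge w_k\{r_k-\dB(p_k,a)-\epsilon\}$ and the reverse triangle inequality to get $\varphi_k(B')=0$, and then reduce the ``in particular'' claim to the first part via $\dB(p_k,b)\ge\dB(a,b)-\eta$. Your check that $\dB$ on one-point diagrams is the bottleneck metric supplies a point the paper takes for granted.
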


Under the conditions of Proposition~\ref{prop:detect}, a coordinate is detected with the probability stated there whenever
its contrast lower bound exceeds its detection threshold.

Under the radius condition, the added feature gives a positive response under $P$ and zero under $Q$
whenever present. The background contributions agree in expectation, so the displaced feature
determines the contrast. The proof is in Appendix~\ref{app:proof-local}. Background and feature need not be
independent. The interval is sufficient, not necessary, and a finite radius grid must actually meet it; capping or screening can remove every admissible radius. For PLACE or PALACE landmarks \citep{PALACE}, this
guarantee applies when the retained family contains a center and radius satisfying these conditions.

\begin{proposition}[Spatial extent]\label{prop:extent}
Let $S=\operatorname{supp}|\bar\mu_P-\bar\mu_Q|$ off the diagonal and $\dB(x,S)=\inf_{s\in S}\dB(x,s)$. If
$\delta_k\ne0$ then $\dB(p_k,S)<r_k$. 
On the coverage event, every rejected center is within its radius of $S$, and the union $\bigcup_{k\in\mathcal R}\mathcal B^\circ_k$ over any reported
index set $\mathcal{R}\subseteq\Rhat$ with $\max_{k\in\mathcal R}r_k\le r_{\max}$ lies in
$\{x:\dB(x,S)<2r_{\max}\}$.
\end{proposition}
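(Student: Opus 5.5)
The plan is to handle the deterministic statement about $\delta_k$ first, and then transfer it to rejected coordinates through the coverage event.

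\emph{Step 1: support argument.} By \eqref{eq:estimand}, $\delta_k=\int\varphi_k\,d(\bar\mu_P-\bar\mu_Q)$. The function $\varphi_k$ vanishes outside $\mathcal B^\circ_k=\{a:\dB(p_k,a)<r_k\}$, so $\delta_k=\int_{\mathcal B^\circ_k}\varphi_k\,d(\bar\mu_P-\bar\mu_Q)$. Suppose $\dB(p_k,S)\ge r_k$. Then every $s\in S$ has $\dB(p_k,s)\ge r_k$, so $\mathcal B^\circ_k\cap S=\emptyset$ and the open set $\mathcal B^\circ_k$ lies in the complement of $S$. By definition of support, that complement has $|\bar\mu_P-\bar\mu_Q|$-measure zero. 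Hence the integral vanishes, which is the contrapositive of $\delta_k\ne0\Rightarrow\dB(p_k,S)<r_k$.

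Two technical points need checking here. First, $\varphi_k$ is bounded, at most $w_kr_k$. Second, both mean measures are finite, with total mass at most $n$, so the signed difference and its total variation are well-defined Radon measures on the off-diagonal domain. Openness of $\mathcal B^\circ_k$ follows from continuity of $a\mapsto\dB(p_k,a)$. Here I use that $r_k\le\pi(p_k)$ forces the $\ell_\infty$ branch of the minimum, so $\mathcal B^\circ_k$ is the open square intersected with the frame, as noted in the setup. This also justifies treating $\dB(p_k,s)$ as $\|p_k-s\|_\infty$ whenever it is below $r_k$.

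\emph{Step 2: rejected centers on the coverage event.} On the event $\{\delta_k\in\mathcal C_k\ \forall k\}$, any $k\in\Rhat$ has an interval $\mathcal C_k$ excluding zero, so $\delta_k\ne0$. Step 1 then gives $\dB(p_k,S)<r_k$. This holds for every $k$ in any reported $\mathcal R\subseteq\Rhat$, because selection does not alter the intervals, as in Corollary~\ref{cor:fwer}.

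\emph{Step 3: the union bound.} For $x\in\mathcal B^\circ_k$ with $k\in\mathcal R$, the goal is $\dB(x,S)\le\dB(x,p_k)+\dB(p_k,S)<r_k+r_k\le2r_{\max}$. This needs a triangle inequality for the one-point $\dB$, which I expect to be the main obstacle, since $\min\{\|\cdot\|_\infty,\max(\pi,\pi)\}$ is not obviously a metric.

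I would verify it directly. The quantity $\dB(p,a)$ is the bottleneck distance between the one-point diagrams $\{p\}$ and $\{a\}$, since the matching either pairs the two points or sends both to the diagonal. The bottleneck distance satisfies the triangle inequality on diagrams, so restricting to singletons gives $\dB(x,s)\le\dB(x,p_k)+\dB(p_k,s)$.

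Alternatively, one can stay within the $\ell_\infty$ regime. Since $r_k\le\pi(p_k)$, both $\dB(x,p_k)$ and $\dB(p_k,s)$ are realized as $\ell_\infty$ distances when they are below $r_k$. The $\ell_\infty$ triangle inequality then bounds $\|x-s\|_\infty$, and $\dB\le\|\cdot\|_\infty$ gives the bound on $\dB(x,s)$. Taking the infimum over $s\in S$ near-attaining $\dB(p_k,S)$ completes the argument and yields the containment in $\{x:\dB(x,S)<2r_{\max}\}$.
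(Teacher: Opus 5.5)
Your proposal is correct and follows the paper's proof essentially step for step. The paper also notes that $\varphi_k$ vanishes off $\mathcal B^\circ_k$, so $\mathcal B^\circ_k\cap S=\emptyset$ forces $\delta_k=0$; it then uses $0\notin\mathcal C_k$ on the coverage event and the triangle inequality for $\dB$. The only differences are cosmetic: the paper fixes a witness $s_k\in\mathcal B^\circ_k\cap S$ instead of passing through the infimum, and it takes the triangle inequality for the one-point bottleneck distance as given rather than verifying it as you do.
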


A rejected neighborhood must intersect $S$, but need not be contained in $S$. The rejected
neighborhoods need not cover all of $S$.

\paragraph{What lifetime statistics cannot see.}
Call a two-sample statistic \emph{lifetime-based} if it is a function of
the lifetime multisets $\{d-b:(b,d)\in X_i\}$ and $\{d-b:(b,d)\in Y_j\}$;
the log-rank statistic of \citet{STRAND} in either calibration is one.

\begin{proposition}[Blindness along the diagonal]\label{prop:invariance}
If $Q$ is the law of $\tau(X)$, $X\sim P$, where $\tau$ moves each point by
a possibly random vector $(t,t)$ within the frame, every lifetime-based
test has the same rejection probability under $(P,Q)$ as under $(P,P)$. If
some landmark's open square meets the support of the mean measure of the
moved points under $P$ and almost surely none of their images, its
$\delta_k>0$.
\end{proposition}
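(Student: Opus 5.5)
The proof has two parts, matching the two sentences of the statement.

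\emph{Invariance of lifetime-based tests.} The lifetime multiset of a diagram is unchanged by $\tau$. A point $(b,d)$ moves to $(b+t,d+t)$, and $(d+t)-(b+t)=d-b$, whatever random $t$ is drawn and however $t$ depends on the point or the diagram. The frame constraint only guarantees that $\tau(X)$ remains in $\D{n}$. Let $\Lambda(D)$ denote the lifetime multiset. Then $\Lambda(\tau(X))=\Lambda(X)$ pointwise, so the law of $\Lambda(Y)$ under $Q$ equals the law of $\Lambda(X)$ under $P$. Take independent samples $X_1,\dots,X_m\sim P$ and $Y_1,\dots,Y_{m'}\sim Q$. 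Then
\[
(\Lambda(X_1),\dots,\Lambda(X_m),\Lambda(Y_1),\dots,\Lambda(Y_{m'}))
\]
has the same joint law as under $(P,P)$. Any lifetime-based test is a (possibly randomized) function of this vector, including calibrations such as permutation that use only the lifetimes. Such randomization is independent of the data and has the same law in both settings, so the rejection probabilities agree. In particular, the power equals the size under $(P,P)$.

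\emph{Positive contrast.} Decompose $X=X_0\uplus M$, where $M$ is the multiset of moved points and $X_0$ the fixed ones, so that $\tau(X)=X_0\uplus\tau(M)$. I read ``moved points'' as those with $t\neq0$; if $\tau$ moves every point, $X_0$ is empty. By \eqref{eq:estimand} and linearity of $N$,
\[
\delta_k=\E\!\int\varphi_k\,dN_{M}-\E\!\int\varphi_k\,dN_{\tau(M)},
\]
because the $X_0$ contributions cancel in expectation. The second term vanishes, since almost surely no image point lies in the open square $\mathcal B^\circ_k$, where $\varphi_k>0$, and $\varphi_k$ is zero outside it. For the first term, $\varphi_k$ is continuous and strictly positive on the open set $\mathcal B^\circ_k$. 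By hypothesis $\mathcal B^\circ_k$ meets $\operatorname{supp}\bar\mu_M$, so by the definition of support $\bar\mu_M(\mathcal B^\circ_k)>0$. Hence
\[
\int\varphi_k\,d\bar\mu_M>0.
\]
The cleanest justification is to write $\mathcal B^\circ_k=\bigcup_j\{\varphi_k>1/j\}$ and apply continuity of measure, which gives some $j$ with $\bar\mu_M\{\varphi_k>1/j\}>0$.

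The main obstacle is bookkeeping rather than analysis. The ``moved points'' must be defined measurably as a sub-multiset, for instance by letting $\tau$ act on a labeled enumeration of $X$, and the exchange of expectation with $N_{X_0}$ must be legitimate. The latter holds because $\Phi_k$ is bounded by $nw_kr_k$, so all expectations are finite. Overlapping points, meaning a fixed point at the same location as an image, cause no problem because the decomposition is taken at the level of labeled points before forming the multiset union.
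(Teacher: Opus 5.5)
Your proposal is correct and follows the paper's route. Diagonal shifts preserve every lifetime, so the joint law of the lifetime data is the same under $(P,Q)$ and $(P,P)$; under the coupling $Y=\tau(X)$, the unmoved points cancel, the images contribute zero almost surely, and the moved points give a strictly positive expected response because the open square has positive mean measure. Your handling of randomized tests and of the labeled split into moved and fixed points makes explicit what the paper states more tersely.
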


\section{Experiments}
\label{sec:experiments}

\subsection{Design at a fixed total budget}
\label{sec:exp-budget}

We assess calibration and localization at nominal level $\alpha=0.05$, holding the total number of diagrams fixed. Each replication draws $T_P$ diagrams from $P$ and $T_Q$ from $Q$; balanced designs have $T_P=T_Q=T$. All procedures use the same generated samples, with any pilot drawn from this budget.

In the displacement experiments, each diagram contains $20$ background points uniform above the diagonal in $[0,1]^2$ and a feature present with probability $q=0.5$. When present, the feature lies at its template plus $\epsilon u$, where $\epsilon=0.02$ and $u$ is uniform on $\{-1,0,1\}^2$. The templates are $a=(0.30,0.50)$ under $P$ and either $a+(0,\Delta)$ or $a+(\Delta,\Delta)$ under $Q$. These shifts change or preserve lifetime, respectively. The common background cancels in expectation, and discrete jitter makes each contrast $\delta_k$ an exact finite sum. 

Setting $\Delta=0$ gives the strict null $P=Q$. Two further designs preserve the mean counting measure while changing response covariances. In the \emph{clustered} design, ten background points have identical marginal distributions over four boxes. They choose boxes independently under $P$ and share one box under $Q$; both groups also contain ten uniform background points. In the \emph{counts} design, $P$ has $20$ background points and $Q$ has $0$ or $40$ with equal probability. Both designs use the same feature process in the two groups.

We compare five landmark procedures. F0 uses the $105$-center grid $\{(i/14,j/14):0\le i<j\le14\}$ with $r_k=\min\{0.05,\pi(p_k)\}$. F1 uses the same centers with $r_{k,h}=\min\{0.05h,\pi(p_k)\}$ for $h\in\{\frac12,\frac1{\sqrt2},1,\sqrt2,2\}$. F2 selects $h$ on a pilot using \eqref{eq:score}. A1 fits $100$ class-aware farthest-point centers on the pilot fitting sample and uses all five radius multipliers; A2 also selects $h$ on the pilot. Identical coordinates created by the radius cap are retained once. F0 and F1 use all diagrams for inference, with $\homega_k=nw_kr_k$. F2, A1 and A2 reserve approximately one third of each group for the pilot, divided into fitting and validation parts.

Comparators are the Westfall--Young permutation maximum on F1 (F1-perm), the multiplier band on a $20\times20$ persistence image (I1), and the procedure of \citet{MoonLazar2023} (ML). Our ML implementation uses a $40\times40$ image, filtering at the $80$th percentile of pooled standard deviation, pooled $t$-tests, and Benjamini--Hochberg FDR adjustment.

For landmark methods, a change is \emph{located} when at least one rejected coordinate has a nonzero population contrast, radius at most $0.10$, and center within $0.05$ of the changed support. Appendix~\ref{app:budget} specifies the localization distance, pixel criterion, and implementation details, including pilot normalization and screening.

\subsection{Calibration}
\label{sec:exp-calib}

\begin{table}[t]
\centering\small
\caption{FWER at a fixed total budget, $1000$ replications per cell.
$T_P/T_Q$: diagrams per class; in the weak nulls $Q$ is the more
variable group. Bold: the $95\%$ Wilson interval lies above $0.05$.}
\label{tab:calib}
\IfFileExists{tables/neigh_calib.tex}{\resizebox{\columnwidth}{!}{\input{tables/neigh_calib.tex}}}{\todo{tab\_neigh\_calib.tex}}
\end{table}

With their default normalization, F0--A2 and I1 have family-wise error rate (FWER) $0.022$--$0.063$ across the ten settings in Table~\ref{tab:calib}. Simultaneous coverage is $0.94$--$0.98$ (Appendix~\ref{app:budget}). Under the weak nulls, the permutation maximum and pooled-$t$ ML implementation overreject when the smaller
group is more variable. Their FWER reaches $0.32$ and $0.26$ in the clustered design, and $0.15$ and $0.11$ in the counts design, respectively. This is the allocation in which pooling can understate
sampling variability. Both comparators are approximately calibrated under $P=Q$ and conservative when the more variable group is larger.

Inference-based normalization (Corollary~\ref{cor:estimated-normalization}) gives different
finite-sample results. For F0 and F1, groupwise SD normalization gives FWER $0.029$--$0.055$ with balanced groups, but $0.29$--$0.48$ at $40/120$ and $120/40$, including under $P=Q$. Pooled SD normalization is conservative, with FWER $0.002$--$0.036$ across all settings (Appendix~\ref{app:budget}).

Normalization also affects power. In a separate balanced design with highly unequal coordinate variances, both SD-based choices locate the change in every run; fixed normalization locates none. In the main displacement design, fixed normalization performs better: at $40$ diagrams per class and $\Delta=0.05$, F0 locates the change in $77\%$ of runs with fixed normalization and $37\%$ with pooled SD normalization.

\subsection{Localization and neighborhood choice}
\label{sec:exp-localize-budget}

\begin{figure*}[t]
  \centering
  \IfFileExists{figures/budget.pdf}{\includegraphics[width=\textwidth]{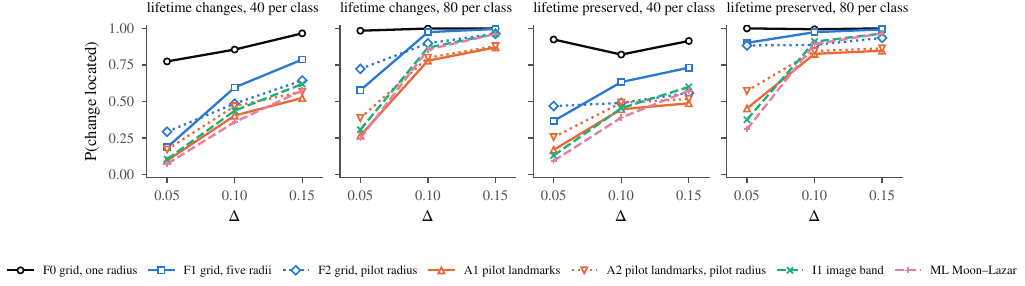}}{\todo{\texttt{\detokenize{figures/budget.pdf}}}}
  \caption{Probability of locating the change at a fixed total budget,
  $1000$ replications per point. The pilot methods (F2, A1, A2) infer on
  two thirds of the diagrams; F0, F1 and the image procedures use all of
  them. $120$ per class and the FWER under alternatives are in
  Appendix~\ref{app:budget}.}
  \label{fig:budget}
\end{figure*}

\begin{figure*}[t]
  \centering
  \IfFileExists{figures/radius.pdf}{\includegraphics[width=\textwidth]{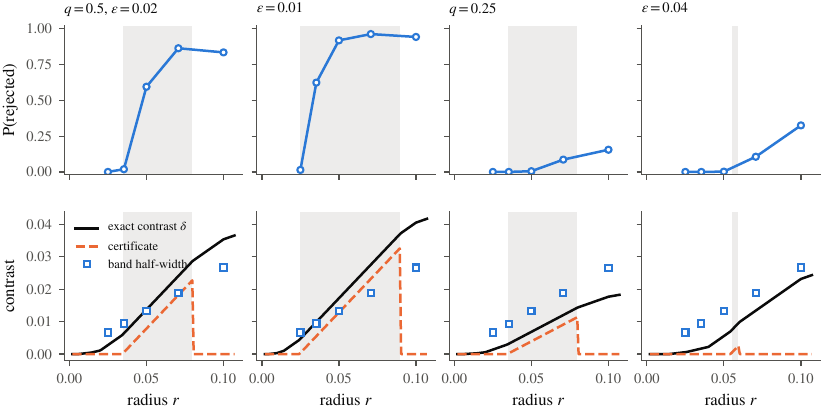}}{}
  \caption{Radius and Theorem~\ref{thm:local}, $80$ diagrams per class,
  $\Delta=0.10$. Top: probability that the coordinate at the grid center
  nearest the template is rejected by F1, at each of its radii. Bottom:
  exact contrast and its certificate on a dense radius sweep, with the
  band's mean half-width. Shaded: the certified interval
  $s+\epsilon<r\le t-\epsilon$.}
  \label{fig:radius}
\end{figure*}

In the main displacement experiments, F0 has the highest localization rates among the compared procedures (Figure~\ref{fig:budget}). At $40$ diagrams per class and $\Delta=0.05$, it locates lifetime-changing and lifetime-preserving shifts in $77\%$ and $92\%$ of runs, respectively.
The corresponding ranges for the pilot procedures are $10$--$29\%$ and $17$--$47\%$. Testing five radii jointly reduces localization for small shifts. For lifetime-changing shifts with $\Delta=0.05$, F1 reaches $19\%$ at $40$ per class and $58\%$ at $80$ per class,
compared with $77\%$ and $98\%$ for F0. The difference is small once $\Delta\ge0.10$ with at least $80$ per class. Under alternatives, the landmark bands have FWER at or below $0.05$ over exactly null
coordinates.

Sensitivity analyses show that pilot size and image settings affect localization (Appendix~\ref{app:budget}). At $80$ per class and $\Delta=0.10$, A1 locates the change in $87\%$, $75\%$ and $56\%$ of runs with pilot fractions of one quarter, one third and one half.
Increasing image resolution from $10\times10$ to $40\times40$ raises localization from $12\%$ to $99.8\%$, matching F0 at the finer setting. These image variants use a one-pixel Gaussian bandwidth, so the comparison changes both resolution and smoothing. The default pilot scale floor is active at every coordinate. Results for other floors and center counts are reported in the appendix.

Theorem~\ref{thm:local} guarantees a nonzero contrast at suitable radii, but this need not give high power (Figure~\ref{fig:radius}). At the grid center nearest the template, radii $0.035$, $0.05$ and $0.071$ all lie in the sufficient interval, approximately $(0.034,0.080]$. Their rejection rates are $2\%$, $60\%$ and $86\%$. Increasing jitter to $\epsilon=0.04$ narrows the interval and reduces detection to $0$--$33\%$. Designs with two features at different scales, a prevalence change, or continuous jitter retain coverage of $0.94$--$0.98$. However, the weaker, wider feature in the two-feature design is detected in only $3$--$17\%$ of runs.

Appendix~\ref{app:cv-screening} examines cross-validated region and
radius selection. Cross-fitted selection gained power for several
nearby changes of the same sign, but lost power for an isolated change.
The gain persisted against a fixed-radius family. The SD-normalized
multiplier bands undercovered, so the positive comparison concerns
testing under exchangeability.

\paragraph{What lifetimes miss.} In the additional comparisons of Appendix~\ref{app:results}, both
calibrations of the log-rank test of \citet{STRAND} remain near their null rejection rates under diagonal shifts, as Proposition~\ref{prop:invariance} implies. The band and an MMD test on
the same landmark embedding have similar global power in these experiments. The band also provides simultaneous confidence intervals for the local contrasts.

\subsection{An exploratory molecular application}
\label{sec:exp-mol}

We analyze MUTAG, DHFR, COX2 and PTC, four molecular graph benchmarks with binary class labels \citep{Morris2020}. After removing duplicates by labeled graph isomorphism and dropping duplicate sets with conflicting labels, $175$, $718$, $465$ and $325$ graphs remain. We compute extended persistence diagrams in degrees zero and one from the heat-kernel signature of the normalized Laplacian \citep{PALACE}. Its values lie in $[0,1]$, giving a fixed frame. F0 and F1 use all
graphs; F2, A1 and A2 reserve a stratified third for the pilot. The focus on heat-kernel time $t=10$ and degree one follows the earlier analysis in \citet{PALACE}. These choices were not prospective, so the application is exploratory.

MUTAG and DHFR yield nonempty maps for every method (global $p\le0.024$; Appendix~\ref{app:mol}). For MUTAG, A1 gives simultaneous intervals excluding zero for $58$ of $342$ coordinates at $26$ centers (Figure~\ref{fig:map}). All indicate greater expected weighted feature mass in the mutagenic class. The map concentrates near birth values $0.06$--$0.09$ and death values $0.09$--$0.12$. Across $20$ random pilot/inference splits, the A1 map is never empty (DHFR map in Figure~\ref{fig:dhfr-rev}), with median signed overlap $0.62$ between splits. COX2 maps are unstable, and PTC maps are empty in $16$--$20$ of the $20$ splits, depending on the method.

With $20$ molecules per class, A1 gives nonempty maps in $94\%$ of subsamples, compared with $28\%$ for the fixed grid. The diagrams occupy only $[0,0.3]^2$ within the $[0,1]$ frame, making the fixed grid coarse in the occupied region. Here, adapting the neighborhoods helps despite the observations reserved for the pilot.

For an exploratory structural interpretation, we match rings from a minimum cycle basis to diagram points. The selected MUTAG region contains $96\%$ of rings in systems of at least three fused rings and $49\%$ of other rings. A different cycle basis gives $95\%$ and $49\%$. Among compounds with $15$--$22$ atoms, mutagenic molecules place $2.3$--$3.2$ rings in this region, compared with $0.8$--$1.3$ for nonmutagenic molecules. These summaries are consistent with the fused-ring indicator of \citet{Debnath1991}. Their interpretation is limited by selection of the region on these data and by the dependence of ring matching on the cycle basis. Degree-one points can also represent branchings of superlevel sets, and rings with symmetric heat-kernel values can produce no point. We do not infer a chemical mechanism.

\section{Discussion}
\label{sec:discussion}

Simultaneous intervals quantify the size and direction of local mean
differences at a stated resolution. In our main fixed-budget
simulations, a prespecified single-radius grid using all observations
localized changes most often. This is a useful starting point when a
meaningful scale can be chosen in advance. When the scale is uncertain,
several radii can be calibrated jointly. Pilot fitting can help when
the fixed grid poorly resolves the occupied region, as in the MUTAG
subsampling results.

Cross-validated region selection can help when several nearby changes
have the same sign and can be identified reliably from the selection
sample. Our additional experiment illustrates this advantage, but also
shows a substantial loss of power for an isolated change
(Appendix~\ref{app:cv-screening}). The multiplier bands had coverage
below nominal there, so the positive comparison concerns power rather
than validated calibration.

The method targets local mean contrasts and can miss differences that
cancel under these responses. Reported neighborhoods need not cover
the whole support of the mean-measure difference, and unselected
regions have not been shown to lack changes. Improving finite-sample
calibration and recalibrating after selection on the inference sample
remain directions for further work. Recovering the full changed
support and incorporating covariates are also natural extensions.

\section*{AI Use Statement}
The authors formulated the research questions and planned the methodological and empirical work. Claude (Anthropic) and ChatGPT (OpenAI) were used to write and run the experiment and analysis code, generate the tables and figures, and draft and edit text. The authors directed the work,
checked the results and arguments, and take responsibility for the paper.

\bibliographystyle{plainnat}
\bibliography{main}

\ifarxiv\else
\section*{Checklist}

\begin{enumerate}

  \item For all models and algorithms presented, check if you include:
  \begin{enumerate}
    \item A clear description of the mathematical setting, assumptions, algorithm, and/or model. [Yes] Sections~\ref{sec:setup}--\ref{sec:method} define the estimands, the responses, the band and the neighborhood families; Appendix~\ref{app:implementation} gives the implementation.
    \item An analysis of the properties and complexity (time, space, sample size) of any algorithm. [Yes] Section~\ref{sec:theory} and Appendix~\ref{app:proofs} give the coverage, localization and sample-size results; the cost of the procedure is one pass over the diagrams per landmark coordinate plus $B$ multiplier draws.
    \item (Optional) Anonymized source code, with specification of all dependencies, including external libraries. [Yes] The supplementary material contains the anonymized code, with a requirements file and instructions.
  \end{enumerate}

  \item For any theoretical claim, check if you include:
  \begin{enumerate}
    \item Statements of the full set of assumptions of all theoretical results. [Yes] Assumptions are stated with each result in Section~\ref{sec:theory} and Appendix~\ref{app:proofs}.
    \item Complete proofs of all theoretical results. [Yes] Appendix~\ref{app:proofs}.
    \item Clear explanations of any assumptions. [Yes] Each assumption is discussed where it is introduced in Sections~\ref{sec:method}--\ref{sec:theory}.
  \end{enumerate}

  \item For all figures and tables that present empirical results, check if you include:
  \begin{enumerate}
    \item The code, data, and instructions needed to reproduce the main experimental results (either in the supplemental material or as a URL). [Yes] The supplementary material contains the code, the saved results behind every table and figure, the generators that turn them into the tables and figures, and a README with instructions; the molecular benchmarks download from the public TUDataset repository.
    \item All the training details (e.g., data splits, hyperparameters, how they were chosen). [Yes] Splits, numbers of replications, resampling draws, grids, radii, pilot fractions, screens and scale floors are stated in Section~\ref{sec:experiments} and Appendices~\ref{app:implementation}--\ref{app:mol}.
    \item A clear definition of the specific measure or statistics and error bars (e.g., with respect to the random seed after running experiments multiple times). [Yes] Family-wise error, coverage, power and localization metrics are defined in Section~\ref{sec:experiments}; every rate is a Monte Carlo frequency over the stated number of seeded replications, and Monte Carlo intervals are reported where they matter (Appendix~\ref{app:budget}).
    \item A description of the computing infrastructure used. (e.g., type of GPUs, internal cluster, or cloud provider). [Yes] All computation is CPU-only: the simulation studies ran on an institutional Linux cluster using one 16-core node per study, under 12 hours of node time in total, and the molecular analyses on a laptop in minutes. Runs are seeded per cell and reproduce on any machine.
  \end{enumerate}

  \item If you are using existing assets (e.g., code, data, models) or curating/releasing new assets, check if you include:
  \begin{enumerate}
    \item Citations of the creator If your work uses existing assets. [Yes] The molecular benchmarks are cited \citep{Morris2020, Debnath1991}, and the persistence computations use GUDHI \citep{maria2014gudhi}.
    \item The license information of the assets, if applicable. [Yes] The TUDataset benchmarks are publicly distributed for research use; GUDHI is released under the MIT license. The supplementary manifest records the files used, with checksums.
    \item New assets either in the supplemental material or as a URL, if applicable. [Yes] The code and the saved results are in the supplementary material.
    \item Information about consent from data providers/curators. [Not Applicable] Public molecular graph benchmarks.
    \item Discussion of sensible content if applicable, e.g., personally identifiable information or offensive content. [Not Applicable] The data are molecular graphs with no personal information.
  \end{enumerate}

  \item If you used crowdsourcing or conducted research with human subjects, check if you include:
  \begin{enumerate}
    \item The full text of instructions given to participants and screenshots. [Not Applicable]
    \item Descriptions of potential participant risks, with links to Institutional Review Board (IRB) approvals if applicable. [Not Applicable]
    \item The estimated hourly wage paid to participants and the total amount spent on participant compensation. [Not Applicable]
  \end{enumerate}

\end{enumerate}
\fi

\appendix
\numberwithin{theorem}{section}
\numberwithin{proposition}{section}
\numberwithin{corollary}{section}
\numberwithin{lemma}{section}
\numberwithin{remark}{section}

\onecolumn

\input{Appendix_Part_I_Technical}
\input{Appendix_Part_II_Empirical}

\end{document}

%% file: tables/neigh_calib.tex
\begin{tabular}{l r c c c c c c c c}
\toprule
design & $T_P$/$T_Q$ & F0 & F1 & F2 & A1 & A2 & F1-perm & I1 & ML \\
\midrule
strict null & 40/40 & 0.032 & 0.033 & 0.036 & 0.055 & 0.047 & 0.039 & 0.027 & 0.019 \\
strict null & 80/80 & 0.035 & 0.048 & 0.046 & 0.046 & 0.044 & 0.053 & 0.048 & 0.026 \\
strict null & 120/120 & 0.040 & 0.043 & 0.051 & 0.057 & 0.049 & 0.042 & 0.041 & 0.020 \\
strict null & 40/120 & 0.033 & 0.054 & 0.036 & 0.038 & 0.034 & 0.059 & 0.042 & 0.032 \\
\addlinespace
clustered & 80/80 & 0.063 & 0.054 & 0.059 & 0.052 & 0.052 & 0.051 & 0.057 & 0.038 \\
clustered & 120/40 & 0.047 & 0.041 & 0.062 & 0.052 & 0.057 & \textbf{0.324} & 0.038 & \textbf{0.262} \\
clustered & 40/120 & 0.045 & 0.050 & 0.063 & 0.054 & 0.054 & 0.001 & 0.053 & 0.000 \\
\addlinespace
counts & 80/80 & 0.046 & 0.045 & 0.034 & 0.042 & 0.032 & 0.050 & 0.046 & 0.017 \\
counts & 120/40 & 0.032 & 0.034 & 0.022 & 0.038 & 0.030 & \textbf{0.149} & 0.040 & \textbf{0.105} \\
counts & 40/120 & 0.040 & 0.043 & 0.044 & 0.045 & 0.035 & 0.012 & 0.049 & 0.002 \\
\bottomrule
\end{tabular}

%% file: Appendix_Part_I_Technical.tex
\section*{Part I: Technical results}

\section{Proofs of the main results}
\label{app:proofs}

The proofs follow the order of the main results. The landmark
family is prespecified or learned from an independent pilot.
Whenever a pilot is used, statements concerning the inference
sample are conditional on it.

\subsection{Simultaneous coverage: proof of Theorem~\ref{thm:band}}
\label{app:band-proof}

\begin{proof}
Condition on the independent pilot, if one is used. The landmark
family and the positive normalization factors $\homega_k$ are then
fixed. All probabilities below use this conditional law.
Write $\Prob^*$ for probability conditional also on the inference
observations. Put
\[
  R=\frac{B_\nu}{\sigma_{\min}},\qquad
  H_N=\log(2\ell N),\qquad
  r_N=\left(\frac{H_N^7}{N}\right)^{1/6}.
\]
Constants denoted by $C$ may change between displays and depend
only on $R$ and $\lambda_0$. For $N<4$, the claimed bound follows
by enlarging $C$, since a difference of probabilities is at most
one. Hence assume $N\ge4$.

Define
\[
  \mu_P=\E U_1,\qquad \mu_Q=\E V_1,\qquad
  S=\sqrt{\Neff}(\hdelta-\delta),\qquad
  T=\max_{k\le\ell}\frac{|S_k|}{\homega_k}.
\]
For one multiplier replicate, let
\[
  S^*=\sqrt{\Neff}\left\{
    \frac1m\sum_{i=1}^m e_i(U_i-\bar U)
    -\frac1{m'}\sum_{j=1}^{m'}e'_j(V_j-\bar V)
  \right\},\qquad
  T^*=\max_{k\le\ell}\frac{|S_k^*|}{\homega_k}.
\]
Thus $T^*=\max_k|Z_k^*|$, and the coverage event is
$\{T\le\hc\}$.

\medskip\noindent
\emph{1. Gaussian approximation.}
Write $S=\sum_{i=1}^N\xi_i$, where
\[
  \xi_i=\frac{\sqrt{\Neff}}m(U_i-\mu_P),\quad i\le m,
  \qquad
  \xi_{m+j}=-\frac{\sqrt{\Neff}}{m'}(V_j-\mu_Q),
  \quad j\le m'.
\]
These vectors are independent and centered, and
\[
  \operatorname{Cov}(S)
  =\frac{m'}N\Sigma_P+\frac mN\Sigma_Q=\Gamma.
\]
Set $W_i=\sqrt N\,\xi_i/\sigma_{\min}$. Then
\[
  \frac{S}{\sigma_{\min}}=\frac1{\sqrt N}\sum_{i=1}^N W_i,
  \qquad
  \max_i\|W_i\|_\infty\le K:=\frac{2R}{\sqrt{\lambda_0}},
  \qquad
  \frac1N\sum_{i=1}^N\E W_{ik}^2
  =\frac{\omega_k^2}{\sigma_{\min}^2}\ge1.
\]
The boundedness condition supplies the moment and exponential-moment
conditions (M.2) and (E.1) of \citet{Chernozhukov2017}, with an
envelope depending only on $K$. Proposition~2.1 of that paper
therefore gives, for $G\sim\mathcal N(0,\Gamma)$,
\begin{equation}\label{eq:band-proof-clt}
  \sup_{A\in\mathcal A_\ell}
  |\Prob(S\in A)-\Prob(G\in A)|\le C r_N,
\end{equation}
where $\mathcal A_\ell$ is the class of axis-aligned
hyperrectangles in $\Real^\ell$. If $\ell<3$, duplicate coordinates
when applying the cited result. This changes only the constant.

Let $F$ be the distribution function of
$\max_k|G_k|/\homega_k$. Since $\homega_k>0$ is fixed,
the event $\{T\le t\}$ is a hyperrectangle for $t\ge0$.
For $t<0$, both relevant probabilities are zero. Hence
\begin{equation}\label{eq:band-proof-cdf}
  \sup_{t\in\Real}|\Prob(T\le t)-F(t)|\le C r_N.
\end{equation}
No uniform upper or lower bound on the normalizing factors is needed
for this step beyond their positivity and finiteness.

\medskip\noindent
\emph{2. Within-group covariance estimation.}
Define the empirical covariances with divisors $m$ and $m'$:
\[
  \widehat\Sigma_P=
  \frac1m\sum_{i=1}^m(U_i-\bar U)(U_i-\bar U)^\top,
  \qquad
  \widehat\Sigma_Q=
  \frac1{m'}\sum_{j=1}^{m'}(V_j-\bar V)(V_j-\bar V)^\top.
\]
The independence and normality of the multipliers imply
\[
  S^*\mid\text{inference observations}
  \sim\mathcal N(0,\widehat\Gamma),\qquad
  \widehat\Gamma=
  \frac{m'}N\widehat\Sigma_P+\frac mN\widehat\Sigma_Q.
\]
For a matrix $M$, write $\|M\|_{\max}=\max_{j,k}|M_{jk}|$.
To control the effect of estimating the group means, put
\[
  Q_P=\frac1m\sum_{i=1}^m(U_i-\mu_P)(U_i-\mu_P)^\top.
\]
Define $Q_Q$ analogously. Then
\[
  \widehat\Sigma_P-\Sigma_P
  =(Q_P-\Sigma_P)
  -(\bar U-\mu_P)(\bar U-\mu_P)^\top,
\]
with the same identity for $Q$.

Let $x_N=\log(8\ell^2N)$. Hoeffding's inequality gives,
simultaneously for both groups, with probability at least $1-1/N$,
\[
  \|\bar U-\mu_P\|_\infty
  \le B_\nu\sqrt{\frac{2x_N}{m}},\qquad
  \|Q_P-\Sigma_P\|_{\max}
  \le4B_\nu^2\sqrt{\frac{2x_N}{m}},
\]
and the corresponding bounds with $m'$ and $Q$ in place of $m$
and $P$. Indeed, the mean coordinates have range at most
$2B_\nu$, and each centered product lies in
$[-4B_\nu^2,4B_\nu^2]$. A union bound over the two groups'
$\ell$ means and $\ell^2$ products has failure probability at most
$4(\ell+\ell^2)e^{-x_N}\le1/N$.

Since $m,m'\ge\lambda_0N$, it follows that
\[
  \Delta_N:=
  \frac{\|\widehat\Gamma-\Gamma\|_{\max}}{\sigma_{\min}^2}
  \le d_N:=R^2\left\{
    4\sqrt{\frac{2x_N}{\lambda_0N}}
    +\frac{2x_N}{\lambda_0N}
  \right\}
\]
with probability at least $1-1/N$. Write
$\mathcal E_N=\{\Delta_N\le d_N\}$. Since $x_N\le3H_N$,
\begin{equation}\label{eq:band-proof-cov}
  \Prob(\mathcal E_N^c)\le N^{-1},\qquad
  d_N\le C\left\{
    \sqrt{\frac{H_N}{N}}+\frac{H_N}{N}
  \right\}.
\end{equation}

It remains to consider $r_N$ smaller than a sufficiently small
constant depending only on $R$ and $\lambda_0$. For larger
$r_N$, the theorem again follows by enlarging $C$.
Since $H_N\ge1$, we have $H_N/N\le r_N^6$.
Thus, in the remaining case, we may assume
$H_N/N\le1$ and $d_N\le1/2$. On $\mathcal E_N$,
\begin{equation}\label{eq:band-proof-positive}
  \widehat\Gamma_{kk}
  \ge\Gamma_{kk}-\sigma_{\min}^2d_N
  \ge\frac{\sigma_{\min}^2}{2},\qquad k\le\ell.
\end{equation}

\medskip\noindent
\emph{3. Conditional Gaussian comparison.}
Let $F^*(t)=\Prob^*(T^*\le t)$. Apply the Gaussian covariance
comparison in the proof of Theorem~4.1 and Remark~4.1 of
\citet{Chernozhukov2017} to $G/\sigma_{\min}$ and
$S^*/\sigma_{\min}$. Their covariance discrepancy is $\Delta_N$,
and the marginal variances of the reference Gaussian are at
least one. On $\mathcal E_N$, this yields
\[
  \sup_{A\in\mathcal A_\ell}
  |\Prob^*(S^*\in A)-\Prob(G\in A)|
  \le C d_N^{1/3}\log^{2/3}(2\ell).
\]
This comparison uses the conditional Gaussian law of $S^*$;
the covariance concentration for our within-group centering
was established separately in Step~2. Coordinate duplication
again handles $\ell<3$.

Applying the comparison to the rectangles defining the maximum gives,
on $\mathcal E_N$,
\begin{equation}\label{eq:band-proof-bootstrap-cdf}
  \sup_{t\in\Real}|F^*(t)-F(t)|
  \le s_N:=C d_N^{1/3}\log^{2/3}(2\ell)
  \le C\left(\frac{H_N^5}{N}\right)^{1/6}
  \le C r_N.
\end{equation}
Here we used $H_N/N\le1$ in the middle inequality.

The distribution function $F$ is continuous. For each $t\ge0$,
the event that the Gaussian maximum equals $t$ is contained in
the union of the events $\{|G_k|=t\homega_k\}$, each of probability
zero. The same argument and \eqref{eq:band-proof-positive}
show that $F^*$ is continuous on $\mathcal E_N$.
Neither statement requires an invertible covariance matrix.
Moreover, $F$ is strictly increasing on $[0,\infty)$: on the
nonzero linear support of $G$, the map
$g\mapsto\max_k|g_k|/\homega_k$ is a norm, and the Gaussian law
assigns positive probability to every nonempty open annulus.

\medskip\noindent
\emph{4. Calibration at a data-dependent critical value.}
Define $W=F(T)$ and $V=F^*(T)$.
For $0<v<1$, continuity and strict increase give
\[
  \Prob(W\le v)=\Prob\{T\le F^{-1}(v)\}.
\]
Here $F^{-1}(v)$ is the unique nonnegative $t$ satisfying $F(t)=v$.
Equation~\eqref{eq:band-proof-cdf}, including its value at zero,
therefore implies
\[
  \sup_{0\le v\le1}|\Prob(W\le v)-v|\le C r_N.
\]
On $\mathcal E_N$, \eqref{eq:band-proof-bootstrap-cdf} gives
$|V-W|\le s_N$. Hence, for $0\le v\le1$,
\[
  \Prob(W\le v-s_N)-N^{-1}
  \le\Prob(V\le v)
  \le\Prob(W\le v+s_N)+N^{-1}.
\]
Consequently,
\begin{equation}\label{eq:band-proof-uniform}
  \sup_{0\le v\le1}|\Prob(V\le v)-v|
  \le C r_N+s_N+N^{-1}\le C r_N.
\end{equation}
This argument allows $T$ and $F^*$ to depend on the same observations.

\medskip\noindent
\emph{5. Finitely many multiplier draws.}
Let $T_1^*,\ldots,T_B^*$ be the independent multiplier maxima
conditional on the inference observations. Write
$j=\lceil B(1-\alpha)\rceil$, so $1\le j\le B$, and let
$\hc$ be their $j$th order statistic. On $\mathcal E_N$,
continuity of $F^*$ gives
\[
  \Prob^*(T\le\hc)=g_{B,j}(V),\qquad
  g_{B,j}(v)=
  \sum_{q=0}^{j-1}\binom Bq v^q(1-v)^{B-q}.
\]
Indeed, $T\le\hc$ holds exactly when at most $j-1$ of the
draws are strictly below $T$. Since both sides lie in $[0,1]$
on the complementary event,
\[
  \left|\Prob(T\le\hc)-\E g_{B,j}(V)\right|
  \le\Prob(\mathcal E_N^c)\le N^{-1}.
\]
The function $g_{B,j}$ is decreasing, with
$g_{B,j}(0)=1$ and $g_{B,j}(1)=0$. Integration by parts and
\eqref{eq:band-proof-uniform} yield
\[
  \left|\E g_{B,j}(V)-\int_0^1g_{B,j}(v)\,dv\right|
  \le
  \sup_{0\le v\le1}|\Prob(V\le v)-v|
  \le C r_N.
\]
Each summand in the integral has value $1/(B+1)$, so
\[
  \int_0^1g_{B,j}(v)\,dv=\frac{j}{B+1},\qquad
  \left|\Prob(T\le\hc)-\frac{j}{B+1}\right|\le C r_N.
\]
Finally, $j=\lceil B(1-\alpha)\rceil\in[B(1-\alpha),B(1-\alpha)+1)$, so
$j-(1-\alpha)(B+1)\in[-(1-\alpha),\alpha)$ and
\[
  \left|\frac{j}{B+1}-(1-\alpha)\right|
  \le\frac{1}{B+1}.
\]
Since $\{T\le\hc\}$ is the simultaneous coverage event,
the triangle inequality proves the displayed bound in the theorem.
The approximation constant is independent of $B$ and $\alpha$.

For a nonempty subset $I\subseteq\{1,\ldots,\ell\}$ fixed
conditional on the pilot, apply the same argument to
$(U_{ik})_{k\in I}$ and $(V_{jk})_{k\in I}$.
The envelope, allocation and variance bounds still hold.
Using the subset's own multiplier maximum gives the same result
with $|I|$ in place of $\ell$, which is no larger than the stated bound.
\end{proof}

\subsection{Estimated normalization}
\label{app:studentization}
\label{app:proof_est_norm}

Theorem~\ref{thm:band} treats normalization factors fixed
independently of the inference observations. We now allow
estimated factors that approach deterministic reference scales.
These scales need not equal the coordinate SDs $\omega_k$.
The landmark family must still satisfy
Assumption~\ref{ass:fixed}. All probability statements below
condition on the independent pilot, if one is used.

\begin{proposition}[Estimated normalization]
\label{prop:studentized-band}
Let $\ell=\ell_N\ge1$ and $\lambda_N=m/N$. Suppose that,
for constants $M,\sigma_0>0$ and $0<\lambda_0\le1/2$
independent of $N$,
\[
  \sup_D\max_{k\le\ell}|\Phi_k(D)|\le M,\qquad
  \min_{k\le\ell}\omega_k\ge\sigma_0,\qquad
  \lambda_0\le\lambda_N\le1-\lambda_0,
  \qquad \frac{\log^7(2\ell N)}{N}\longrightarrow0.
\]
Let $\tau_{N,k}>0$ be deterministic reference scales satisfying
\[
  0<c_0\le\frac{\omega_k}{\tau_{N,k}}\le C_0<\infty
  \qquad\text{for every }k\le\ell,
\]
where $c_0,C_0$ do not depend on $N$.
Suppose the positive factors $\homega_k$ satisfy
\begin{equation}\label{eq:relative-normalization}
  a_N=\max_{k\le\ell}
  \left|\frac{\homega_k}{\tau_{N,k}}-1\right|,
  \qquad a_N\log(2\ell)\xrightarrow{\Prob}0.
\end{equation}
Compute the band \eqref{eq:band} with these factors held
fixed across Gaussian multiplier draws. For fixed
$0<\alpha<1$ and any sequence of positive integers $B=B_N$,
there is $\varepsilon_N\to0$ such that
\[
  \left|\Prob\{\delta_k\in\mathcal C_k\text{ for all }k\le\ell\}
      -(1-\alpha)\right|
  \le\varepsilon_N+\frac{1}{B+1}.
\]
In particular, coverage converges to $1-\alpha$ if $B\to\infty$.
The same conclusion holds for any nonempty coordinate subset
fixed independently of the inference observations, using its
own critical value. Any data-dependent selection of the
original intervals retains their joint coverage guarantee.
\end{proposition}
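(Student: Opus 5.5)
The plan is to reduce Proposition~\ref{prop:studentized-band} to the proof of Theorem~\ref{thm:band} by comparing the estimated-normalization band with an oracle band that uses the deterministic scales $\tau_{N,k}$. Those scales are trivially ``prespecified'' conditional on the pilot, so the argument of Theorem~\ref{thm:band} applies to them. Write $T_\tau=\max_k|S_k|/\tau_{N,k}$ and $T^*_\tau=\max_k|S^*_k|/\tau_{N,k}$. Steps 1--4 of that proof give $\sup_v|\Prob(F_\tau^*(T_\tau)\le v)-v|\le Cr_N$, where $F_\tau^*$ is the conditional law of $T_\tau^*$. The constant depends on $M/\sigma_0$ and $\lambda_0$ only; the normalizing factors enter only through positivity, as remarked after \eqref{eq:band-proof-cdf}. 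Since $\log^7(2\ell N)/N\to0$, the term $r_N\to0$.

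\emph{Perturbation step.} The actual statistics are $T=\max_k|S_k|/\homega_k$ and $T^*=\max_k|S^*_k|/\homega_k$, with the same $\homega_k$ used in both. Since $\homega_k=\tau_{N,k}(1+O(a_N))$, on the event $\{a_N\le1/2\}$ we have
\[
(1-2a_N)T_\tau\le T\le(1+2a_N)T_\tau,
\]
and the same holds for $T^*$ relative to $T^*_\tau$. The next task is to show that this multiplicative perturbation moves the probability-integral transform by $o(1)$. Two ingredients give this. First, $T_\tau=O_\Prob(\sqrt{\log(2\ell)})$, because $\omega_k/\tau_{N,k}\le C_0$ and a Gaussian maximum is bounded by $C_0\sqrt{2\log(2\ell)}$ up to the approximation error. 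Second, the Gaussian anti-concentration inequality of Chernozhukov--Chetverikov--Kato for maxima with coordinate variances bounded below by $c_0^2$ gives $\sup_t\Prob(|\max_k|G_k|/\tau_{N,k}-t|\le\eta)\le C\eta\sqrt{\log(2\ell)}$. The additive perturbation is $\eta\asymp a_N\sqrt{\log(2\ell)}$, so its effect is $O_\Prob(a_N\log(2\ell))$, which tends to $0$ by \eqref{eq:relative-normalization}. The same bound holds for the conditional bootstrap law on $\mathcal E_N$, where $\widehat\Gamma_{kk}\ge\sigma_0^2/2$ by \eqref{eq:band-proof-positive}.

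I expect the main obstacle to be that $\homega_k$ is random and depends on the same observations as $T$ and $T^*$. Fixed-$t$ anti-concentration therefore cannot be applied directly. I would handle this by sandwiching. On $\{a_N\le\eta_N\}$, with a deterministic $\eta_N\to0$ satisfying $\eta_N\log(2\ell)\to0$ and $\Prob(a_N>\eta_N)\to0$, the coverage event $\{T\le\hc\}$ lies between $\{T_\tau\le\hc^{\tau}(1\mp3\eta_N)\}$. Here $\hc^\tau$ is the oracle order statistic, and the inclusion holds because order statistics are monotone under the coordinatewise sandwich of the $T^*_b$. Rescaling the threshold by $1\pm3\eta_N$ changes $\Prob(T_\tau\le c\,\cdot)$ by at most the anti-concentration bound at scale $\eta_N\sqrt{\log(2\ell)}$, uniformly over the random $\hc^\tau$, since a Gaussian band of width $\eta$ is controlled for all $t$ at once. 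Combining this with Step~5 of Theorem~\ref{thm:band} for the oracle band gives
\[
\varepsilon_N=Cr_N+C\eta_N\log(2\ell)+\Prob(a_N>\eta_N)+N^{-1}\to0
\]
plus the $1/(B+1)$ term.

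The subset claim follows by running the same argument on the fixed coordinate set $I$, since all the assumptions restrict to $I$. The claim about data-dependent selection is immediate: on the joint coverage event $\{\delta_k\in\mathcal C_k\ \forall k\}$, every subfamily of the original intervals covers its targets.
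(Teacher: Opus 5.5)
Your route differs from the paper's. The paper never compares critical values. It uses $|T-T_\tau|=o_\Prob(L_N^{-1/2})$, its conditional analogue for $T^*$, and Nazarov's inequality for the deterministic Gaussian law $F_N$. These give $\sup_t|\Prob(T\le t)-F_N(t)|\to0$ and $\sup_t|\Prob^*(T^*\le t)-F_N(t)|\to0$ in probability. The paper then checks that the bootstrap law $F_N^*$ of the estimated-normalization maximum is continuous with probability tending to one, and reruns Steps 4--5 of Theorem~\ref{thm:band} with $V_N=F_N^*(T)$.

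In that route the data dependence of $\homega_k$ is not an obstacle. For each deterministic $t$ one has $\Prob(T_\tau\le t-u)-\Prob(|T-T_\tau|>u)\le\Prob(T\le t)\le\Prob(T_\tau\le t+u)+\Prob(|T-T_\tau|>u)$. So the difficulty you anticipate is not where the real work lies. Your coupling with an oracle band driven by the same multipliers is a legitimate alternative. On $\{a_N\le\eta_N\}$ the sandwich of $\hc$ between multiples of $\hc^{\tau}$ is correct, because order statistics are monotone and commute with positive scaling. Its advantage is that only the oracle bootstrap law $F_\tau^*$, already handled in Theorem~\ref{thm:band}, ever appears. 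You need no conditional perturbation bound and no continuity argument for $F_N^*$.

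The step that is not justified as written is moving the threshold from $\hc^{\tau}$ to $(1\pm3\eta_N)\hc^{\tau}$ at the cost of the anti-concentration bound ``uniformly over the random $\hc^{\tau}$''. Nazarov's inequality controls increments of the deterministic function $F_N$, whereas $T_\tau$ is only Kolmogorov-close to $F_N$. Moreover, $\hc^{\tau}$ is computed from the same observations, through $\widehat\Gamma$ and the centered responses. For bounded $B$ it also does not concentrate near any deterministic quantile. So $\Prob\{(1-3\eta_N)\hc^{\tau}<T_\tau\le(1+3\eta_N)\hc^{\tau}\}$ cannot be reduced to a fixed-$t$ statement directly.

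The repair reuses Steps 4--5:
\begin{enumerate}
\item For $x=1\pm3\eta_N$, regard $\{T_\tau\le x\hc^{\tau}\}$ as the coverage event of the oracle band for the rescaled statistic $T_\tau/x$, and set $V_x=F_\tau^*(T_\tau/x)$.
\item Conditioning on the data gives $\Prob^*(T_\tau\le x\hc^{\tau})=g_{B,j}(V_x)$ on $\mathcal E_N$.
\item Integration by parts gives $|\E g_{B,j}(V_x)-j/(B+1)|\le\sup_v|\Prob(V_x\le v)-v|$, uniformly in $B$.
\item The last supremum is at most $Cr_N+s_N+N^{-1}+\sup_t|F_N(xt)-F_N(t)|$.
\end{enumerate}
Only this final, deterministic quantity needs anti-concentration. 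Nazarov's inequality on $t\lesssim\sqrt{\log(2\ell)}$ and a union bound over coordinates in the tail bound it by $C\eta_N\log(2\ell)$. With that modification your $\varepsilon_N$ goes through. The argument then uses the same probability-integral-transform machinery as the paper, applied to the oracle statistic rather than the estimated one.
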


\begin{proof}
Put $L_N=\log(2\ell)$ and define
\[
  S_k=\sqrt{\Neff}(\hdelta_k-\delta_k),\qquad
  S_k^*=\homega_k Z_k^*,\qquad
  T_\tau=\max_k\frac{|S_k|}{\tau_{N,k}},\qquad
  T_\tau^*=\max_k\frac{|S_k^*|}{\tau_{N,k}}.
\]
Let $G\sim\mathcal N(0,\Gamma)$ and let $F_N$ be the
distribution function of $\max_k|G_k|/\tau_{N,k}$.
The Gaussian approximation and covariance comparison in the
proof of Theorem~\ref{thm:band}, applied to the deterministic
scales $\tau_{N,k}$, give
\[
  \sup_t|\Prob(T_\tau\le t)-F_N(t)|\longrightarrow0,
  \qquad
  \sup_t|\Prob^*(T_\tau^*\le t)-F_N(t)|
  \xrightarrow{\Prob}0.
\]
Here $\Prob^*$ conditions on the inference observations.
These approximations follow from
\citet[Proposition~2.1 and Theorem~4.1]{Chernozhukov2017}.
The scale bounds keep the normalized coordinate variances
bounded above and away from zero. They also imply
$T_\tau=O_{\Prob}(\sqrt{L_N})$ and the corresponding
conditional bound for $T_\tau^*$, in probability.

Write $T=\max_k|S_k|/\homega_k$ and
$T^*=\max_k|Z_k^*|$. On $\{a_N<1/2\}$,
\[
  \frac{T_\tau}{1+a_N}\le T\le\frac{T_\tau}{1-a_N},
  \qquad
  \frac{T_\tau^*}{1+a_N}\le T^*
  \le\frac{T_\tau^*}{1-a_N}.
\]
Thus $|T-T_\tau|=o_{\Prob}(L_N^{-1/2})$, with the
analogous conditional statement for $T^*$.
Applying Nazarov's inequality to the $2\ell$ Gaussian
coordinates $G_k/\tau_{N,k}$ and $-G_k/\tau_{N,k}$ gives
\[
  \sup_t\{F_N(t+u)-F_N(t)\}\le C u\sqrt{L_N},
  \qquad u>0;
\]
see \citet[Lemma~A.1]{Chernozhukov2017}.
The replacement errors and this inequality therefore yield
\[
  \sup_t|\Prob(T\le t)-F_N(t)|\longrightarrow0,
  \qquad
  \sup_t|\Prob^*(T^*\le t)-F_N(t)|
  \xrightarrow{\Prob}0.
\]

Let $F_N^*$ be the conditional distribution function of $T^*$.
The two preceding approximations imply that
$V_N=F_N^*(T)$ approaches the uniform distribution on $[0,1]$
in Kolmogorov distance, by the argument in Step~4 of
Appendix~\ref{app:band-proof}. Covariance consistency and the
variance lower bound also imply that $F_N^*$ is continuous
with probability tending to one. Step~5 of that proof therefore
applies, with $j_B=\lceil B(1-\alpha)\rceil$, and gives
\[
  \Prob\{T\le\hc\}=\frac{j_B}{B+1}+o(1),
  \qquad
  \left|\frac{j_B}{B+1}-(1-\alpha)\right|
  \le\frac{1}{B+1}.
\]
The $o(1)$ term is uniform over $B_N$: the binomial-tail
function used in Step~5 has total variation at most one for
every $B$ and $j_B$. This proves the coverage claim. The same
argument applies to a coordinate subset fixed independently of
the inference observations. Coverage of any selected original
intervals follows from coverage of the full family.
\end{proof}

\paragraph{Two estimated choices.}
For $m,m'\ge2$, write
\[
  s_{P,k}^2=\frac1{m-1}\sum_{i=1}^m(U_{ik}-\bar U_k)^2,
  \qquad
  s_{Q,k}^2=\frac1{m'-1}\sum_{j=1}^{m'}(V_{jk}-\bar V_k)^2.
\]
Let $\eta_N>0$ be deterministic with $\eta_N\to0$.
The groupwise choice is
\begin{equation}\label{eq:groupwise-normalization}
  \homega_k^{\mathrm{grp}}
  =\left[\max\left\{
    (1-\lambda_N)s_{P,k}^2+\lambda_N s_{Q,k}^2,
    \eta_N^2\right\}\right]^{1/2},
\end{equation}
whose reference scale is $\tau_{N,k}=\omega_k$.

Alternatively, ignore group labels when computing the
normalizer. With $\bar W_k=\lambda_N\bar U_k+(1-\lambda_N)\bar V_k$,
define
\begin{align}
  s_{\mathrm{pool},k}^2
  &=\frac{\sum_{i=1}^m(U_{ik}-\bar W_k)^2
          +\sum_{j=1}^{m'}(V_{jk}-\bar W_k)^2}{N-1},
          \label{eq:pooled-variance}\\
  \homega_k^{\mathrm{pool}}
  &=\left[\max\{s_{\mathrm{pool},k}^2,\eta_N^2\}\right]^{1/2}.
          \label{eq:pooled-normalization}
\end{align}
Writing $\sigma_{P,k}^2=\operatorname{Var}_P\{\Phi_k(X)\}$
and $\sigma_{Q,k}^2=\operatorname{Var}_Q\{\Phi_k(Y)\}$,
the corresponding reference scale satisfies
\begin{equation}\label{eq:pooled-reference-scale}
  \tau_{N,k}^2
  =\lambda_N\sigma_{P,k}^2
   +(1-\lambda_N)\sigma_{Q,k}^2
   +\lambda_N(1-\lambda_N)\delta_k^2.
\end{equation}
The between-group term is present because the pooled variance
is centered at the grand mean. In fact,
\[
  (N-1)s_{\mathrm{pool},k}^2
  =(m-1)s_{P,k}^2+(m'-1)s_{Q,k}^2
    +\Neff\,\hdelta_k^2.
\]
This is not the pooled within-group variance.

Under the boundedness, nondegeneracy and allocation conditions
of Proposition~\ref{prop:studentized-band}, both constructions
satisfy its scale conditions. For the pooled choice,
$\lambda_N\sigma_{P,k}^2+(1-\lambda_N)\sigma_{Q,k}^2$
is bounded above and below by fixed positive multiples of
$\omega_k^2$. The bounded contrast $|\delta_k|\le2M$ and
$\omega_k\ge\sigma_0$ preserve this comparability after
adding the last term in \eqref{eq:pooled-reference-scale}.

Concentration of the bounded first and second sample moments,
followed by a union bound over coordinates, gives for either
choice
\[
  \max_{k\le\ell}
  \left|\frac{\homega_k}{\tau_{N,k}}-1\right|
  =O_{\Prob}\!\left(\sqrt{\frac{\log(2\ell)}{N}}\right).
\]
The reference scales are uniformly bounded away from zero,
so the vanishing floor is inactive with probability tending
to one. The dimension condition implies
$\log^3(2\ell)/N\to0$, which verifies
\eqref{eq:relative-normalization}. For fixed $\ell$,
ordinary relative consistency suffices.

Corollary~\ref{cor:estimated-normalization} follows from
Proposition~\ref{prop:studentized-band} and the preceding
verification of the two normalization constructions.

\paragraph{Calibration and scope.}
Both choices retain the separately centered multiplier
sum in Section~\ref{sec:method}. Its unnormalized conditional
covariance is
\[
  \widehat\Gamma
  =(1-\lambda_N)\widehat\Sigma_P
    +\lambda_N\widehat\Sigma_Q,
\]
where the within-group empirical covariances use divisors
$m$ and $m'$, respectively. Pooled normalization therefore
does not impose equal group covariances. Its factor is held
fixed across multiplier draws, just as for the groupwise choice.

The subset conclusion gives the error-control and step-down
guarantees of Corollary~\ref{cor:fwer}, with FWER at most
$\alpha+\varepsilon_N+1/(B+1)$.
Appendix~\ref{app:fwer-proof} gives the argument for both
normalization regimes.

These are asymptotic guarantees. The allocation condition
allows unequal group sizes, but finite-sample accuracy still
depends on the smaller group, response variability and the
number of coordinates. The floor ensures positive factors;
it does not by itself ensure accurate calibration.

\subsection{Error control and selected reporting: proof of Corollary~\ref{cor:fwer}}
\label{app:fwer-proof}

\begin{proof}
Condition on the independent pilot, if one is used.
Write
\[
  \mathcal E=
  \{\delta_k\in\mathcal C_k\text{ for all }k\}.
\]
On $\mathcal E$, every $k\in H_0$ satisfies
$|Z_k|\le\hc$. Thus no true null is rejected by
$\Rhat$, and
\[
  \Prob\{\Rhat\cap H_0\ne\emptyset\}
  \le\Prob(\mathcal E^c)
  \le\alpha+\varepsilon_{N,\ell}.
\]

For the step-down procedure, the result is immediate
if $H_0=\varnothing$. Otherwise, for each nonempty
index set $K$, let $\hc(K)$ be the
$\lceil B(1-\alpha)\rceil$th order statistic of
\[
  \left\{
    \max_{k\in K}|Z_k^{*(b)}|:b=1,\ldots,B
  \right\},
\]
where $Z_k^{*(b)}$ is the response in multiplier
replicate $b$. Use the same draws and normalizing
factors at every step. The procedure starts from
$K=\{1,\ldots,\ell\}$, rejects every $k\in K$ with
$|Z_k|>\hc(K)$, removes the rejected indices from $K$,
and repeats until no further rejection occurs;
$\Rhat_{\mathrm{sd}}$ is the set of all rejected indices. For $I\subseteq K$,
the bootstrap maximum over $I$ is no larger than
that over $K$ in every replicate. Hence
\[
  \hc(I)\le\hc(K).
\]

If a false rejection occurs, consider the first
step at which this happens. Its remaining index
set $K$ contains $H_0$. Some $k\in H_0$ then
satisfies
\[
  |Z_k|>\hc(K)\ge\hc(H_0).
\]
Consequently,
\[
  \Prob\{\Rhat_{\mathrm{sd}}\cap H_0\ne\emptyset\}
  \le
  \Prob\left\{
    \max_{k\in H_0}|Z_k|>\hc(H_0)
  \right\}
  \le\alpha+\varepsilon_{N,\ell}.
\]
The last inequality follows from the fixed-subset
conclusion of Theorem~\ref{thm:band} or
Proposition~\ref{prop:studentized-band}, as applicable.
The set $H_0$ is fixed conditional on the pilot.
For estimated normalization, choose $\varepsilon_N$
to dominate the vanishing approximation errors
for both the full family and $H_0$.

Critical values cannot increase as coordinates
are removed. Therefore every coordinate rejected
by $\Rhat$ is also rejected by the step-down
procedure, giving
$\Rhat\subseteq\Rhat_{\mathrm{sd}}$.

Finally, for any data-dependent subset
$\widehat I\subseteq\{1,\ldots,\ell\}$,
\[
  \mathcal E\subseteq
  \{\delta_k\in\mathcal C_k
    \text{ for all }k\in\widehat I\}.
\]
This proves the selected-reporting claim,
provided the original intervals are retained.
\end{proof}

\subsection{Detection: proof of Proposition~\ref{prop:detect}}
\label{app:detect-proof}

\begin{proof}
Condition on the independent pilot, if one is used.
Write $\Prob^*$ and $\E^*$ for probability and
expectation over the multiplier draws conditional
on the inference observations. Put
\[
  a_\alpha=\sqrt{2\log(4\ell/\alpha)},\qquad
  \bar c=\hat\rho a_\alpha,\qquad
  T^*=\max_k|Z_k^*|.
\]
Each $Z_k^*$ is conditionally centered Gaussian
with variance at most $\hat\rho^2$.
If $\hat\rho>0$, the Gaussian tail bound and
a union bound give
\[
  \Prob^*(T^*>\bar c)
  \le
  2\ell\exp\left\{
    -\frac{\bar c^2}{2\hat\rho^2}
  \right\}
  =\frac{\alpha}{2}.
\]
If $\hat\rho=0$, all multiplier responses are zero,
so this probability is zero.

Let $K_B$ count the bootstrap maxima exceeding
$\bar c$. Conditional on the observations,
\[
  K_B\sim\operatorname{Bin}(B,p_*),\qquad
  p_*=\Prob^*(T^*>\bar c)\le\alpha/2.
\]
Write $j=\lceil B(1-\alpha)\rceil$.
Since $\hc$ is the $j$th order statistic,
\[
  \{\hc>\bar c\}
  =
  \{K_B\ge B-j+1\}
  =
  \{K_B\ge\lfloor\alpha B\rfloor+1\}
  \subseteq\{K_B>\alpha B\}.
\]
Also,
\[
  \E^*[2^{K_B}]
  =(1+p_*)^B
  \le e^{\alpha B/2}.
\]
Markov's inequality therefore yields
\[
  \Prob^*(\hc>\bar c)
  \le
  2^{-\alpha B}\E^*[2^{K_B}]
  \le
  e^{-\alpha B(\log2-1/2)}
  \le e^{-\alpha B/8}.
\]
Taking expectations gives the same unconditional
bound, conditional on the pilot.

Define
\[
  \mathcal E=
  \{\delta_k\in\mathcal C_k\text{ for all }k\},
  \qquad
  \mathcal G=
  \left\{
    \max_k\frac{\homega_k}{\omega_k}\le\kappa_0,
    \ \hat\rho\le\rho_0
  \right\}.
\]
The coverage result and the assumed normalization
bound give
\[
  \Prob(\mathcal E^c)
  \le\alpha+\varepsilon_{N,\ell},
  \qquad
  \Prob(\mathcal G^c)\le\gamma_N.
\]

On $\mathcal G$, every coordinate satisfying the
stated signal condition obeys
\[
  \frac{\sqrt{\Neff}|\delta_k|}{\homega_k}
  >
  2\rho_0a_\alpha
  \ge2\bar c.
\]
On $\mathcal E$, the band gives
\[
  \frac{\sqrt{\Neff}|\hdelta_k-\delta_k|}
       {\homega_k}
  \le\hc.
\]
Consequently, on
$\mathcal E\cap\mathcal G\cap\{\hc\le\bar c\}$,
\[
  |Z_k|
  \ge
  \frac{\sqrt{\Neff}|\delta_k|}{\homega_k}-\hc
  >
  2\bar c-\hc
  \ge\hc.
\]
Thus every such coordinate belongs to $\Rhat$
on the same event. A union bound over the three
complementary events proves the result.
\end{proof}

\subsection{The pilot comparison: proof of Corollary~\ref{cor:pilot}}
\label{app:pilot-proof}

\begin{proof}
Let $T_P,T_Q$ be the total numbers of available observations
in the two groups. The effective sample sizes are
\[
  \Neff^{\mathrm{full}}
  =\frac{T_PT_Q}{T_P+T_Q},\qquad
  \Neff^{\mathrm{pilot}}
  =\frac{(1-f)T_P(1-f)T_Q}{(1-f)(T_P+T_Q)}
  =(1-f)\Neff^{\mathrm{full}}.
\]
Here the proportional split is understood to give
integer sample sizes.

For each retained coordinate, the response function
and its population variances are unchanged.
The split also preserves the group proportions.
Hence $\omega_k$ is the same for both procedures.

Let $t_{\mathrm{full},k}$ and $t_{\mathrm{pilot},k}$
denote the sufficient detection thresholds in
Proposition~\ref{prop:detect}. With a common product
$\kappa_0\rho_0$,
\[
  \frac{t_{\mathrm{pilot},k}}{t_{\mathrm{full},k}}
  =
  \sqrt{
    \frac{\log(4\ell_h/\alpha)}
    {(1-f)\log(4\ell/\alpha)}
  }.
\]
The claimed comparison follows by determining when
this ratio exceeds one.

The comparison holds for each realized selected
subfamily and its retained coordinates. It orders
the sufficient thresholds, but does not order
the actual detection probabilities.
The error terms in the two detection guarantees
may also differ.
\end{proof}

\subsection{Local detectability: proof of Theorem~\ref{thm:local}}
\label{app:proof-local}
\begin{proof}
Fix a retained coordinate $k$ satisfying
\[
  \dB(p_k,a)+\epsilon<r_k
  \le\dB(p_k,b)-\epsilon.
\]
Conditional on the added feature being present
under $P$, the triangle inequality gives
\[
  \dB(p_k,A')
  \le \dB(p_k,a)+\dB(a,A')
  \le \dB(p_k,a)+\epsilon<r_k
\]
almost surely. Hence
\[
  \varphi_k(A')
  \ge w_k\{r_k-\dB(p_k,a)-\epsilon\}>0.
\]
Under $Q$, conditional on presence, the reverse
triangle inequality gives
\[
  \dB(p_k,B')
  \ge \dB(p_k,b)-\dB(b,B')
  \ge \dB(p_k,b)-\epsilon
  \ge r_k.
\]
Thus $\varphi_k(B')=0$ almost surely.

The background mean measures are equal, so
\[
  \E_P\Phi_k(X_0)
  =\int\varphi_k\,d(\E_P N_{X_0})
  =\int\varphi_k\,d(\E_Q N_{Y_0})
  =\E_Q\Phi_k(Y_0).
\]
All expectations are finite because the responses
are bounded. By additivity and the common presence
probability $q$,
\begin{align*}
  \delta_k
  &=
  q\,\E_P\{\varphi_k(A')\mid F_P\ne\varnothing\}\\
  &\quad-
  q\,\E_Q\{\varphi_k(B')\mid F_Q\ne\varnothing\}\\
  &\ge qw_k\{r_k-\dB(p_k,a)-\epsilon\}>0.
\end{align*}
No independence between the background and the
added feature is needed.

For the second claim, suppose $\dB(p_k,a)\le\eta$.
The reverse triangle inequality yields
\[
  \dB(p_k,b)
  \ge \dB(a,b)-\dB(p_k,a)
  \ge \dB(a,b)-\eta.
\]
Therefore,
$\eta+\epsilon<r_k\le\dB(a,b)-\eta-\epsilon$
implies the first radius condition. Applying the
preceding bound gives
\[
  \delta_k
  \ge qw_k\{r_k-\dB(p_k,a)-\epsilon\}
  \ge qw_k(r_k-\eta-\epsilon)>0.
\]

Under the conditions of
Proposition~\ref{prop:detect}, either lower bound,
when applicable and exceeding the corresponding
detection threshold, gives the stated detection
guarantee.
\end{proof}

\subsection{Spatial extent: proof of Proposition~\ref{prop:extent}}
\label{app:extent-proof}

\begin{proof}
Let $\zeta=\bar\mu_P-\bar\mu_Q$, a finite signed Borel measure,
and let $S=\operatorname{supp}|\zeta|$. The response $\varphi_k$
vanishes outside $\mathcal B_k^\circ$. If
$\mathcal B_k^\circ\cap S=\emptyset$, then
$|\zeta|(\mathcal B_k^\circ)=0$, so
$\delta_k=\int\varphi_k\,d\zeta=0$. Thus $\delta_k\ne0$
implies that some $s_k\in\mathcal B_k^\circ\cap S$ satisfies
\[
  \dB(p_k,S)\le\dB(p_k,s_k)<r_k.
\]
On the simultaneous coverage event, a rejected coordinate has
$0\notin\mathcal C_k$ and therefore $\delta_k\ne0$. For any
$x\in\mathcal B_k^\circ$,
\[
  \dB(x,S)\le\dB(x,p_k)+\dB(p_k,s_k)
  <2r_k\le2r_{\max}.
\]
The same inclusion holds for any subset of the rejected
neighborhoods reported afterwards. It does not imply that the
reported neighborhoods cover $S$.
\end{proof}

\subsection{Lifetime invariance: proof of Proposition~\ref{prop:invariance}}
\label{app:invariance-proof}

\begin{proof}
For $a=(b,d)$, the point $a+(t,t)=(b+t,d+t)$ has lifetime $d-b$.
A diagram and its image under $\tau$ therefore have identical
lifetime multisets. Consequently, the joint law of the lifetime
data under $(P,Q)$ equals its law under $(P,P)$. Every statistic
based only on these data, and hence every lifetime-based test,
has the same distribution under the two pairs of populations.

For the second claim, use the coupling $Y=\tau(X)$ with $X\sim P$.
Points left in place contribute equally to the two expected
responses. By assumption, the images of moved points contribute
zero under $Q$. Under $P$, the open response square meets the
support of the mean measure of the moved points. This gives
positive expected mass where $\varphi_k>0$, and hence a strictly
positive expected response. It follows that
$\delta_k=\E_P\Phi_k(X)-\E_Q\Phi_k(Y)>0$.
\end{proof}

\section{Additional technical results}
\label{app:technical}

We give a transport interpretation of the band, a lower bound
for identifying a changed coordinate, and the bias calculation
for empirical separation.

\subsection{Transport separation of mean counting measures}
\label{app:transport}

We use the padded transport construction of \citet{PaperIII}.
Throughout this subsection, the landmark family is fixed.
If it was learned from a pilot, all statements condition on
that pilot.

\paragraph{Padded counting measures.}
Write $\mathbb H_L=\{(b,d):0\le b<d\le L\}$ and introduce an absence state $\varnothing$. Equip
$\mathbb H_L^\varnothing=\mathbb H_L\cup\{\varnothing\}$ with the ground metric
\begin{align}
  d_0(a,b) &=\min\{\|a-b\|_\infty,\max(\pi(a),\pi(b))\}, && a,b\in\mathbb H_L,
  \label{eq:transport-ground}\\
  d_0(a,\varnothing)&=d_0(\varnothing,a)=\pi(a), && a\in\mathbb H_L,\nonumber
\end{align}
and set $d_0(\varnothing,\varnothing)=0$. This is the bottleneck metric on singleton diagrams and
the empty diagram. For two off-diagonal points, its two branches represent direct matching and deletion of both points, respectively.

For $D\in\D{n}$, define
\begin{equation}\label{eq:padded-counting}
  \widetilde N_D =N_D+(n-|D|)\delta_{\varnothing},
\end{equation}
where $|D|$ counts multiplicities and $\delta_{\varnothing}$ is the unit point mass at the
absence state. Every padded counting measure has mass $n$. Its population mean is
\begin{equation}\label{eq:padded-population}
  \widetilde{\bar\mu}_P :=\E_P\widetilde N_X
  =\bar\mu_P+\{n-\bar\mu_P(\mathbb H_L)\}\delta_{\varnothing},
\end{equation}
and $\widetilde{\bar\mu}_Q$ is defined similarly. The restriction of $\widetilde{\bar\mu}_P$ to
$\mathbb H_L$ is the mean measure $\bar\mu_P$ from Section~\ref{sec:setup}.

For finite nonnegative measures $\eta,\zeta$ of the same positive mass, let $\Pi(\eta,\zeta)$ be their couplings. Define
\begin{equation}\label{eq:padded-transport-distance}
  \Winf(\eta,\zeta):=\inf_{\gamma\in\Pi(\eta,\zeta)} \|d_0\|_{L^\infty(\gamma)}.
\end{equation}
Here the norm is the essential supremum of the ground
distance under the coupling. Multiplying both measures
by the same positive constant leaves $\Winf$ unchanged.

\begin{lemma}[Padded transport representation]
\label{lem:padded-transport}
For every $D,E\in\D{n}$,
\[
  \dB(D,E)
  =\Winf(\widetilde N_D,\widetilde N_E).
\]
\end{lemma}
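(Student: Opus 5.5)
The plan is to prove the two inequalities $\Winf(\widetilde N_D,\widetilde N_E)\le\dB(D,E)$ and $\dB(D,E)\le\Winf(\widetilde N_D,\widetilde N_E)$ separately. Both rest on one observation. Since $\widetilde N_D$ and $\widetilde N_E$ are integer-valued measures of the same mass $n$, they can be written as sums of $n$ unit atoms $x_1,\dots,x_n$ and $y_1,\dots,y_n$ in $\mathbb H_L^\varnothing$, where the atoms beyond $|D|$ (resp.\ $|E|$) equal $\varnothing$. Couplings with a permutation structure, $\gamma=\sum_i\delta_{(x_i,y_{\sigma(i)})}$, then correspond to partial matchings between $D$ and $E$. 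A pair of two off-diagonal points is read as matched; a pair of a point with $\varnothing$ is read as that point sent to the diagonal; pairs $(\varnothing,\varnothing)$ are ignored.

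For the first inequality, I take an optimal bottleneck matching between $D$ and $E$, with unmatched points sent to the diagonal, and count points on each side. If $D$ has $u$ unmatched points and $E$ has $v$, the number of matched pairs is $|D|-u=|E|-v$. Hence $n-|D|+u=n-|E|+v$, so there are enough padding slots to pair every unmatched point of $D$ with a $\varnothing$ atom of $E$ and vice versa; the remaining $\varnothing$ atoms are paired with each other. The cost of this coupling is the maximum of three kinds of terms:
\begin{itemize}
\item $\|a-b\|_\infty\ge d_0(a,b)$ for matched pairs $(a,b)$;
\item $\pi(a)=d_0(a,\varnothing)$ for points $a$ sent to the diagonal;
\item $0$ for the $(\varnothing,\varnothing)$ pairs.
\end{itemize}
So $\Winf(\widetilde N_D,\widetilde N_E)\le\dB(D,E)$.

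For the reverse inequality I need an optimal or near-optimal coupling that is a permutation coupling. The set of couplings of two atomic measures with integer masses is the transportation polytope. Its extreme points are integral by Birkhoff–von Neumann, applied after splitting multiplicities into unit atoms. The $L^\infty$ cost of a coupling depends only on its support. Any coupling $\gamma$ can be written as a convex combination of permutation couplings, and each of these is supported inside $\operatorname{supp}\gamma$. Choosing one of them therefore does not increase $\|d_0\|_{L^\infty(\gamma)}$, so the infimum in \eqref{eq:padded-transport-distance} is attained at a permutation $\sigma$.

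From such a $\sigma$ I build a bottleneck matching between $D$ and $E$ with cost at most that of $\sigma$:
\begin{itemize}
\item each pair $(a,\varnothing)$ or $(\varnothing,b)$ sends that point to the diagonal, at cost $\pi(a)$ or $\pi(b)$;
\item each off-diagonal pair $(a,b)$ with $d_0(a,b)=\|a-b\|_\infty$ is kept as a matched pair;
\item each off-diagonal pair with $d_0(a,b)=\max(\pi(a),\pi(b))$ is broken, and both points are sent to the diagonal, at cost $\max(\pi(a),\pi(b))$.
\end{itemize}
In every case the bottleneck cost of the pair equals its $d_0$ value, so $\dB(D,E)\le\Winf(\widetilde N_D,\widetilde N_E)$.

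The main obstacle I expect is the reduction to permutation couplings. The paper defines $\Winf$ over all couplings, so I must justify that fractional couplings cannot beat integral ones. The support argument via Birkhoff–von Neumann is where the care goes, and it relies on finiteness of the atoms and on the cost being a supremum rather than an integral. The rest is bookkeeping on multiplicities and on the two branches of $d_0$.
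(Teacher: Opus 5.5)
Your reverse inequality $\dB(D,E)\le\Winf(\widetilde N_D,\widetilde N_E)$ matches the paper's argument. You split atoms into $n$ unit slots, use Birkhoff--von Neumann to obtain a permutation supported inside the coupling's support, and read each pair through the branch of $d_0$ that realises it. Finiteness of the set of permutations also gives attainment of the infimum, as you say.

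The gap is in the first inequality, $\Winf(\widetilde N_D,\widetilde N_E)\le\dB(D,E)$. The identity $n-|D|+u=n-|E|+v$ only says that both padded measures have the same leftover mass $n-k$ after the $k$ matched pairs are removed. Sending every unmatched point of $D$ to an absence atom of $E$ needs $u\le n-|E|=n-k-v$, that is $k+u+v\le n$, and this can fail. Take $n=1$, $D=\{a\}$ and $E=\{b\}$ with $\|a-b\|_\infty>\max(\pi(a),\pi(b))$. The optimal matching leaves both points unmatched, but $\widetilde N_D=\delta_a$ and $\widetilde N_E=\delta_b$ contain no absence atom, so the coupling you describe does not exist. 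The only coupling is $\delta_{(a,b)}$, and its cost $\max(\pi(a),\pi(b))$ comes from the deletion branch of $d_0$.

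The missing idea is to use that branch in this direction as well, which is what the paper does when it pairs the remaining off-diagonal points with one another where possible. The repaired construction has three steps:
\begin{enumerate}
\item Pair unmatched points of $D$ with unmatched points of $E$. Each such pair costs $d_0(a,b)\le\max(\pi(a),\pi(b))\le\dB(D,E)$.
\item Pair the $|u-v|$ leftover unmatched points with absence atoms of the other side. If $u\ge v$, this needs $u-v\le n-k-v$, which is $|D|\le n$ and always holds.
\item Pair the remaining absence atoms with each other. Both sides then have exactly $n-k-\max(u,v)$ of them left.
\end{enumerate}
With that change your proof is complete.
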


\begin{proof}
A partial diagram matching of cost at most $t$ extends
to a coupling of the padded measures. Retain the directly
matched pairs. Pair the remaining off-diagonal points
with one another where possible, then with absence
states, and pair the unused absence states.
Every pair has $d_0$-cost at most $t$. This proves
$\Winf(\widetilde N_D,\widetilde N_E)\le\dB(D,E)$.

Conversely, split repeated atoms and absence states into
$n$ labeled unit-mass slots on each side. A coupling
of cost at most $t$ gives a doubly stochastic matrix
supported on pairs with $d_0$-distance at most $t$.
A Birkhoff--von Neumann decomposition supplies a
permutation with the same support restriction.
For each paired off-diagonal pair, retain the direct
match if its $\ell_\infty$-distance is at most $t$.
Otherwise, the deletion branch of
\eqref{eq:transport-ground} allows both points to be
sent to the diagonal at cost at most $t$.
Pairs involving an absence state also correspond to
deletions. The resulting partial matching has
bottleneck cost at most $t$. Taking infima proves
the reverse inequality.
\end{proof}

\paragraph{Response bounds on the padded space.}
Extend each point response by $\varphi_k(\varnothing)=0$. Since
$r_k\le\pi(p_k)=d_0(p_k,\varnothing)$, the extension satisfies
\[
  \varphi_k(x)=w_k(r_k-d_0(p_k,x))_+,
  \qquad x\in\mathbb H_L^\varnothing.
\]
The triangle inequality and the Lipschitz property
of the positive-part function give
\begin{equation}\label{eq:padded-response-lipschitz}
  |\varphi_k(x)-\varphi_k(y)|
  \le w_k d_0(x,y).
\end{equation}
Padding therefore preserves both the diagram responses
and the population contrasts:
\begin{equation}\label{eq:padded-response-integrals}
  \Phi_k(D)=\int\varphi_k\,d\widetilde N_D,
  \qquad
  \delta_k=\int\varphi_k\,
  d(\widetilde{\bar\mu}_P-\widetilde{\bar\mu}_Q).
\end{equation}
The second identity follows by boundedness and
linearity of expectation.

\begin{corollary}[Transport separation of mean measures]
\label{cor:units}
Let
$\mathcal E=\{\delta_k\in\mathcal C_k
\text{ for all }k\}$ be the simultaneous coverage
event for the band \eqref{eq:band}. On $\mathcal E$,
\begin{equation}\label{eq:transport-certificate}
  \Winf(\widetilde{\bar\mu}_P,\widetilde{\bar\mu}_Q)
  \ge
  \max_{1\le k\le\ell}
  \frac{
    \bigl(
      |\hdelta_k|
      -\hc\,\homega_k/\sqrt{\Neff}
    \bigr)_+
  }{nw_k}.
\end{equation}
Consequently, the probability that this lower bound
holds is at least $\Prob(\mathcal E)$, with the
coverage guarantee given by Theorem~\ref{thm:band} or
Corollary~\ref{cor:estimated-normalization}, as applicable.
\end{corollary}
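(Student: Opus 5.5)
The plan is to prove a deterministic inequality and then intersect it with the coverage event $\mathcal E$. First I will show that for every coordinate $k$,
\[
  |\delta_k|\le n w_k\,\Winf(\widetilde{\bar\mu}_P,\widetilde{\bar\mu}_Q).
\]
This is a Kantorovich-type duality bound in the $W_\infty$ form. By \eqref{eq:padded-response-integrals}, $\delta_k=\int\varphi_k\,d(\widetilde{\bar\mu}_P-\widetilde{\bar\mu}_Q)$. Both padded means have total mass $n$, so I normalize them to probability measures $\eta=\widetilde{\bar\mu}_P/n$ and $\zeta=\widetilde{\bar\mu}_Q/n$. By the scale invariance noted after \eqref{eq:padded-transport-distance}, $\Winf(\eta,\zeta)=\Winf(\widetilde{\bar\mu}_P,\widetilde{\bar\mu}_Q)$.

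For any coupling $\gamma\in\Pi(\eta,\zeta)$, the Lipschitz bound \eqref{eq:padded-response-lipschitz} gives
\[
  |\delta_k|=n\Bigl|\int(\varphi_k(x)-\varphi_k(y))\,d\gamma(x,y)\Bigr|
  \le n w_k\int d_0\,d\gamma\le n w_k\|d_0\|_{L^\infty(\gamma)}.
\]
Taking the infimum over couplings yields the claim. The integrals are finite because $\varphi_k$ is bounded and $\gamma$ has total mass one.

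A small point needs checking here: the padded mean measures are finite Borel measures on $\mathbb H_L^\varnothing$ with equal mass, so $\Pi$ is nonempty. If one wants the infimum to be attained, compactness of the frame gives that, but attainment is not needed, since the bound holds for every coupling and passes to the infimum. Measurability of $d_0$ with respect to the topology on the padded space, with $\varnothing$ treated as an isolated point, is routine.

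Next I work on $\mathcal E$. There $|\delta_k-\hdelta_k|\le\hc\,\homega_k/\sqrt{\Neff}$ for all $k$, so
\[
  |\delta_k|\ge\bigl(|\hdelta_k|-\hc\,\homega_k/\sqrt{\Neff}\bigr)_+,
\]
where the positive part is harmless because $|\delta_k|\ge0$. Combining this with the first step, dividing by $nw_k>0$, and maximizing over $k$ gives \eqref{eq:transport-certificate}.

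Finally, the inequality holds on $\mathcal E$, so its probability is at least $\Prob(\mathcal E)$. Theorem~\ref{thm:band} or Corollary~\ref{cor:estimated-normalization} supplies the lower bound on $\Prob(\mathcal E)$, conditional on the pilot when one is used.

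The only substantive step is the duality bound. I expect no real obstacle there beyond confirming that the padding extension keeps $\varphi_k$ $w_k$-Lipschitz for $d_0$, which is exactly \eqref{eq:padded-response-lipschitz}, already established.
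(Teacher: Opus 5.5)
Your proposal is correct and follows essentially the same route as the paper: you bound $|\delta_k|\le nw_k\Winf(\widetilde{\bar\mu}_P,\widetilde{\bar\mu}_Q)$ over couplings via the Lipschitz property \eqref{eq:padded-response-lipschitz}, then combine it with $|\delta_k|\ge(|\hdelta_k|-\hc\,\homega_k/\sqrt{\Neff})_+$ on $\mathcal E$. The only difference is cosmetic: you normalize to probability measures and invoke scale invariance, whereas the paper works directly with couplings of total mass $n$.
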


\begin{proof}
For any coupling
$\gamma\in\Pi(\widetilde{\bar\mu}_P,
              \widetilde{\bar\mu}_Q)$,
the marginal identities and
\eqref{eq:padded-response-lipschitz} imply
\begin{align*}
  |\delta_k|
  &=
  \left|
    \int\{\varphi_k(x)-\varphi_k(y)\}\,d\gamma(x,y)
  \right|\\
  &\le w_k\int d_0(x,y)\,d\gamma(x,y)\\
  &\le nw_k\|d_0\|_{L^\infty(\gamma)}.
\end{align*}
The last step uses the coupling's total mass $n$.
Taking the infimum over couplings gives
\[
  |\delta_k|
  \le nw_k
  \Winf(\widetilde{\bar\mu}_P,\widetilde{\bar\mu}_Q).
\]
On $\mathcal E$, the interval for coordinate $k$
also gives
\[
  |\delta_k|
  \ge
  \bigl(
    |\hdelta_k|-\hc\,\homega_k/\sqrt{\Neff}
  \bigr)_+.
\]
Combine these inequalities and maximize over $k$.
\end{proof}

This certificate concerns separation of the padded
mean counting measures. It uses the upper Lipschitz
bound and requires no lower-distortion assumption.
A positive certificate establishes separation in
the transport metric. A zero certificate does not
establish equality of the mean measures or the
diagram laws.

\subsection{A lower bound for identifying one changed coordinate}
\label{app:lower-bound}
\begin{proposition}[Lower bound for localization]\label{prop:lower}
Let $\nu$ have $\ell\le n$ landmarks with pairwise disjoint balls and
centers $p_k$ in the frame above the diagonal, and let $q\in(0,\tfrac12]$
and $0<h\le\min\{q,(1-q)/2\}$. For $j\in\{1,\dots,\ell\}$ let $P_j$ be the
law of the diagram that contains the center $p_k$ independently with
probability $q$ for $k\neq j$ and with probability $q+h$ for $k=j$, and
let $Q=P_0$ be the law with probability $q$ at every center. Then
the contrast vector, defined by
$\delta^{(j)}_k=\E_{P_j}\Phi_k(X)-\E_Q\Phi_k(Y)$,
is supported on coordinate $j$, with
$h/\sqrt{2q(1-q)}\le\delta^{(j)}_j/\omega_j\le\sqrt2\,h/\sqrt{q(1-q)}$
for every $m,m'$, and any estimator $\hat\jmath$ of the localized
coordinate from $m$ draws of $P_j$ and $m'$ draws of $Q$ satisfies, for
$\ell\ge8$,
\[
  \max_{j\le\ell}\Prob_j\{\hat\jmath\neq j\}\ge\tfrac12
  \quad\text{whenever}\quad
  \frac{\delta^{(j)}_j}{\omega_j}\le\frac18\sqrt{\frac{\log\ell}{m}} .
\]
\end{proposition}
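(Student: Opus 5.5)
The plan has three parts. First I compute the contrast vector and the variance ratio directly. Then I prove the identification bound by a Fano-type argument over the $\ell$ hypotheses $P_1,\dots,P_\ell$ with common $Q$. Finally I convert the resulting Kullback--Leibler condition into the stated bound on $\delta^{(j)}_j/\omega_j$.

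\emph{Contrast and scale.} Under $P_j$ a diagram is a random subset of the centers: center $p_k$ is present independently with probability $q_k$, where $q_j=q+h$ and $q_k=q$ otherwise. The balls are pairwise disjoint, so center $p_{k'}$ lies outside $\mathcal B^\circ_k$ for $k'\ne k$, while $p_k$ itself has response $\varphi_k(p_k)=w_kr_k$. Hence $\Phi_k(X)=w_kr_k\,\mathbf 1\{p_k\in X\}$. This gives $\delta^{(j)}_k=w_kr_k(q_k-q)$, which is zero for $k\ne j$ and equals $w_jr_jh$ for $k=j$. The condition $\ell\le n$ ensures these diagrams lie in $\D n$.

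For the scale, \eqref{eq:gamma} gives
\[
\omega_j^2=w_j^2r_j^2\Bigl\{\tfrac{m'}N(q+h)(1-q-h)+\tfrac mN q(1-q)\Bigr\}.
\]
This is a convex combination of $(q+h)(1-q-h)$ and $q(1-q)$. Using $h\le q$, $h\le(1-q)/2$ and $q\le 1/2$, I will check that $(q+h)(1-q-h)$ lies in $[q(1-q)/2,\,2q(1-q)]$. The lower end follows from $q+h\ge q$ and $1-q-h\ge(1-q)/2$. The upper end follows from $q+h\le 2q$ and $1-q-h\le 1-q$. Dividing $w_jr_jh$ by $\omega_j$ then yields the two-sided bound $h/\sqrt{2q(1-q)}\le\delta^{(j)}_j/\omega_j\le\sqrt2\,h/\sqrt{q(1-q)}$, uniformly in $m,m'$.

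\emph{Fano step.} The $Q$-sample has the same law under every hypothesis, so it carries no information about $j$. The data law under hypothesis $j$ is $P_j^{\otimes m}\otimes Q^{\otimes m'}$. I will apply Fano's inequality in its mixture form, with reference measure $P_0^{\otimes m}\otimes Q^{\otimes m'}$. This gives
\[
\max_j\Prob_j\{\hat\jmath\ne j\}\ge 1-\frac{\max_j m\,\mathrm{KL}(P_j\|P_0)+\log2}{\log\ell}.
\]
By independence across centers, $\mathrm{KL}(P_j\|P_0)$ equals the Bernoulli divergence $\mathrm{KL}(\mathrm{Ber}(q+h)\|\mathrm{Ber}(q))$. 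The $\chi^2$ bound then gives
\[
\mathrm{KL}(P_j\|P_0)\le\frac{h^2}{q(1-q)}.
\]

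\emph{Closing, and the main obstacle.} Suppose $\delta^{(j)}_j/\omega_j\le\tfrac18\sqrt{\log\ell/m}$. The lower bound on this ratio from the first step gives $h^2/(2q(1-q))\le\log\ell/(64m)$, so $m\,\mathrm{KL}\le\log\ell/32$. The error bound becomes $1-1/32-\log2/\log\ell$, and this is at least $1/2$ once $\log\ell\ge 2\log 2\cdot(32/15)$, roughly $\ell\ge 20$. The constant bookkeeping is the step I expect to be hardest, because the statement claims the bound already for $\ell\ge8$. For $8\le\ell<20$, Fano with this $\log 2$ slack falls short. There I would instead use the sharper Fano bound with $\log(\ell-1)$ and the binary entropy, or the Birgé/Tsybakov version of the multiple-testing bound, whose constants give error at least $1/2$ when the average KL is at most $\log\ell/32$ and $\ell\ge8$. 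Checking that this version applies as stated is the point to verify carefully.
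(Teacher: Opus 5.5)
Your argument is correct and complete as it stands; the ``main obstacle'' comes from an arithmetic slip. The condition $1-\frac1{32}-\frac{\log 2}{\log\ell}\ge\frac12$ is equivalent to $\log\ell\ge\frac{32}{15}\log 2$, that is, $\ell\ge 2^{32/15}\approx4.4$. It is not $\log\ell\ge2\log2\cdot\frac{32}{15}$. For $\ell\ge8$ you have $\log2\le\frac13\log\ell$, so your bound already gives $\max_j\Prob_j\{\hat\jmath\ne j\}\ge1-\frac1{32}-\frac13>\frac12$. You should drop the unverified appeal to a sharper Fano or Birg\'e bound, since you do not need it.

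Your contrast and scale computations coincide with the paper's, but the identification step takes a different route.

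The paper discards the $Q$-sample and uses Fano in the pairwise form of Yu's Lemma~3. It must therefore bound the divergence between $P_j^{\otimes m}$ and $P_{j'}^{\otimes m}$, which differ at two centers. This gives two Bernoulli terms, one with denominator $(q+h)(1-q-h)$. After using $1-q-h\ge(1-q)/2$ it reaches $\beta\le4mh^2/\{q(1-q)\}$, and it then uses $\ell\ge8$ (via $\log2\le\frac13\log\ell$) to absorb the $\log2$ term.

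You use the mutual-information form of Fano with reference law $P_0^{\otimes m}\otimes Q^{\otimes m'}$. This needs only the single divergence $\mathrm{KL}(\mathrm{Ber}(q+h)\,\|\,\mathrm{Ber}(q))\le h^2/\{q(1-q)\}$, which is a factor four smaller. It also handles the $Q$-sample automatically, because that factor contributes zero divergence, so no separate discarding argument is needed. The cost is invoking the fact that the average divergence to any reference measure bounds the mutual information. The gain is extra slack: your conclusion holds for any $\ell\ge5$, or with a larger constant than $\frac18$ in the hypothesis.
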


\begin{proof} A point at the center $p_k$ lies in the $k$-th ball only, with response
$a_k=w_kr_k$ by \eqref{eq:embedding}. Under $P_j$ the coordinates are
therefore $\Phi_k(X)=a_k\zeta_k$ with independent
$\zeta_k\sim\mathrm{Bern}(q)$ for $k\neq j$ and
$\zeta_j\sim\mathrm{Bern}(q+h)$; each diagram carries at most one point
per ball, so at most $\ell\le n$ points, and lies in $\D{n}$. Hence
$\delta^{(j)}_k=0$ for $k\neq j$, $\delta^{(j)}_j=h\,a_j$, and
$\omega_j^2=a_j^2\{\tfrac{m'}{N}(q+h)(1-q-h)+\tfrac mN q(1-q)\}$. For
$h\le\min\{q,(1-q)/2\}$, $(q+h)(1-q-h)\in[\tfrac12q(1-q),\,2q(1-q)]$, so
$\omega_j^2\in[\tfrac12,2]\,a_j^2q(1-q)$ for every $m,m'$, which gives the
stated bounds on $\delta^{(j)}_j/\omega_j$. The $Q$-sample has the same
law under every $j$ and can be discarded. For $j\neq j'$ the laws $P_j$
and $P_{j'}$ differ in coordinates $j$ and $j'$ only, so
\[
  \mathrm{KL}(P_j^{\otimes m}\,\|\,P_{j'}^{\otimes m})
  = m\bigl[\mathrm{KL}(\mathrm{Bern}(q+h)\|\mathrm{Bern}(q))
          +\mathrm{KL}(\mathrm{Bern}(q)\|\mathrm{Bern}(q+h))\bigr]
  \le \frac{2mh^2}{q(1-q-h)}\le\frac{4mh^2}{q(1-q)},
\]
by $\mathrm{KL}(\mathrm{Bern}(a)\|\mathrm{Bern}(b))\le(a-b)^2/(b(1-b))$.
Fano's inequality in the form of \citet[proof of Lemma~3]{Yu1997}, for
$\ell$ hypotheses with pairwise divergence at most $\beta$, gives
$\max_j\Prob_j\{\hat\jmath\neq j\}\ge1-(\beta+\log2)/\log\ell$, which is
at least $\tfrac12$ as soon as $\beta\le\tfrac12\log\ell-\log2$. For
$\ell\ge8$, $\log2\le\tfrac13\log\ell$, so $\beta\le\tfrac16\log\ell$
suffices, and with $\beta=4mh^2/(q(1-q))$ this holds when
$h/\sqrt{q(1-q)}\le\sqrt{\log\ell/(24m)}$. The hypothesis
$\delta^{(j)}_j/\omega_j\le\tfrac18\sqrt{\log\ell/m}$ and the lower bound
$\delta^{(j)}_j/\omega_j\ge h/\sqrt{2q(1-q)}$ give
$h/\sqrt{q(1-q)}\le\sqrt{\log\ell/(32m)}$, which is enough.

\end{proof}

This construction uses diagrams with at most $\ell\le n$ points
and a fixed family of disjoint neighborhoods, so it lies within
the model of Section~\ref{sec:setup}. The $Q$-sample carries no
information about the changed coordinate, which is why the
bound depends on $m$ alone. With balanced groups and occupancy
$q$ bounded away from zero, it has the same
$\sqrt{\log\ell/m}$ dependence as the sufficient threshold in
Proposition~\ref{prop:detect}, up to constants. The construction
is informative only when $h\le q$ can be of order
$\sqrt{q\log\ell/m}$, which requires
$q\gtrsim\log\ell/m$.

The result concerns identification of one changed coordinate
from the inference sample. It does not establish minimax
recovery of a spatial support with overlapping or learned
neighborhoods. If a pilot drawn from $P_j$ is available to the
estimator, its information must also enter the lower-bound
experiment. Conditioning an upper bound on the pilot does not
remove that information.

\subsection{Bias of the plug-in separation}
\label{app:bias}

Let $U_i$ and $V_j$ be independent samples of feature vectors
with finite second moments. Write $\mu_P=\E U_1$,
$\mu_Q=\E V_1$, $\delta=\mu_P-\mu_Q$, and let their covariance
matrices be $\Sigma_P$ and $\Sigma_Q$. Independence gives
\[
  \E\|\bar U\|^2
  =\frac1m\E\|U_1\|^2
   +\left(1-\frac1m\right)\|\mu_P\|^2
  =\|\mu_P\|^2+\frac{\tr\Sigma_P}{m},
\]
with the analogous identity for $\bar V$, and
$\E\langle\bar U,\bar V\rangle=\langle\mu_P,\mu_Q\rangle$.
Therefore
\[
  \E\|\bar U-\bar V\|^2
  =\|\delta\|^2
   +\frac{\tr\Sigma_P}{m}
   +\frac{\tr\Sigma_Q}{m'}.
\]
This is the bias of the V-statistic MMD estimator with the
linear kernel on $\Phi$ \citep{Gretton2012}. It explains why
maximizing empirical separation over configurations can reward
variance as well as a population difference.

For a fixed family $\nu$, the band instead uses the
coordinatewise estimates $\hdelta_k$, which are unbiased.
Fitting or screening a new family on the inference observations
requires additional justification. Selecting which original
intervals to display after simultaneous calibration preserves
their coverage guarantee; see Corollary~\ref{cor:fwer} and
Remark~\ref{rem:pilot}.

%% file: Appendix_Part_II_Empirical.tex
\clearpage
\section*{Part II: Implementation and empirical studies}

\section{Implementation details}
\label{app:implementation}

The diagram is the sampling unit throughout the proposed analysis.
This section describes calibration, pilot selection, and the
comparators. Appendix~\ref{app:budget} gives the fixed-budget
simulations, Appendix~\ref{app:mol} the molecular analyses, and
Appendix~\ref{app:results} the additional simulation studies.

\subsection{Calibration and resampling}
\label{app:calibration-implementation}

The Gaussian multiplier band uses the separately centered group
responses in Section~\ref{sec:method}. Each multiplier is shared
across all coordinates of a diagram. Normalizing factors are
held fixed across draws. Fixed and pilot-based factors are
covered by Theorem~\ref{thm:band}; inference-based factors are
covered under the conditions in Appendix~\ref{app:studentization}.
Step-down testing uses the same draws and factors at every step.

For label permutation, entire diagrams are reassigned between
groups while preserving group sizes. Label-dependent
normalization is recomputed for each allocation. This gives a
test under the exchangeable null $P=Q$. The weaker coordinate
mean null can hold with unequal group laws and does not, by
itself, justify permutation calibration. The numbers of
replications and resampling draws are stated with each study.

\subsection{Pilot selection and image comparators at a fixed budget}
\label{app:pilot-implementation}

The pilot procedures in Appendix~\ref{app:budget} reserve
$\lfloor T_P/3\rfloor$ and $\lfloor T_Q/3\rfloor$ observations
from the two groups. Each pilot group is split into fitting and
validation samples of approximately equal size. The adaptive
family uses $100$ class-aware farthest-point centers fitted on
the fitting sample. Coordinates active in fewer than $5\%$ of
fitting diagrams are removed. Its validation normalization uses
the planned inference allocation and the floor
$0.05\,nw_kr_k$:
\[
  \homega_k=\max\{\tilde\omega_k,\,0.05\,nw_kr_k\}.
\]
Each adaptive center uses the base radius $0.05$, so its
radii are $\min\{0.05h,\pi(p)\}$ for the same multipliers $h$
as F1.

For each candidate radius multiplier $h$, let
$\widetilde\Sigma_{P,h}$ and $\widetilde\Sigma_{Q,h}$ be the
within-group empirical covariance matrices of the retained
validation responses. With planned inference sizes $m,m'$ and
$N=m+m'$, set
\[
  \widetilde\Gamma_h
  =\frac{m'}{N}\widetilde\Sigma_{P,h}
   +\frac{m}{N}\widetilde\Sigma_{Q,h}.
\]
Conditional on the pilot, the unnormalized Gaussian multiplier
vector has law $G_h^*\sim\mathcal N(0,\widetilde\Gamma_h)$.
We estimate $\tilde c_\alpha(h)$ from $1000$ independent draws
of
\[
  \max_{k\in I_h}\frac{|G_{h,k}^*|}{\tilde\omega_{h,k}},
\]
where $\tilde\omega_{h,k}$ denotes the normalization factor
used for that candidate, including its applicable floor.
The validation observations estimate the covariances; the
planned inference allocation supplies their weights and
$\Neff$ in the score \eqref{eq:score}. Ties favor fewer
coordinates and then the smaller radius.

The fixed-budget permutation maxima use $1000$ permutations,
counting the observed labeling. The per-coordinate permutation
tests followed by Holm adjustment use $10{,}000$ permutations,
because Holm's smallest threshold, $\alpha/\ell$, is below
$1/1001$.

The default persistence images use constant point weights and
an untruncated Gaussian kernel with standard deviation $0.05$
in birth--persistence coordinates. This is one pixel for I1
on its $20\times20$ grid and two pixels for ML on its
$40\times40$ grid. The lower triangle of pixels is retained.
I1 applies the multiplier band. The ML implementation follows
\citet{MoonLazar2023}: filtering at the $80$th percentile of
pooled standard deviation, pooled $t$-tests, and
Benjamini--Hochberg adjustment. Its target is FDR.
The resolution variants in study S5 instead use a one-pixel
kernel at every resolution, so they change both pixel size and
smoothing. The older image comparator in
Appendix~\ref{sec:exp-baseline} uses different weights and a
truncated kernel, as specified there.

\subsection{The survival-based comparator}
\label{app:strand}

We implement the survival-based comparator from the description
in \citet{STRAND}. Each off-diagonal point contributes one
fully observed event time, its lifetime $d-b$. Lifetimes are
pooled within groups and compared by a log-rank statistic.
There is no censoring. At each distinct lifetime $t$, the risk
set contains the pooled points with lifetime at least $t$,
and the events are those with lifetime exactly $t$. We use
$(O_A-E_A)^2/V$, with the usual hypergeometric variance, and
compare two calibrations.

\paragraph{Pooled asymptotic calibration.}
The statistic is compared with a $\chi^2_1$ distribution,
treating pooled lifetimes as independent, as in
\citet{STRAND}. The effective sample size is then a feature
count. Dependence among features of the same diagram is not
accounted for. In the neuroimaging example of \citet{PaperIII},
this pooled $p$-value can be two orders of magnitude too small.

\paragraph{Diagram-level permutation calibration.}
We permute diagram labels and recompute the statistic, retaining
the diagram as the sampling unit. Permutation validity holds
under $P=Q$. In Appendix~\ref{sec:exp-gap}, lifetimes are
computed once per replication and only diagram labels are
permuted. Both calibrations depend solely on the lifetime
multisets, so Proposition~\ref{prop:invariance} applies to both.

\subsection{Selection and the inference sample}
\label{app:selection-implementation}

\begin{remark}[Family selection and selected reporting]
\label{rem:pilot}
Selecting a new family on the inference sample, for example by
retaining the largest $|Z_k|$ values, changes the distribution
of the retained statistics. Recalibrating only that family on
the same observations is not covered by the fixed-family
argument. In Appendix~\ref{sec:exp-arms}, this procedure raises
FWER to as much as five times its nominal level. Maximizing
empirical separation can also favor configurations with larger
variance, as the bias calculation in Appendix~\ref{app:bias}
shows.

An independent pilot separates this choice from inference, and
the cross-fitting construction of Section~\ref{sec:method} does so
with every observation used once for inference.
This is different from selecting intervals for display after
calibrating a band over the full family. Such selected
reporting retains the original simultaneous guarantee, as
shown in Corollary~\ref{cor:fwer}.
\end{remark}

The older cross-fitting experiment appears in
Appendix~\ref{app:arms}. The additional CV experiment in
Appendix~\ref{app:cv-screening} specifies the inner selection
folds, the outer inference samples, and the calibration errors
for its two bands.

\section{Fixed-budget studies}
\label{app:budget}

These studies support Section~\ref{sec:experiments}. Each
replication draws a fixed total number of diagrams from each
group, and all procedures use the same generated samples.
Observations reserved for a pilot are included in that budget.
The studies are S1 (calibration under strict and weak nulls),
S2 (lifetime-changing and lifetime-preserving displacements),
S3 (radius against the displaced-feature certificate),
S4 (two features at different scales, a prevalence change,
and continuous jitter) and S5 (sensitivity at $80$ per class).
The CV study at the end of this section uses a separate
generator and states its own allocation and calibration.

\subsection{Simulation design and evaluation}
\label{app:budget-design}

\paragraph{Main displacement designs.}
Background points are uniform on the triangle
$0\le b<d\le1$, generated by sorting two independent uniforms.
Each diagram has $20$ background points and a feature present
with probability $q=0.5$. Its template is $a=(0.30,0.50)$ under
$P$ and either $b=a+(0,\Delta)$ or $b=a+(\Delta,\Delta)$ under
$Q$. The first shift changes lifetime; the second preserves
it. When present, the feature has jitter $\epsilon u$, with
$\epsilon=0.02$ and $u$ uniform on $\{-1,0,1\}^2$.

The clustered weak null uses four boxes of half-width $0.025$
centered at $(0.15,0.40)$, $(0.35,0.65)$, $(0.60,0.80)$, and
$(0.15,0.80)$. Under $P$, each of ten points independently
chooses a box. Under $Q$, all ten choose the same box within
a diagram. Both groups also have ten uniform background
points and the same feature process at $a$. In the counts
weak null, $P$ has $20$ background points and $Q$ has $0$ or
$40$ with equal probability, again with a common feature
process. These designs preserve the mean counting measure
while changing response covariances.

\paragraph{Families and allocation.}
F0 uses the $105$-center grid
$\{(i/14,j/14):0\le i<j\le14\}$ with
$r_k=\min\{0.05,\pi(p_k)\}$. F1 uses the same centers at
$r_{k,h}=\min\{0.05h,\pi(p_k)\}$ for
$h\in\{1/2,1/\sqrt2,1,\sqrt2,2\}$. Identical coordinates
created by the radius cap are retained once. F0 and F1 use
all observations for inference, with
$\homega_k=nw_kr_k$.

F2 chooses a radius multiplier on a pilot. A1 fits class-aware
farthest-point centers and uses all five multipliers; A2 also
selects the multiplier. These procedures reserve approximately
one third of each group for the pilot. Fitting, screening,
validation normalization, and the score calculation are
specified in Appendix~\ref{app:pilot-implementation}. The
comparators are F1-perm, I1, and ML, with image and permutation
settings given there.

\paragraph{Population targets and null coordinates.}
The common background cancels in the mean contrast. For any
additive map in the discrete-jitter design,
\[
  \delta_k=\frac q9
  \sum_{u\in\{-1,0,1\}^2}
  \{\varphi_k(a+\epsilon u)-\varphi_k(b+\epsilon u)\}.
\]
A $200{,}000$-draw Monte Carlo check gave a largest absolute
error of $4.5\times10^{-5}$ for targets up to $0.035$.
A landmark coordinate is classified as null when
$|\delta_k|\le10^{-12}r_k$, to allow for floating-point
cancellation. An image pixel is classified as null only when
$\delta_k=0$ exactly, because the Gaussian kernel is
untruncated. Under alternatives, its nonzero tails can leave
few or no exactly null pixels; FWER over those pixels must be
interpreted with that null set in mind.

For continuous jitter, $u$ is uniform on $[-1,1]^2$.
Targets use a $201^2$-node midpoint rule, agreeing with
$401^2$ nodes to $10^{-7}$. Null landmark coordinates are those
whose response squares avoid every changed support box.

For landmark methods, a change is located when at least one
rejected coordinate has nonzero population contrast, radius
at most $0.10$, and a center within $0.05$ of the changed
support, in the one-point bottleneck distance $\dB$ to the
support points $a+\epsilon u$ and $b+\epsilon u$ of the
displaced feature. For image methods, each pixel is treated as
a coordinate whose center is the pixel midpoint and whose
radius is half the pixel width, and the same rule applies.
Simultaneous coverage refers to all coordinates of
the calibrated band; FWER concerns rejection of at least one
null coordinate.

\subsection{Calibration and normalization}
\label{app:budget-calibration}

Coverage of every band lies between $0.94$ and $0.98$ in
studies S1 and S4. S1 coverage is reported below; the main-text
table reports the corresponding null rejection rates.
With no feature present, every procedure holds its nominal
level in the reported simulations.

\begin{table}[!htbp]\centering\small
\caption{S1: simultaneous coverage of the bands.}
\IfFileExists{tables/neigh_s1_cover.tex}{\input{tables/neigh_s1_cover.tex}}{}
\end{table}

\paragraph{Estimated normalization.}
The normalization study reruns F0 and F1 on the diagrams from
S1 and from S2 at $40$ per class, paired by replication.
It replaces the fixed factors by the groupwise estimate
\eqref{eq:groupwise-normalization}, the pooled estimate
\eqref{eq:pooled-normalization}, or one of two hybrids.
The first two use $\eta_N=10^{-6}nw_kr_k$; the hybrids floor
the groupwise SD at $\kappa nw_kr_k$ for
$\kappa\in\{0.01,0.005\}$. Each cell uses $1000$ replications.

A further design introduces substantial variance differences
across coordinates. Under its null, the two groups have the
same law: $20$ uniform background points; an additional $20$
points in a box of half-width $0.05$ centered at $(0.60,0.85)$
with probability $1/2$; and a feature always present at the
template ($q=1$). The unequal variances here are across
coordinates, rather than between groups under the null.
Under fixed normalization, the box coordinates dominate the
maximum and its critical value. The associated change at the
template is never located (Table~\ref{tab:norm-located}). The hybrid floors do not repair
the groupwise estimate under imbalance
(Table~\ref{tab:norm-fwer}).

\begin{table}[!htbp]\centering\small
\caption{Estimated normalization: FWER at the null, $1000$ replications
per cell. Fixed: $nw_kr_k$; group: groupwise SD; pooled: pooled SD;
hyb.: groupwise SD floored at $1\%$ or $0.5\%$ of $nw_kr_k$. Bold: the
$95\%$ Wilson interval lies above $0.05$.}
\label{tab:norm-fwer}
\IfFileExists{tables/neigh_norm_fwer.tex}{\resizebox{\textwidth}{!}{\input{tables/neigh_norm_fwer.tex}}}{}
\end{table}

\begin{table}[!htbp]\centering\small
\caption{Estimated normalization: probability of locating the change,
$T$ diagrams per class, columns as in Table~\ref{tab:norm-fwer}.}
\label{tab:norm-located}
\IfFileExists{tables/neigh_norm_located.tex}{\resizebox{\textwidth}{!}{\input{tables/neigh_norm_located.tex}}}{}
\end{table}

\subsection{Localization and geometric checks}
\label{app:budget-localization}

The following tables give localization and false-rejection
rates omitted from the main text. Under alternatives, FWER
over exactly null coordinates is at most $0.044$ for the bands
in S2 and $0.048$ in S3--S5. In the corresponding FWER
comparisons, ML stays at or below $0.04$ except for the
one-pixel kernel on the $40\times40$ image in S5, where its
rate is $0.158$. Its nominal target is FDR, so these values
are not comparisons of methods with identical error targets.

\begin{table}[!htbp]\centering\small
\caption{S2: probability of locating the change.}
\IfFileExists{tables/neigh_s2.tex}{\resizebox{\textwidth}{!}{\input{tables/neigh_s2.tex}}}{}
\end{table}

\begin{table}[!htbp]\centering\small
\caption{S2: FWER over exact nulls under alternatives.}
\IfFileExists{tables/neigh_s2_fwer.tex}{\resizebox{\textwidth}{!}{\input{tables/neigh_s2_fwer.tex}}}{}
\end{table}

\begin{table}[!htbp]\centering\small
\caption{S3 and S4: probability of locating the change (for two features,
any component; per-component rates are in the replication files).}
\IfFileExists{tables/neigh_s3.tex}{\resizebox{\textwidth}{!}{\input{tables/neigh_s3.tex}}}{}

\medskip
\IfFileExists{tables/neigh_s4.tex}{\resizebox{\textwidth}{!}{\input{tables/neigh_s4.tex}}}{}
\end{table}

\paragraph{Spatial extent and sufficient radii.}
We check grids of $55$, $105$, and $210$ centers, both shift
types, $\Delta\in\{0.05,0.10,0.15\}$, and
$\epsilon\in\{0.01,0.02,0.04\}$. Each of the $2395$
coordinates with nonzero exact contrast has its center within
its radius of the changed support. Every sampled point of
its square lies within twice its radius, consistent with
Proposition~\ref{prop:extent}.

For the $210$-center grid, we also examine $60$ radii in
$[0.002,0.15]$, lifetime-changing shifts from $a$, the same
$\Delta$ and $\epsilon$ values, and $q\in\{0.25,0.5,1\}$.
In all $1332$ certified center--radius cases, the lower bound
in Theorem~\ref{thm:local} is no larger than the exact contrast.
These are numerical checks of the geometric statements, not
additional coverage guarantees. The script
\texttt{experiments/check\_geometry.py} reproduces both checks
and the Monte Carlo comparison of population targets.

\subsection{Sensitivity analyses}
\label{app:budget-sensitivity}

Study S5 uses $80$ diagrams per class and $500$ replications
per cell. Its variants share generated diagrams, so results
across variants are paired. In the shared null cell, F1 has
FWER $0.068$, compared with $0.048$ over $1000$ independent
runs at the same sample size in S1. Moving the template off
the grid to $(0.31,0.51)$ lowers F0's localization to
$0.93$--$0.99$.

The per-coordinate permutation references have null rejection
rates of $0.036$--$0.044$ on the grid and $0.026$--$0.028$ on
the image. At $\Delta=0.10$, they locate the change in $97\%$
and $64$--$66\%$ of runs, respectively. The S5 tables include
the remaining sensitivity variants. Their image-resolution
comparisons use the one-pixel bandwidth described in
Appendix~\ref{app:pilot-implementation}.

\begin{table}[!htbp]\centering\small
\caption{S5 sensitivity at $80$ per class, $500$ replications: probability
of locating the change (top) and FWER over exact nulls (bottom), by
variant and $\Delta$.}
\IfFileExists{tables/neigh_s5.tex}{\resizebox{\textwidth}{!}{\input{tables/neigh_s5.tex}}}{}

\medskip
\IfFileExists{tables/neigh_s5_fwer.tex}{\resizebox{\textwidth}{!}{\input{tables/neigh_s5_fwer.tex}}}{}
\end{table}

\subsection{Cross-validated region and radius selection}
\label{app:cv-screening}

This experiment examines whether spatial selection can compensate for
the observations reserved for selection. We use $100$ prespecified
centers in $25$ disjoint regions, each containing a $2\times2$ block of
four centers. Every center has radii
$0.002$, $0.004$, $0.008$, $0.012$ and $0.018$, with weights $w_k=1/r_k$.
The radius cap holds throughout. Each diagram has independent
$\operatorname{Binomial}(8,0.35)$ multiplicity at each center and
$\operatorname{Binomial}(8,0.5)$ multiplicities at four horizontal
offsets, $0.003$, $0.006$, $0.010$ and $0.015$. Supports at different
centers are disjoint. Under the null, both groups have the same law.
Under the alternatives, the central probability in the second group
increases to $0.415$ at either one center or all four centers of one
region. The affected mean contrasts are $-0.52$ at every radius.
Larger radii add background variation, so the smallest radius has the
highest population signal-to-noise ratio. The four-center alternative
has four times the total mean change of the one-center alternative.

\paragraph{Selection score.}
Let $G_j$ contain the four centers of candidate region $j$, and let
$\Phi_{c,h}$ denote the response at center $c$ and radius $r_h$.
For selection only, use the regional average
$A_{j,h}(D)=\frac14\sum_{c\in G_j}\Phi_{c,h}(D)$.
The outer selection sample contains $m_S$ diagrams per group.
Partition these observations into three folds within each group, with
matching fold sizes $m_v$. All splits are independent of the observed
responses. Let $\widehat\Delta_{j,h}^{(-v)}$ and
$\widehat\Delta_{j,h}^{(v)}$ be the training and validation mean
contrasts of $A_{j,h}$, respectively. Compute the training pooled SD
$\widehat s_{j,h}^{(-v)}$ as the square root of the average of the two
unbiased group variances. Set $\widehat u_{j,h}^{(-v)}=1$ when the
training contrast is nonnegative and $-1$ otherwise. The score is
\[
  \widehat S(j,h)=\sum_{v=1}^{3}\frac{m_v}{m_S}
  \frac{\widehat u_{j,h}^{(-v)}\widehat\Delta_{j,h}^{(v)}}
       {\max\{\widehat s_{j,h}^{(-v)},10^{-8}\}}.
\]
Select a maximizing pair $(\widehat j,\widehat h)$, breaking ties in
a fixed candidate order. Every quantity in this score uses only the
outer selection sample. The final inference sample enters neither
training nor validation. Region/radius selection retains the four
individual contrasts indexed by $G_{\widehat j}$ at $r_{\widehat h}$;
the regional average itself is not tested. The radius-only comparison
retains every center at this same selected radius. It is an ablation
of spatial selection, not a separately optimized radius-only score.

\paragraph{Sample allocation and validity.}
We use $\alpha=0.05$. All methods receive $120$ diagrams per group. The pilot methods use
$40$ per group for the inner CV and the remaining $80$ for inference
at level $\alpha$. The cross-fitted methods split each group into
halves of $60$. One half supplies all three inner CV folds; the
other supplies inference at level $\alpha/2$. Swapping the halves
gives a second selected family and a second band. The selection SDs
are used only in the score; individual-coordinate SDs for calibration
are computed from the relevant inference sample.

Conditional on its own selection half, each inference family is fixed
and its inference observations remain independent of the selection
data. Let $E_f$ be failure of simultaneous coverage in outer direction
$f$. Under the conditions for the band, suppose, for deterministic
error bounds $\varepsilon_f$, that
$\Pr(E_f\mid\text{selection half }f)\le\alpha/2+\varepsilon_f$.
Taking expectations and applying a union bound gives
\[
  \Pr(E_1\cup E_2)\le\alpha+\varepsilon_1+\varepsilon_2.
\]
Thus the two families have joint coverage, and their union of
discoveries has family-wise error control, up to these calibration
errors. Independence of $E_1$ and $E_2$ is not required. We retain both
original interval families with their fold identities, including when
a coordinate is selected twice. We do not construct a pooled
full-sample interval after selection.

The full-sample comparison tests all centers at all radii jointly.
An oracle knows the changed region but tests all five radii there.

We use $1000$ paired replications and $999$ resampling draws.
The multiplier bands use the groupwise SD
$\{(s_{P,k}^2+s_{Q,k}^2)/2\}^{1/2}$, floored at $10^{-8}$.
The selected family is held fixed throughout inference resampling.
The script \texttt{localized\_cv\_simulation.py} in the supplementary
material writes every cell; Table~\ref{tab:cv-screening} is generated
from its output.

\begin{table}[t]
\centering
\small
\caption{Additional selection experiment with $25$ regions and $120$
diagrams per group. Entries are percentages from $1000$ replications.
Power is the probability that the multiplier band rejects at least one
nonzero contrast. Columns labeled $1$ and $4$ refer to the number of
changed centers. Coverage concerns all original intervals reported by a
method, including both cross-fitted families.}
\label{tab:cv-screening}
\IfFileExists{tables/cv_selection.tex}{\input{tables/cv_selection.tex}}{}
\end{table}

For four changed centers (Table~\ref{tab:cv-screening}), cross-fitted
region and radius selection gains $12.8$ percentage points over joint
testing of all radii, with paired $95\%$ Monte Carlo interval
$[9.9,15.7]$ points; pilot region and radius selection gains $1.0$
point, $[-2.3,4.3]$. The correct region is selected in $93.7\%$ of
the outer selection halves, compared with $11.7\%$ for an isolated
change, where every selection method loses power ($3.2$--$10.3\%$
against $25.9\%$). Thus the gain
depends on reliable regional selection and does not extend to every
spatially concentrated alternative. The regions are prespecified and
the score favors changes with the same sign. Opposing changes can
cancel in the selection score. At $240$ diagrams per group, joint
testing reaches $99.3\%$ for four changed centers and $68.9\%$ for
one, and the selection methods $89$--$100\%$ and $9$--$30\%$.
Multiplier coverage is below its
nominal level in several cells ($91.4$--$94.7\%$), so this example
supports a possible selection advantage for testing rather than
validated band calibration in this larger design. The Monte Carlo intervals quantify simulation
error within this design, not uncertainty about performance in other
settings.

\clearpage
\section{Molecular application}
\label{app:mol}

The primary analysis is the deduplicated reanalysis reported in
Section~\ref{sec:exp-mol}. The earlier analysis is retained at
the end of this section with its own preprocessing and
landmark settings. The two sets of numerical results should
be read under their respective specifications.

\subsection{Data and preprocessing}
\label{app:mol-data}

The supplementary manifest records TUDataset files for MUTAG,
DHFR, COX2, and PTC\_MR, together with checksums, class counts,
and graph sizes. Duplicates are exact graph isomorphisms
preserving node labels. MUTAG has $11$ duplicate groups
covering $24$ graphs; DHFR has $35$ covering $72$, including
one group with conflicting outcome labels. COX2 has $2$
covering $4$, and PTC has $15$ covering $31$, including three
with conflicting labels. The primary analysis retains one
graph from each duplicate group and drops conflicting groups.
Versions using all graphs are retained in the replication
files.

The heat-kernel signature uses the full eigendecomposition of
the normalized Laplacian. An edge takes the larger of its
endpoint values. GUDHI's extended persistence output combines
ordinary, relative, and extended pairs using absolute values;
no additional shift is needed. The primary heat-kernel time
is $t=10$ and the primary degree is one. These choices
continue the earlier analysis and were not specified
prospectively. Degree zero is secondary, and COX2 and PTC
provide supporting analyses.

\subsection{Families, splits, and supplementary results}
\label{app:mol-procedures}

The fixed family uses the grid of Section~\ref{sec:exp-budget}
on the fixed $[0,1]$ frame. The adaptive family uses $100$
farthest-point centers, their fitted radii as base radii, a
$5\%$ activity screen on the fitting half, and validation
scales with floor $0.05\,n_{\mathrm{fit}}r_k$. Primary maps
use the original stratified split and $B=5000$. The $20$
stability splits and the learning curves use $B=2000$
(Tables~\ref{tab:chem-learning} and~\ref{tab:mutag-structure}).

\begin{table}[!htbp]\centering\small
\caption{Primary analyses: localized coordinates and global $p$-values
($n_0/n_1$ diagrams per class used by F0 and F1; the pilot methods infer
on two thirds).}
\IfFileExists{tables/chem_primary.tex}{\resizebox{\textwidth}{!}{\input{tables/chem_primary.tex}}}{}
\end{table}

\begin{table}[!htbp]\centering\small
\caption{Stability over $20$ random pilot/inference splits: empty maps,
median signed mesh overlap between pairs of nonempty maps, mean fraction
of a split's localized neighborhoods matched in sign within their radius
by another split, median number localized.}
\IfFileExists{tables/chem_stability.tex}{\input{tables/chem_stability.tex}}{}
\end{table}

\begin{table}[!htbp]\centering\small
\caption{Learning curves: $50$ class-balanced subsamples per size,
degree one.}
\label{tab:chem-learning}
\IfFileExists{tables/chem_curves.tex}{\input{tables/chem_curves.tex}}{}
\end{table}

\begin{figure}[!htbp]
  \centering
  \IfFileExists{figures/map_dhfr_rev.pdf}{\includegraphics[width=0.62\textwidth]{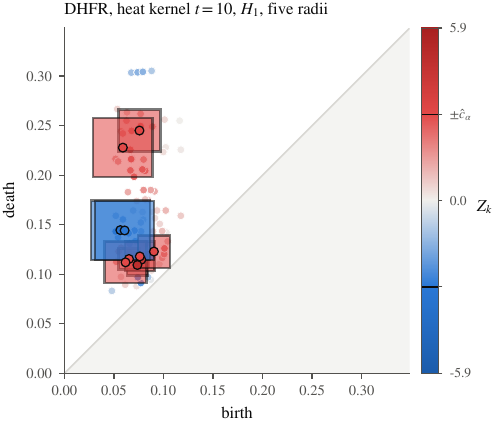}}{}
  \caption{DHFR, heat-kernel signature $t=10$, degree one, A1 on the
  primary split: $20$ localized coordinates at $10$ centers.}
  \label{fig:dhfr-rev}
\end{figure}

\subsection{Mapping localized regions to molecular structure}
\label{app:mol-structure}

Rings are the cycles of a minimum cycle basis. They are
matched to degree-one points by an optimal assignment using
the lowest and highest heat-kernel values over the ring.
Fused systems consist of rings sharing a bond.

Checks on toy graphs show that a single symmetric ring gives
no persistence point because its heat-kernel values coincide.
Two and three linearly fused rings give two and three matched
rings, respectively, together with branchings. Rings sharing
only one atom are not classified as fused. Among the $175$
deduplicated MUTAG graphs, all $486$ basis rings match a
point using the minimum cycle basis, compared with $452$
using NetworkX's cycle basis. Thus the structural reading
depends on the cycle basis as well as on the filtration.
The selected-region summaries are exploratory.

\begin{table}[!htbp]\centering\small
\caption{MUTAG: rings in the localized region of the primary maps
(exploratory).}
\label{tab:mutag-structure}
\IfFileExists{tables/mutag_structure.tex}{\input{tables/mutag_structure.tex}}{}
\end{table}

\subsection{Earlier molecular analysis}
\label{sec:exp-chem}

This analysis predates the deduplicated reanalysis above. We
retain its reported results and preprocessing separately.
MUTAG, COX2, DHFR, and PTC \citep{Morris2020} are binary
activity classifications of small molecules. Following
\citet{PALACE}, graphs were filtered using node labels,
Ollivier--Ricci curvature, or the heat-kernel signature.
Extended persistence diagrams in degrees zero and one were
shifted to be nonnegative, with relative and extended points
reflected above the diagonal. A stratified third of each
dataset supplied the pilot, which placed and screened up to
$K=200$ landmarks. The band used the remaining observations
and $B=5000$.

Stability was assessed over $20$ random splits. The reported
measures were the fraction of one split's localized landmarks
recovered within their own radius by another split and the
agreement of the top landmark (Table~\ref{tab:chem}).
Three datasets yielded localized contrasts, with substantially
greater stability for MUTAG and DHFR than for COX2.

Under the heat-kernel filtration, the degree-zero maps
localized $72$ of $187$ landmarks on MUTAG and $55$ of $192$
on DHFR. In degree one, the counts were $21$ of $145$ and
$30$ of $178$. Another split recovered $89$--$97\%$ of the
localized region, and the top landmark agreed in
$85$--$100\%$ of splits. The earlier MUTAG map contained one
cluster of short-lived cycles, born near $0.06$--$0.09$ and
dying near $0.09$--$0.12$. DHFR contained two clusters
(Figure~\ref{fig:chemmap}).

A degree-one point in an extended graph diagram can
represent an independent cycle or a branching of the
superlevel sets. For a ring, the relevant values are the
lowest and highest heat-kernel values over its atoms.
In this earlier MUTAG analysis, the localized region
contained $82\%$ of rings in fused systems of three or more
rings and $21\%$ of other rings. Mutagenic compounds placed
an average of $2.66$ rings and $2.76$ branchings in the
region, compared with $0.35$ and $1.03$ in the other class.

Among compounds with the same largest ring system, the
classes placed similar mass in the region. In this series,
$91$--$100\%$ of compounds with three or more fused rings
were mutagenic, compared with $29$--$31\%$ of the remaining
compounds. These patterns are consistent with the fused-ring
indicator of \citet{Debnath1991}. Such ring systems also occur
in larger molecules. An exploratory logistic regression
retained an association between mutagenicity and the number
of rings in the region after adjustment for atom count.
Among compounds with $15$--$22$ atoms, mutagenic compounds
placed $1.8$--$2.8$ rings there, compared with $0.4$--$0.7$.
The region was selected using these data, and the pilot
helped construct it, so these summaries do not provide an
independent test of a chemical explanation.

COX2 localized weakly and unstably: another split recovered
$28$--$42\%$ of the region. PTC, for which \citet{PALACE}
also reports weak topological features, gave one marginal
landmark with $p=0.044$ that no other split reproduced.
This provides no stable localized finding, without
establishing equality of the populations. The node-label
filtration on MUTAG gave a strong global rejection, but its
integer-valued coordinates concentrated the landmarks at
three positions and provided little spatial resolution.
The transport lower bound in Corollary~\ref{cor:units} was
positive whenever a map localized a contrast, but was about
$10^{-4}$ of the frame and numerically small.

\begin{table}[!htbp]
\centering\small
\caption{Earlier molecular analysis under a pilot/inference
split. The table reports the global $p$-value, sizes of
the three localization sets, and stability over $20$
random splits: the fraction of one split's localized
region recovered by another and agreement of the top
landmark. These results use the earlier preprocessing
and family construction in Appendix~\ref{sec:exp-chem}.}
\label{tab:chem}
\IfFileExists{tables/chemgraph.tex}{\resizebox{\textwidth}{!}{\input{tables/chemgraph.tex}}}{\todo{tab\_chemgraph.tex}}
\end{table}

\begin{figure}[!htbp]
  \centering
  \IfFileExists{figures/map_chem2.pdf}{\includegraphics[width=0.62\textwidth]{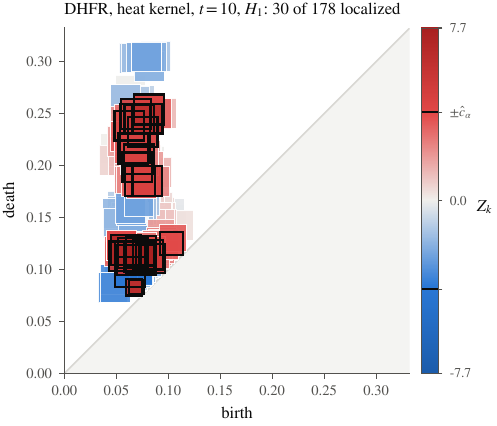}}{\todo{\texttt{\detokenize{figures/map_chem2.pdf}}}}
  \caption{DHFR under the heat-kernel filtration at $t=10$, degree one:
  $30$ of $178$ landmarks localized, in two clusters of cycles born near
  $0.06$--$0.10$, dying near $0.08$--$0.13$ and $0.18$--$0.25$; another
  random pilot/inference split recovers, on average, $94\%$ of the
  region.}
  \label{fig:chemmap}
\end{figure}

\clearpage
\section{Additional simulation studies}
\label{app:results}

These experiments preceded the fixed-budget studies in
Appendix~\ref{app:budget}. We retain their designs and
reported numerical results. Their pilot allocation, jitter,
normalization, and null-coordinate definitions differ from
those used in the main experiments.

\subsection{Design and evaluation}
\label{app:earlier-design}

Except for the cross-fitting arm described below, the pilot
contains $200$ diagrams per class in addition to the $m$
inference diagrams. Thus $m=20$ corresponds to $220$ diagrams
per class in total. Responses are normalized by pilot
variances,
\[
  \homega_k^2
  =\frac{m'}{N}\tilde s_{P,k}^2
   +\frac{m}{N}\tilde s_{Q,k}^2,
\]
instead of deterministic scales. After the radius cap, the
pilot screen removes no landmarks from the two coarser grids,
at most $2$ of $210$, $17$--$22$ of the finest grid's $406$,
and at most $6$ of $400$ pilot-fitted landmarks.

Each diagram has $20$ uniform background points above the
diagonal in $[0,1]^2$ and one feature with Gaussian jitter of
scale $0.02$. Group $A$ has its feature at $(0.30,0.50)$.
In the persistence regime, group $B$ moves it to
$(0.30,0.50+\Delta)$. In the location regime, it moves to
$(0.30+\Delta,0.50+\Delta)$, preserving lifetime.
The calibration and localization studies use the persistence
regime unless stated otherwise.

At $\Delta=0$, every coordinate is null. For $\Delta>0$,
a coordinate is counted as null when its ball, enlarged by
four jitter scales, cannot contain either signal location.
With Gaussian jitter this is a geometric proxy, not an exact
test of $\delta_k=0$. The population contrasts used to
evaluate coverage are estimated from a reference sample of
$20{,}000$ diagrams per class under the same family $\nu$.
Alternative-case error rates below use this geometric
classification. They are distinct from the exact-null error
rates in Appendix~\ref{app:budget}.

The comparisons involving estimated normalization are
finite-sample findings for these designs. Pilot normalization
is not a general requirement or a universal remedy. In
particular, with one coordinate a positive normalization
factor cancels from the interval. The asymptotic conditions
for inference-based normalization are given in
Appendix~\ref{app:studentization}.

\subsection{Level, coverage, and the price of simultaneity}
\label{sec:exp-level}

At $\Delta=0$, FWER is one minus simultaneous coverage.
We use $500$ replications and $B=2000$, with grids of $55$,
$105$, $210$, and $406$ landmarks; between $50$ and $400$
pilot-fitted landmarks; and $m\in\{20,40,60\}$
(Table~\ref{tab:level}, Figure~\ref{fig:critical}).

For the grid, the reported FWER ranges from $0.054$ at
$\ell=55$ to $0.006$ at the finest resolution. For
pilot-fitted landmarks, it ranges from $0.008$ to $0.048$
across the reported sample sizes, including $m=20$ and
$m=60$. The normal-reference Bonferroni procedure has rates
of $0.07$--$0.11$ for $\ell\le105$, $0.18$--$0.25$ near
$\ell=210$, and $0.33$--$0.48$ at the finest resolution.
These failures concern its marginal normal calibration;
Bonferroni adjustment itself controls FWER when applied to
valid marginal tests.

At fine resolution, many responses are sparse, and their
variances relative to the estimated pilot scales vary
substantially. The bootstrap critical value reaches
$5$--$6$, compared with about $3.8$ for the unit-normal
reference. This wider band reduces power on the fine grid
at $m=20$ (Appendix~\ref{sec:exp-localize}). Pilot-fitted
landmarks concentrate on occupied regions and keep the
critical value within $0.7$ of the Bonferroni reference.
The latter still gives FWER $0.054$--$0.088$ in those cells.
This supports pilot-fitted placement under this design with a separate
pilot; the fixed-budget comparisons in
Appendix~\ref{app:budget} address the cost of obtaining the
pilot observations.

\begin{table}[!htbp]
\centering\small
\caption{Level and coverage at $\Delta=0$ in the earlier
additional-pilot studies, for $m\in\{20,60\}$. All sample
sizes appear in Appendix~\ref{app:tables}. Every coordinate
is null, so FWER of $\Rhat$ is one minus coverage, and the
step-down set makes any rejection exactly when $\Rhat$
does. The table compares FWER of $\Rhat$ and
normal-reference Bonferroni, and the bootstrap critical
value $\hc$ with $z_{1-\alpha/(2\ell)}$.}
\label{tab:level}
\IfFileExists{tables/level.tex}{{\input{tables/level.tex}}}{\todo{tab\_level.tex}}
\end{table}

\begin{figure}[!htbp]
  \centering
  \IfFileExists{figures/critical.pdf}{\includegraphics[width=0.62\textwidth]{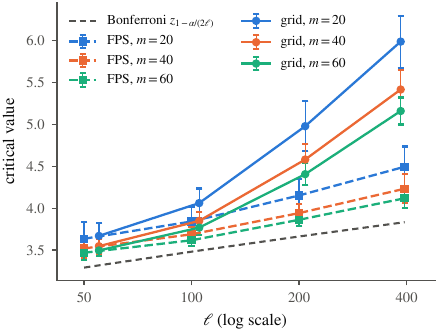}}{\todo{\texttt{\detokenize{figures/critical.pdf}}; generated by \texttt{\detokenize{the figure generator}}}}
  \caption{Bootstrap critical values against $\ell$ in the
  earlier additional-pilot design. For pilot-fitted landmarks,
  the values lie slightly above the normal-reference Bonferroni
  threshold $z_{1-\alpha/(2\ell)}$, which grows at order
  $\sqrt{\log\ell}$. Fine grids give larger values because
  sparse responses yield greater variation in coordinate
  variances relative to the estimated pilot factors
  (Proposition~\ref{prop:detect}). The normal-reference
  Bonferroni procedure overrejects in these comparisons
  (Table~\ref{tab:level}).}
  \label{fig:critical}
\end{figure}

\subsection{Finite-sample behavior of estimated normalization}
\label{sec:exp-scale}

Replacing the pilot scale by the groupwise estimate
$\{(m'/N)s_{P,k}^2+(m/N)s_{Q,k}^2\}^{1/2}$ gives
FWER $0.04$--$0.06$ at $\ell=105$ in balanced designs.
It can overreject substantially under imbalance. With
$20$ versus $60$ inference diagrams and $\Delta=0$,
$2000$ replications give FWER $0.51$ for the sample-normalized
band and $0.45$ for the normal-reference
Bonferroni procedure. The permutation maximum gives $0.064$,
with rates $0.036$--$0.064$ over this experiment's twelve
null cells.

Figure~\ref{fig:mechanism} shows a heavy lower tail in the
null $Z_k$ values. Third-moment-matching wild multipliers
increase FWER to $0.69$, while Efron's bootstrap is
conservative, with critical value $6.7$ and FWER $0.002$.
Matching skewness alone therefore does not resolve the
calibration problem in this experiment. A relevant
mechanism is the dependence between the smaller group's
sample mean and variance on count-like coordinates. When
that group observes little local mass, both estimates can
be small, inflating the normalized difference in those
samples.

With the independent pilot scale, the statistic is linear
in the inference observations. The reported FWER is $0.023$,
coverage is $0.98$, and power is unchanged in this design.
Normal-reference Bonferroni remains liberal at $0.12$,
because the pilot factors only approximate the true standard
deviations. These results do not contradict the asymptotic
estimated-normalization result in
Appendix~\ref{app:studentization}; they show that its
approximation can be poor at these sample sizes.

\begin{figure*}[!htbp]
  \centering
  \IfFileExists{figures/mechanism.pdf}{\includegraphics[width=0.9\textwidth]{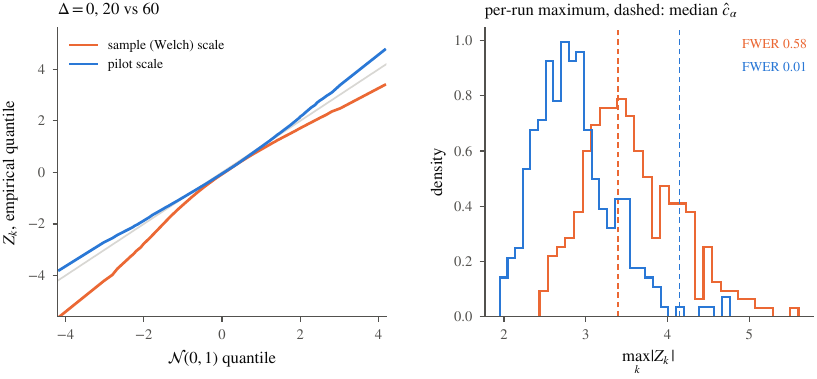}}{\todo{\texttt{\detokenize{figures/mechanism.pdf}}}}
  \caption{Estimated normalization under imbalance.
  At $\Delta=0$, all coordinates are null. The design uses $20$
  versus $60$ inference diagrams, $\ell=105$, and $300$ runs.
  Left: empirical quantiles of $Z_k$ against $\mathcal N(0,1)$
  using groupwise sample SDs or independent pilot factors.
  Right: each run's maximum and bootstrap critical value,
  with FWER calculated over these runs.}
  \label{fig:mechanism}
\end{figure*}

\subsection{Selection on the inference sample}
\label{sec:exp-arms}

Selecting the $K$ largest $|Z_k|$ values from $4K$ candidate
landmarks on the inference sample, then recalibrating a band
on that same sample, gives FWER $0.13$--$0.24$ at
$\Delta=0$. These results span $K\in\{50,100,200\}$ and
$m\in\{20,40,60\}$ and reach almost five times the nominal
level. Making the same selection on an independent pilot
gives $0.016$--$0.036$ (Table~\ref{tab:arms}).

Class-aware farthest-point placement fitted on the inference
sample, also with inference-based scales, gives FWER
$0.026$--$0.068$. This placement seeks to cover occupied
regions rather than directly maximize the observed
contrast. The two findings distinguish the selection rules
in these experiments, but neither supplies a general
guarantee for a family fitted on the inference sample.
Independent selection or cross-fitting provides one route
to a justified analysis. Keeping a subset of an already
calibrated full-family band is also valid, as explained in
Remark~\ref{rem:pilot}.

\begin{table}[!htbp]
\centering\small
\caption{Placement, selection, and normalization in the
earlier studies. Comparisons include inference-sample versus
pilot placement and selection ($m=m'\in\{20,60\}$), sample
versus pilot normalization in balanced and $20$-versus-$60$
designs, and the weak null. Inference-sample arms have no
pilot. At $\Delta=0$, every target contrast is zero and FWER
is the probability of any rejection. Coverage of a random
family is defined, but a family fitted on the inference
sample is not covered by the fixed-family theorem. The
table reports the available coverage and FWER results for
the band, normal-reference Bonferroni, and, where run,
the permutation maximum, together with power at
$\Delta=0.20$.}
\label{tab:arms}
\IfFileExists{tables/arms.tex}{{\input{tables/arms.tex}}}{\todo{tab\_arms.tex}}
\end{table}

\subsection{Localization}
\label{sec:exp-localize}

With pilot-fitted landmarks and $K\ge100$, the band
localizes the signal in at least $99\%$ of runs for every
$\Delta\ge0.10$ when $m\ge40$, and in $70$--$98\%$ when
$m=20$ (Table~\ref{tab:localization}). The reported
false-discovery proportion of $\Rhat$ is below $0.015$;
for Benjamini--Hochberg it is at most $0.04$. The center
with the largest $|Z_k|$ is within $0.006$--$0.036$ of
the signal in bottleneck distance, a fraction of a
landmark radius. Figure~\ref{fig:map-synthetic} shows one run.
The coarse grid behaves similarly.

At the finest grid and $m=20$, the band's power falls to
$0.01$--$0.03$. Bonferroni reaches $0.64$--$0.70$, but
its null FWER is $0.48$; Benjamini--Hochberg has a
false-discovery proportion of $0.16$--$0.24$. These
comparisons illustrate the difficulty of calibrating sparse
responses at a fine resolution. Pilot-fitted placement
avoids this difficulty in the reported design. With only
$K=50$, however, the displaced feature can move beyond
the selected neighborhoods: power is $0.08$--$0.49$ at
$\Delta=0.30$. Increasing to $K=100$ closes that gap.

\begin{figure}[!htbp]
  \centering
  \IfFileExists{figures/map_synthetic.pdf}{\includegraphics[width=0.62\textwidth]{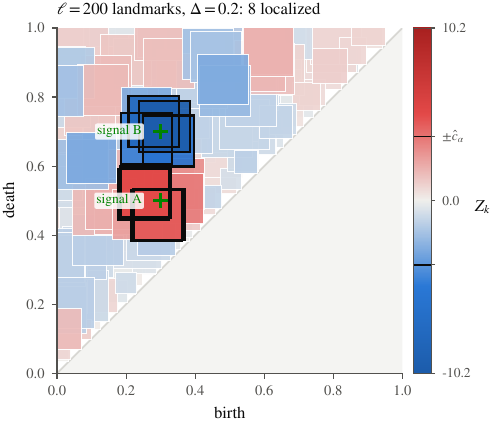}}{\todo{\texttt{\detokenize{figures/map_synthetic.pdf}}; generated by \texttt{\detokenize{the figure generator}}}}
  \caption{The band on the synthetic task, one run at $\Delta=0.20$,
  $m=60$: class $B$ carries the same features as class $A$ with one
  feature displaced from the location marked $A$ to the location marked
  $B$; the band localizes the displacement and nothing else.}
  \label{fig:map-synthetic}
\end{figure}

\begin{table}[!htbp]
\centering\small
\caption{Earlier localization study at
$\Delta\in\{0.10,0.30\}$ and $m\in\{20,60\}$, for the
coarsest and finest placements. Complete tables appear in
Appendix~\ref{app:tables}. Power: at least one signal
landmark localized. FDR: the reported false-discovery
summary using the geometric nulls defined in
Appendix~\ref{app:earlier-design}. Localization error:
bottleneck distance from the top landmark to the nearer
signal. The finest grid retains $384$--$389$ of its
$406$ landmarks after capping and pilot screening,
depending on $\Delta$.}
\label{tab:localization}
\IfFileExists{tables/localization.tex}{{\input{tables/localization.tex}}}{\todo{tab\_localization.tex}}
\end{table}

\subsection{Comparison with pixelwise image inference}
\label{sec:exp-baseline}

The image comparator applies one test per persistence-image
pixel and a Westfall--Young permutation maximum for
family-wise calibration under exchangeability. The image
tiles birth--persistence coordinates with $8$ to $21$
pixels per axis. Each point is weighted by persistence
and spread by a Gaussian kernel one pixel wide, truncated
at three standard deviations. This additive map also
admits the multiplier band. These settings differ from
the constant-weight, untruncated images in
Appendix~\ref{app:budget}.

Table~\ref{tab:baseline} compares both maps under both
calibrations on the same replications. With comparable
numbers of coordinates and $m=20$, pilot-fitted landmarks
localize more often. At $\Delta=0.10$, bootstrap power is
$70$--$94\%$ for $K\in\{100,200\}$, compared with
$15$--$23\%$ for $111$ or $189$ image pixels. At
$\Delta=0.20$, localization error is $0.02$--$0.04$ for
the landmarks and $0.03$--$0.07$ for the images. With
$m=60$, both maps approach full power once the image
contains at least $189$ pixels.

The calibration difference is smaller. At $K=100$ and
$m=20$, the permutation maximum rejects with probability
$0.82$, compared with $0.70$ for the bootstrap. Across
the null comparisons, FWER is $0.040$--$0.084$ for
permutation and $0.012$--$0.066$ for the bootstrap.
The bootstrap critical value reflects variation in
coordinate variances relative to their pilot scales:
$3.84$, compared with $3.59$ for permutation in the
reported comparison. Efron's bootstrap with fixed
normalization is more conservative still, with critical
value $3.92$.

Under alternatives, false-localization rates over the
geometric null coordinates are $0.002$--$0.022$ for
permutation and $0.002$--$0.018$ for the bootstrap.
Permutation thus provides a valid test under
exchangeability and is somewhat more powerful in these
comparisons. That validity does not extend automatically
to equal mean measures with different group laws.

The unequal-covariance designs in Appendix~\ref{app:arms}
make this distinction visible. When the smaller group is
more variable, the permutation maximum rejects in
$9$--$18\%$ of the clustered-design runs. Recomputing
sample normalization for every permutation gives
$18$--$33\%$, and $12$--$24\%$ even in balanced
clustered designs. The band ranges from $0.005$ to
$0.057$ across all sixteen cells. The multiplier method
also supplies simultaneous intervals for local mean
contrasts, in addition to a rejection set.

\begin{table}[!htbp]
\centering\small
\caption{The mass-univariate baseline: persistence-image pixels against
pilot-fitted landmarks, each calibrated by the bootstrap band and by the
permutation maximum on the same $500$ replications. FWER at $\Delta=0$
over all coordinates; power: a signal coordinate localized; loc.\ err.\
at $\Delta=0.20$.}
\label{tab:baseline}
\IfFileExists{tables/baseline.tex}{{\input{tables/baseline.tex}}}{\todo{tab\_baseline.tex}}
\end{table}

\subsection{Lifetime-changing and lifetime-preserving shifts}
\label{sec:exp-gap}

Figure~\ref{fig:power} and Table~\ref{tab:comparative}
compare six procedures at $m=30$ per class: the band's
global maximum test; MMD with an additive Gaussian kernel
on the same pilot-fitted embedding \citep{Gretton2012};
the log-rank statistic of \citet{STRAND} with diagram-label
permutation or pooled $\chi^2_1$ calibration; landscape
MMD \citep{Bubenik15}; and persistence-image MMD
\citep{Adams2017}. Permutation tests use $500$ shuffles.
The survival implementations are described in
Appendix~\ref{app:strand}. All six procedures have null
rejection rates between $0.02$ and $0.07$ in these
experiments.

In the persistence regime, the band reaches power $0.77$
at $\Delta=0.10$ and $0.87$ at $\Delta=0.20$.
Embedding MMD gives $0.76$ and $0.88$. Persistence-image
MMD reaches $0.33$ only at $\Delta=0.30$, and landscape
MMD stays near its null rejection rate. Both survival
tests remain at or below $0.15$. Only one feature among
the twenty background points changes lifetime, so the
pooled lifetime distribution changes little.

In the location regime, the survival tests remain at
their null rejection rates for every $\Delta$, as
Proposition~\ref{prop:invariance} implies. Landscape and
image MMD also remain near their null rates in this
particular design; the lifetime-invariance proposition
does not assert this for those methods. The band's power
is $0.63$ at $\Delta=0.05$ and $0.86$ at $\Delta=0.30$,
compared with $0.67$ and $0.91$ for embedding MMD.
The two embedding-based methods have similar global
power here. The band additionally identifies local
contrasts and supplies their simultaneous intervals.

\begin{figure*}[!htbp]
  \centering
  \IfFileExists{figures/power.pdf}{\includegraphics[width=\textwidth]{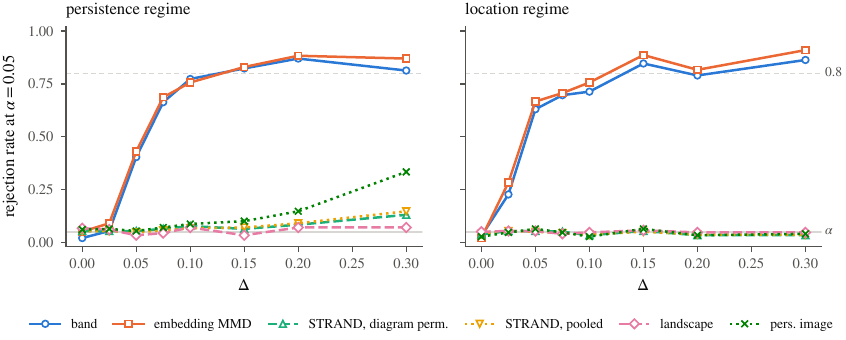}}{\todo{\texttt{\detokenize{figures/power.pdf}}; generated by \texttt{\detokenize{the figure generator}}}}
  \caption{Rejection rate at $\alpha=0.05$ against $\Delta$, $m=30$ per
  class. Left: the signal's lifetime changes. Right: the signal moves
  along the diagonal and its lifetime does not change; every
  lifetime-based procedure is at its level by
  Proposition~\ref{prop:invariance}.}
  \label{fig:power}
\end{figure*}

\begin{table}[!htbp]
\centering\small
\caption{Rejection rates by method and regime.}
\label{tab:comparative}
\IfFileExists{tables/comparative.tex}{{\input{tables/comparative.tex}}}{\todo{tab\_comparative.tex}}
\end{table}

\subsection{Cross-fitting and unequal-covariance nulls}
\label{app:arms}

\paragraph{Cross-fitting.}
This arm uses a fixed total of $40$--$120$ diagrams per
class, split into halves. Landmarks and normalization
are fitted on one half and tested on the other at level
$\alpha/2$, then the roles are swapped. The reported
localization set is the union of the two sets. Every
diagram supplies inference once. Under the conditions
for each held-out band, a union bound gives joint error
control; the two original interval families remain
separate. This experiment does not use the inner CV
score introduced in Appendix~\ref{app:cv-screening}.

Null FWER is $0.014$--$0.046$ across $K$ from $50$ to
$400$ and the reported total sample sizes. With $40$
diagrams per class in total, localization is
$79$--$100\%$. The earlier pilot/inference analysis with
$20$ inference diagrams per class gives $8$--$98\%$,
but also uses its separate $200$-diagram pilot. These
rates therefore describe different total budgets.
With $80$ diagrams per class in total, cross-fitting
localizes in at least $96\%$ of runs.

\paragraph{A multiplicity weak null.}
Group $A$ always has one point at the signal location,
whereas group $B$ has two with probability one half
and none otherwise. The mean measures agree but the
covariances differ. With groups of $20$ and $60$,
the pilot-normalized bootstrap gives FWER
$0.007$--$0.032$ on the grid and with pilot-fitted
landmarks. Normal-reference Bonferroni gives
$0.10$--$0.63$. The sample-normalized band gives
$0.41$--$0.53$, except for $0.03$ on the finest grid.
The permutation maximum with pilot normalization happens to give
$0.045$--$0.051$ in this design. Exchangeability does
not hold; here the covariance difference is confined
to coordinates near the signal, and the larger group
is the more variable one.

\paragraph{More widespread covariance differences.}
In the clustered design, each of group $A$'s $20$
points independently chooses one of four fixed boxes.
All $20$ points of a group-$B$ diagram instead share
one randomly chosen box. Each point has the same
marginal distribution in the two groups, so every
$\delta_k=0$. In the counts design, group $B$ has
$0$ or $40$ noise points where group $A$ has $20$.
These constructions alter variability across the
retained neighborhoods. In the clustered grid design,
only coordinates whose neighborhoods meet the boxes
pass the pilot screen.

Over $1000$ replications per cell (Table~\ref{tab:weaknull-var}), the band's FWER
is $0.005$--$0.057$. The permutation maximum using
pilot normalization gives $0.037$--$0.048$ in balanced
designs and $0.000$--$0.015$ when the larger group is
more variable. When the smaller group is more
variable, it gives $9$--$18\%$ in the clustered design
and $4$--$9\%$ in the counts design. In this allocation,
pooling can understate the variance of the difference.

Recomputing sample normalization for each permutation
does not repair the calibration in these cells:
rejection rates are $18$--$33\%$ in the unbalanced
clustered comparison and $12$--$24\%$ in balanced
clustered designs. Sparse-coordinate variance
estimation remains problematic at these sizes.
Normal-reference Bonferroni with pilot normalization
gives $0.010$--$0.035$ for pilot-fitted landmarks in
the clustered design and $0.10$--$0.53$ in every
other reported cell.

\begin{table}[!htbp]
\centering\small
\caption{Weak nulls with unequal covariance at every landmark. Every
$\delta_k=0$, so the FWER is the probability of any rejection; $1000$
replications per cell. Pilot scale: the band, Bonferroni and the
permutation maximum; sample scale: the band and the permutation maximum
with each coordinate studentized by its sample variance. $m$ / $m'$: the
sizes of classes $A$ and $B$; class $B$ is the more variable. On the
grid only the landmarks that see the boxes pass the pilot screen.}
\label{tab:weaknull-var}
\IfFileExists{tables/weaknull_var.tex}{\input{tables/weaknull_var.tex}}{\todo{tab\_weaknull\_var.tex}}
\end{table}

\subsection{Complete tables for the earlier calibration and localization studies}
\label{app:tables}

The following tables give the full cell grids for
Appendices~\ref{sec:exp-level} and
\ref{sec:exp-localize}. They include grids of $55$,
$105$, $210$, and $406$ landmarks. After the radius
cap and pilot screen, the two finer grids retain
$208$--$210$ and $384$--$389$ coordinates, depending
on $\Delta$. Pilot-fitted families use
$K\in\{50,100,200,400\}$. The inference sizes are
$m\in\{20,40,60\}$ and the shifts are
$\Delta\in\{0,0.10,0.15,0.20,0.30\}$, with $500$
replications and $B=2000$ throughout.

\begin{center}\small
\IfFileExists{tables/level_full.tex}{\input{tables/level_full.tex}}{\todo{tab\_level\_full.tex}}
\end{center}
\begin{center}\small
\IfFileExists{tables/localization_full_grid.tex}{\resizebox{\textwidth}{!}{\input{tables/localization_full_grid.tex}}}{\todo{tab\_localization\_full\_grid.tex}}
\end{center}
\begin{center}\small
\IfFileExists{tables/localization_full_fps.tex}{\resizebox{\textwidth}{!}{\input{tables/localization_full_fps.tex}}}{\todo{tab\_localization\_full\_fps.tex}}
\end{center}

%% file: tables/neigh_s1_cover.tex
\begin{tabular}{llll c c c c c c c c}
\toprule
design & $T_P$ & $T_Q$ & $\Delta$ & F0 & F1 & F2 & A1 & A2 & F1-perm & I1 & ML \\
\midrule
shift-pers & 40 & 40 & 0.0 & 0.97 & 0.97 & 0.96 & 0.94 & 0.95 & -- & 0.97 & -- \\
shift-pers & 80 & 80 & 0.0 & 0.96 & 0.95 & 0.95 & 0.95 & 0.96 & -- & 0.95 & -- \\
shift-pers & 120 & 120 & 0.0 & 0.96 & 0.96 & 0.95 & 0.94 & 0.95 & -- & 0.96 & -- \\
shift-pers & 40 & 120 & 0.0 & 0.97 & 0.95 & 0.96 & 0.96 & 0.97 & -- & 0.96 & -- \\
wn-cluster & 80 & 80 & 0.0 & 0.94 & 0.95 & 0.94 & 0.95 & 0.95 & -- & 0.94 & -- \\
wn-cluster & 120 & 40 & 0.0 & 0.95 & 0.96 & 0.94 & 0.95 & 0.94 & -- & 0.96 & -- \\
wn-cluster & 40 & 120 & 0.0 & 0.95 & 0.95 & 0.94 & 0.95 & 0.95 & -- & 0.95 & -- \\
wn-count & 80 & 80 & 0.0 & 0.95 & 0.95 & 0.97 & 0.96 & 0.97 & -- & 0.95 & -- \\
wn-count & 120 & 40 & 0.0 & 0.97 & 0.97 & 0.98 & 0.96 & 0.97 & -- & 0.96 & -- \\
wn-count & 40 & 120 & 0.0 & 0.96 & 0.96 & 0.96 & 0.95 & 0.96 & -- & 0.95 & -- \\
\bottomrule
\end{tabular}

%% file: tables/neigh_norm_fwer.tex
\begin{tabular}{l r cccccccc}
\toprule
 & & \multicolumn{3}{c}{F0} & \multicolumn{5}{c}{F1} \\
\cmidrule(lr){3-5}\cmidrule(lr){6-10}
design & $T_P$/$T_Q$ & fixed & group & pooled & fixed & group & pooled & hyb.\ 1\% & hyb.\ 0.5\% \\
\midrule
strict null & 40/40 & 0.032 & 0.034 & 0.014 & 0.033 & 0.040 & 0.010 & 0.063 & 0.047 \\
strict null & 80/80 & 0.035 & 0.032 & 0.012 & 0.048 & 0.040 & 0.016 & 0.061 & 0.041 \\
strict null & 120/120 & 0.040 & 0.029 & 0.024 & 0.043 & 0.034 & 0.023 & 0.042 & 0.035 \\
strict null & 40/120 & 0.033 & \textbf{0.375} & 0.022 & 0.054 & \textbf{0.392} & 0.007 & \textbf{0.160} & \textbf{0.341} \\
\addlinespace
clustered & 80/80 & 0.063 & 0.035 & 0.021 & 0.054 & 0.034 & 0.019 & \textbf{0.067} & 0.045 \\
clustered & 120/40 & 0.047 & \textbf{0.403} & 0.032 & 0.041 & \textbf{0.367} & 0.020 & \textbf{0.164} & \textbf{0.330} \\
clustered & 40/120 & 0.045 & \textbf{0.371} & 0.017 & 0.050 & \textbf{0.297} & 0.002 & \textbf{0.093} & \textbf{0.271} \\
\addlinespace
counts & 80/80 & 0.046 & 0.046 & 0.031 & 0.045 & 0.052 & 0.031 & 0.057 & \textbf{0.068} \\
counts & 120/40 & 0.032 & \textbf{0.426} & 0.011 & 0.034 & \textbf{0.479} & 0.009 & \textbf{0.126} & \textbf{0.252} \\
counts & 40/120 & 0.040 & \textbf{0.285} & 0.034 & 0.043 & \textbf{0.292} & 0.008 & \textbf{0.069} & \textbf{0.110} \\
\addlinespace
unequal variances & 40/40 & 0.053 & 0.043 & 0.017 & 0.053 & 0.053 & 0.009 & \textbf{0.082} & \textbf{0.082} \\
unequal variances & 80/80 & 0.045 & 0.049 & 0.036 & 0.049 & 0.055 & 0.027 & 0.063 & \textbf{0.069} \\
unequal variances & 120/120 & 0.054 & 0.046 & 0.035 & 0.051 & 0.040 & 0.023 & 0.051 & 0.049 \\
unequal variances & 40/120 & 0.063 & \textbf{0.397} & 0.028 & 0.057 & \textbf{0.404} & 0.008 & \textbf{0.087} & \textbf{0.146} \\
\bottomrule
\end{tabular}

%% file: tables/neigh_norm_located.tex
\begin{tabular}{l r r cccccccc}
\toprule
 & & & \multicolumn{3}{c}{F0} & \multicolumn{5}{c}{F1} \\
\cmidrule(lr){4-6}\cmidrule(lr){7-11}
design & $T$ & $\Delta$ & fixed & group & pooled & fixed & group & pooled & hyb.\ 1\% & hyb.\ 0.5\% \\
\midrule
unequal variances & 40 & 0.05 & 0.00 & 1.00 & 1.00 & 0.00 & 1.00 & 1.00 & 1.00 & 1.00 \\
unequal variances & 40 & 0.10 & 0.00 & 1.00 & 1.00 & 0.00 & 1.00 & 1.00 & 1.00 & 1.00 \\
unequal variances & 80 & 0.05 & 0.00 & 1.00 & 1.00 & 0.00 & 1.00 & 1.00 & 1.00 & 1.00 \\
unequal variances & 80 & 0.10 & 0.00 & 1.00 & 1.00 & 0.00 & 1.00 & 1.00 & 1.00 & 1.00 \\
unequal variances & 120 & 0.05 & 0.00 & 1.00 & 1.00 & 0.00 & 1.00 & 1.00 & 1.00 & 1.00 \\
unequal variances & 120 & 0.10 & 0.00 & 1.00 & 1.00 & 0.00 & 1.00 & 1.00 & 1.00 & 1.00 \\
\addlinespace
shift, lifetime changes & 40 & 0.05 & 0.77 & 0.51 & 0.37 & 0.19 & 0.45 & 0.29 & 0.50 & 0.46 \\
shift, lifetime changes & 40 & 0.10 & 0.85 & 0.68 & 0.58 & 0.60 & 0.64 & 0.49 & 0.70 & 0.65 \\
shift, lifetime changes & 40 & 0.15 & 0.97 & 0.83 & 0.75 & 0.79 & 0.79 & 0.65 & 0.84 & 0.80 \\
\addlinespace
shift, lifetime kept & 40 & 0.05 & 0.92 & 0.73 & 0.63 & 0.37 & 0.64 & 0.48 & 0.71 & 0.66 \\
shift, lifetime kept & 40 & 0.10 & 0.82 & 0.63 & 0.52 & 0.63 & 0.64 & 0.49 & 0.71 & 0.66 \\
shift, lifetime kept & 40 & 0.15 & 0.91 & 0.76 & 0.65 & 0.73 & 0.73 & 0.57 & 0.78 & 0.73 \\
\bottomrule
\end{tabular}

%% file: tables/neigh_s2.tex
\begin{tabular}{llll c c c c c c c c}
\toprule
design & $T_P$ & $T_Q$ & $\Delta$ & F0 & F1 & F2 & A1 & A2 & F1-perm & I1 & ML \\
\midrule
shift-pers & 40 & 40 & 0.05 & 0.77 & 0.19 & 0.29 & 0.10 & 0.17 & 0.19 & 0.10 & 0.07 \\
shift-pers & 40 & 40 & 0.1 & 0.85 & 0.60 & 0.49 & 0.40 & 0.47 & 0.61 & 0.44 & 0.36 \\
shift-pers & 40 & 40 & 0.15 & 0.97 & 0.79 & 0.64 & 0.52 & 0.57 & 0.80 & 0.62 & 0.58 \\
shift-pers & 80 & 80 & 0.05 & 0.98 & 0.58 & 0.72 & 0.27 & 0.39 & 0.59 & 0.31 & 0.25 \\
shift-pers & 80 & 80 & 0.1 & 1.00 & 0.97 & 0.90 & 0.78 & 0.80 & 0.98 & 0.87 & 0.85 \\
shift-pers & 80 & 80 & 0.15 & 1.00 & 1.00 & 0.96 & 0.87 & 0.88 & 1.00 & 0.96 & 0.96 \\
shift-pers & 120 & 120 & 0.05 & 1.00 & 0.90 & 0.91 & 0.47 & 0.56 & 0.91 & 0.51 & 0.45 \\
shift-pers & 120 & 120 & 0.1 & 1.00 & 1.00 & 0.98 & 0.92 & 0.92 & 1.00 & 0.98 & 0.98 \\
shift-pers & 120 & 120 & 0.15 & 1.00 & 1.00 & 1.00 & 0.96 & 0.96 & 1.00 & 1.00 & 1.00 \\
shift-loc & 40 & 40 & 0.05 & 0.92 & 0.37 & 0.47 & 0.17 & 0.26 & 0.39 & 0.13 & 0.09 \\
shift-loc & 40 & 40 & 0.1 & 0.82 & 0.63 & 0.49 & 0.45 & 0.49 & 0.65 & 0.45 & 0.39 \\
shift-loc & 40 & 40 & 0.15 & 0.91 & 0.73 & 0.56 & 0.49 & 0.52 & 0.74 & 0.60 & 0.57 \\
shift-loc & 80 & 80 & 0.05 & 1.00 & 0.90 & 0.88 & 0.45 & 0.57 & 0.90 & 0.38 & 0.31 \\
shift-loc & 80 & 80 & 0.1 & 0.99 & 0.97 & 0.89 & 0.83 & 0.84 & 0.97 & 0.91 & 0.89 \\
shift-loc & 80 & 80 & 0.15 & 1.00 & 0.99 & 0.94 & 0.85 & 0.86 & 0.99 & 0.96 & 0.97 \\
shift-loc & 120 & 120 & 0.05 & 1.00 & 0.99 & 0.98 & 0.70 & 0.77 & 0.99 & 0.59 & 0.51 \\
shift-loc & 120 & 120 & 0.1 & 1.00 & 1.00 & 0.98 & 0.94 & 0.95 & 1.00 & 0.99 & 0.99 \\
shift-loc & 120 & 120 & 0.15 & 1.00 & 1.00 & 0.99 & 0.95 & 0.95 & 1.00 & 1.00 & 1.00 \\
\bottomrule
\end{tabular}

%% file: tables/neigh_s2_fwer.tex
\begin{tabular}{llll c c c c c c c c}
\toprule
design & $T_P$ & $T_Q$ & $\Delta$ & F0 & F1 & F2 & A1 & A2 & F1-perm & I1 & ML \\
\midrule
shift-pers & 40 & 40 & 0.05 & 0.023 & 0.030 & 0.019 & 0.034 & 0.025 & 0.034 & 0.000 & 0.003 \\
shift-pers & 40 & 40 & 0.1 & 0.025 & 0.026 & 0.026 & 0.029 & 0.022 & 0.027 & 0.004 & 0.008 \\
shift-pers & 40 & 40 & 0.15 & 0.015 & 0.037 & 0.030 & 0.035 & 0.030 & 0.040 & 0.001 & 0.004 \\
shift-pers & 80 & 80 & 0.05 & 0.016 & 0.016 & 0.018 & 0.035 & 0.029 & 0.017 & 0.005 & 0.006 \\
shift-pers & 80 & 80 & 0.1 & 0.027 & 0.033 & 0.018 & 0.028 & 0.025 & 0.036 & 0.003 & 0.010 \\
shift-pers & 80 & 80 & 0.15 & 0.023 & 0.026 & 0.023 & 0.030 & 0.027 & 0.025 & 0.001 & 0.013 \\
shift-pers & 120 & 120 & 0.05 & 0.023 & 0.032 & 0.025 & 0.027 & 0.021 & 0.031 & 0.005 & 0.008 \\
shift-pers & 120 & 120 & 0.1 & 0.024 & 0.032 & 0.025 & 0.032 & 0.027 & 0.031 & 0.002 & 0.015 \\
shift-pers & 120 & 120 & 0.15 & 0.026 & 0.030 & 0.028 & 0.030 & 0.025 & 0.027 & 0.000 & 0.018 \\
shift-loc & 40 & 40 & 0.05 & 0.028 & 0.033 & 0.024 & 0.029 & 0.027 & 0.035 & 0.004 & 0.004 \\
shift-loc & 40 & 40 & 0.1 & 0.028 & 0.034 & 0.029 & 0.028 & 0.028 & 0.036 & 0.001 & 0.008 \\
shift-loc & 40 & 40 & 0.15 & 0.023 & 0.041 & 0.027 & 0.035 & 0.028 & 0.044 & 0.004 & 0.010 \\
shift-loc & 80 & 80 & 0.05 & 0.029 & 0.037 & 0.024 & 0.027 & 0.027 & 0.039 & 0.003 & 0.004 \\
shift-loc & 80 & 80 & 0.1 & 0.033 & 0.027 & 0.025 & 0.034 & 0.035 & 0.025 & 0.001 & 0.013 \\
shift-loc & 80 & 80 & 0.15 & 0.022 & 0.032 & 0.035 & 0.044 & 0.041 & 0.035 & 0.003 & 0.022 \\
shift-loc & 120 & 120 & 0.05 & 0.022 & 0.026 & 0.025 & 0.044 & 0.026 & 0.025 & 0.003 & 0.011 \\
shift-loc & 120 & 120 & 0.1 & 0.035 & 0.034 & 0.036 & 0.041 & 0.034 & 0.031 & 0.003 & 0.026 \\
shift-loc & 120 & 120 & 0.15 & 0.032 & 0.037 & 0.032 & 0.040 & 0.037 & 0.031 & 0.003 & 0.040 \\
\bottomrule
\end{tabular}

%% file: tables/neigh_s3.tex
\begin{tabular}{lllll c c c c c c c c}
\toprule
variant & design & $T_P$ & $T_Q$ & $\Delta$ & F0 & F1 & F2 & A1 & A2 & F1-perm & I1 & ML \\
\midrule
baseline & shift-pers & 80 & 80 & 0.1 & 1.00 & 0.96 & 0.91 & 0.80 & 0.82 & 0.96 & 0.87 & 0.84 \\
q=0.25 & shift-pers & 80 & 80 & 0.1 & 0.53 & 0.25 & 0.18 & 0.14 & 0.18 & 0.26 & 0.16 & 0.13 \\
q=1 & shift-pers & 80 & 80 & 0.1 & 1.00 & 1.00 & 1.00 & 1.00 & 1.00 & 1.00 & 1.00 & 1.00 \\
eps=0.01 & shift-pers & 80 & 80 & 0.1 & 1.00 & 0.99 & 0.97 & 0.87 & 0.89 & 0.99 & 0.94 & 0.95 \\
eps=0.04 & shift-pers & 80 & 80 & 0.1 & 0.39 & 0.59 & 0.31 & 0.34 & 0.33 & 0.59 & 0.41 & 0.34 \\
delta=0.05 & shift-pers & 80 & 80 & 0.05 & 1.00 & 0.58 & 0.75 & 0.28 & 0.37 & 0.58 & 0.31 & 0.24 \\
\bottomrule
\end{tabular}

%% file: tables/neigh_s4.tex
\begin{tabular}{lllll c c c c c c c c}
\toprule
variant & design & $T_P$ & $T_Q$ & $\Delta$ & F0 & F1 & F2 & A1 & A2 & F1-perm & I1 & ML \\
\midrule
two-feature & two-feature & 80 & 80 & -- & 1.00 & 0.91 & 0.83 & 0.72 & 0.73 & 0.91 & 0.48 & 0.41 \\
two-feature & two-feature & 120 & 120 & -- & 1.00 & 0.99 & 0.95 & 0.92 & 0.87 & 0.99 & 0.77 & 0.72 \\
prevalence & prevalence & 80 & 80 & -- & 0.98 & 0.92 & 0.79 & 0.66 & 0.68 & 0.92 & 0.83 & 0.75 \\
continuous jitter & shift-pers & 80 & 80 & 0.0 & 0.00 & 0.00 & 0.00 & 0.00 & 0.00 & 0.00 & 0.00 & 0.00 \\
continuous jitter & shift-pers & 80 & 80 & 0.1 & 1.00 & 1.00 & 0.95 & 0.87 & 0.89 & 1.00 & 0.94 & 0.94 \\
continuous jitter & shift-loc & 80 & 80 & 0.1 & 1.00 & 0.99 & 0.95 & 0.89 & 0.90 & 0.99 & 0.93 & 0.94 \\
\bottomrule
\end{tabular}

%% file: tables/neigh_s5.tex
\begin{tabular}{lllll c c c c c c c c c c c c c c c c c}
\toprule
variant & design & $T_P$ & $T_Q$ & $\Delta$ & F0 & F1 & F2 & A1 & A2 & F1-perm & I1 & ML & F0-permHolm & I1-permHolm & I1\_10 & I1\_20 & I1\_20-permHolm & I1\_40 & ML\_10 & ML\_20 & ML\_40 \\
\midrule
f=1/4 & shift-pers & 80 & 80 & 0.0 & 0.00 & 0.00 & 0.00 & 0.00 & 0.00 & 0.00 & 0.00 & 0.00 & -- & -- & -- & -- & -- & -- & -- & -- & -- \\
f=1/4 & shift-pers & 80 & 80 & 0.1 & 1.00 & 0.96 & 0.90 & 0.87 & 0.88 & 0.96 & 0.87 & 0.85 & -- & -- & -- & -- & -- & -- & -- & -- & -- \\
f=1/3 & shift-pers & 80 & 80 & 0.0 & 0.00 & 0.00 & 0.00 & 0.00 & 0.00 & 0.00 & 0.00 & 0.00 & 0.00 & 0.00 & -- & -- & -- & -- & -- & -- & -- \\
f=1/3 & shift-pers & 80 & 80 & 0.1 & 1.00 & 0.96 & 0.88 & 0.75 & 0.79 & 0.95 & 0.87 & 0.85 & 0.97 & 0.64 & -- & -- & -- & -- & -- & -- & -- \\
f=1/2 & shift-pers & 80 & 80 & 0.0 & 0.00 & 0.00 & 0.00 & 0.00 & 0.00 & 0.00 & 0.00 & 0.00 & -- & -- & -- & -- & -- & -- & -- & -- & -- \\
f=1/2 & shift-pers & 80 & 80 & 0.1 & 1.00 & 0.96 & 0.78 & 0.56 & 0.65 & 0.96 & 0.87 & 0.85 & -- & -- & -- & -- & -- & -- & -- & -- & -- \\
K=50 & shift-pers & 80 & 80 & 0.0 & 0.00 & 0.00 & 0.00 & 0.00 & 0.00 & 0.00 & 0.00 & 0.00 & -- & -- & -- & -- & -- & -- & -- & -- & -- \\
K=50 & shift-pers & 80 & 80 & 0.1 & 1.00 & 0.96 & 0.87 & 0.58 & 0.62 & 0.96 & 0.88 & 0.85 & -- & -- & -- & -- & -- & -- & -- & -- & -- \\
K=200 & shift-pers & 80 & 80 & 0.0 & 0.00 & 0.00 & 0.00 & 0.00 & 0.00 & 0.00 & 0.00 & 0.00 & -- & -- & -- & -- & -- & -- & -- & -- & -- \\
K=200 & shift-pers & 80 & 80 & 0.1 & 1.00 & 0.96 & 0.87 & 0.85 & 0.89 & 0.96 & 0.88 & 0.85 & -- & -- & -- & -- & -- & -- & -- & -- & -- \\
screen=0.0 & shift-pers & 80 & 80 & 0.0 & 0.00 & 0.00 & 0.00 & 0.00 & 0.00 & 0.00 & 0.00 & 0.00 & -- & -- & -- & -- & -- & -- & -- & -- & -- \\
screen=0.0 & shift-pers & 80 & 80 & 0.1 & 1.00 & 0.96 & 0.87 & 0.76 & 0.79 & 0.96 & 0.88 & 0.85 & -- & -- & -- & -- & -- & -- & -- & -- & -- \\
screen=0.01 & shift-pers & 80 & 80 & 0.0 & 0.00 & 0.00 & 0.00 & 0.00 & 0.00 & 0.00 & 0.00 & 0.00 & -- & -- & -- & -- & -- & -- & -- & -- & -- \\
screen=0.01 & shift-pers & 80 & 80 & 0.1 & 1.00 & 0.96 & 0.87 & 0.76 & 0.79 & 0.96 & 0.88 & 0.85 & -- & -- & -- & -- & -- & -- & -- & -- & -- \\
zeta=0.005 & shift-pers & 80 & 80 & 0.0 & 0.00 & 0.00 & 0.00 & 0.00 & 0.00 & 0.00 & 0.00 & 0.00 & -- & -- & -- & -- & -- & -- & -- & -- & -- \\
zeta=0.005 & shift-pers & 80 & 80 & 0.1 & 1.00 & 0.96 & 0.87 & 0.21 & 0.29 & 0.96 & 0.88 & 0.85 & -- & -- & -- & -- & -- & -- & -- & -- & -- \\
zeta=0.01 & shift-pers & 80 & 80 & 0.0 & 0.00 & 0.00 & 0.00 & 0.00 & 0.00 & 0.00 & 0.00 & 0.00 & -- & -- & -- & -- & -- & -- & -- & -- & -- \\
zeta=0.01 & shift-pers & 80 & 80 & 0.1 & 1.00 & 0.96 & 0.87 & 0.55 & 0.50 & 0.96 & 0.88 & 0.85 & -- & -- & -- & -- & -- & -- & -- & -- & -- \\
grid den=10 & shift-pers & 80 & 80 & 0.0 & 0.00 & 0.00 & 0.00 & 0.00 & 0.00 & 0.00 & 0.00 & 0.00 & -- & -- & -- & -- & -- & -- & -- & -- & -- \\
grid den=10 & shift-pers & 80 & 80 & 0.1 & 1.00 & 1.00 & 0.99 & 0.76 & 0.79 & 1.00 & 0.88 & 0.85 & -- & -- & -- & -- & -- & -- & -- & -- & -- \\
grid den=20 & shift-pers & 80 & 80 & 0.0 & 0.00 & 0.00 & 0.00 & 0.00 & 0.00 & 0.00 & 0.00 & 0.00 & -- & -- & -- & -- & -- & -- & -- & -- & -- \\
grid den=20 & shift-pers & 80 & 80 & 0.1 & 1.00 & 1.00 & 0.97 & 0.76 & 0.79 & 1.00 & 0.88 & 0.85 & -- & -- & -- & -- & -- & -- & -- & -- & -- \\
image grid & shift-pers & 80 & 80 & 0.0 & 0.00 & 0.00 & 0.00 & 0.00 & 0.00 & 0.00 & -- & -- & 0.00 & -- & 0.00 & 0.00 & 0.00 & 0.00 & 0.00 & 0.00 & 0.00 \\
image grid & shift-pers & 80 & 80 & 0.1 & 1.00 & 0.98 & 0.88 & 0.79 & 0.80 & 0.97 & -- & -- & 0.97 & -- & 0.12 & 0.88 & 0.66 & 1.00 & 0.12 & 0.87 & 1.00 \\
offset a=(0.31,0.51) & shift-pers & 80 & 80 & 0.05 & 0.94 & 0.37 & 0.55 & 0.30 & 0.39 & 0.38 & 0.32 & 0.24 & -- & -- & -- & -- & -- & -- & -- & -- & -- \\
offset a=(0.31,0.51) & shift-pers & 80 & 80 & 0.1 & 0.93 & 0.94 & 0.77 & 0.79 & 0.83 & 0.94 & 0.89 & 0.86 & -- & -- & -- & -- & -- & -- & -- & -- & -- \\
offset a=(0.31,0.51) & shift-pers & 80 & 80 & 0.15 & 0.99 & 0.98 & 0.88 & 0.89 & 0.90 & 0.98 & 0.96 & 0.96 & -- & -- & -- & -- & -- & -- & -- & -- & -- \\
q=0 null & background-only & 80 & 80 & 0.0 & 0.00 & 0.00 & 0.00 & 0.00 & 0.00 & 0.00 & 0.00 & 0.00 & -- & -- & -- & -- & -- & -- & -- & -- & -- \\
\bottomrule
\end{tabular}

%% file: tables/neigh_s5_fwer.tex
\begin{tabular}{lllll c c c c c c c c c c c c c c c c c}
\toprule
variant & design & $T_P$ & $T_Q$ & $\Delta$ & F0 & F1 & F2 & A1 & A2 & F1-perm & I1 & ML & F0-permHolm & I1-permHolm & I1\_10 & I1\_20 & I1\_20-permHolm & I1\_40 & ML\_10 & ML\_20 & ML\_40 \\
\midrule
f=1/4 & shift-pers & 80 & 80 & 0.0 & 0.040 & 0.068 & 0.050 & 0.048 & 0.044 & 0.070 & 0.044 & 0.022 & -- & -- & -- & -- & -- & -- & -- & -- & -- \\
f=1/4 & shift-pers & 80 & 80 & 0.1 & 0.028 & 0.020 & 0.030 & 0.034 & 0.034 & 0.024 & 0.006 & 0.014 & -- & -- & -- & -- & -- & -- & -- & -- & -- \\
f=1/3 & shift-pers & 80 & 80 & 0.0 & 0.040 & 0.068 & 0.044 & 0.050 & 0.048 & 0.068 & 0.046 & 0.022 & 0.044 & 0.028 & -- & -- & -- & -- & -- & -- & -- \\
f=1/3 & shift-pers & 80 & 80 & 0.1 & 0.028 & 0.020 & 0.028 & 0.040 & 0.030 & 0.020 & 0.006 & 0.014 & 0.036 & 0.006 & -- & -- & -- & -- & -- & -- & -- \\
f=1/2 & shift-pers & 80 & 80 & 0.0 & 0.040 & 0.068 & 0.026 & 0.042 & 0.036 & 0.070 & 0.046 & 0.022 & -- & -- & -- & -- & -- & -- & -- & -- & -- \\
f=1/2 & shift-pers & 80 & 80 & 0.1 & 0.028 & 0.020 & 0.018 & 0.028 & 0.022 & 0.024 & 0.006 & 0.014 & -- & -- & -- & -- & -- & -- & -- & -- & -- \\
K=50 & shift-pers & 80 & 80 & 0.0 & 0.040 & 0.068 & 0.042 & 0.044 & 0.048 & 0.070 & 0.046 & 0.022 & -- & -- & -- & -- & -- & -- & -- & -- & -- \\
K=50 & shift-pers & 80 & 80 & 0.1 & 0.028 & 0.020 & 0.028 & 0.028 & 0.018 & 0.024 & 0.006 & 0.014 & -- & -- & -- & -- & -- & -- & -- & -- & -- \\
K=200 & shift-pers & 80 & 80 & 0.0 & 0.040 & 0.068 & 0.042 & 0.046 & 0.042 & 0.070 & 0.046 & 0.022 & -- & -- & -- & -- & -- & -- & -- & -- & -- \\
K=200 & shift-pers & 80 & 80 & 0.1 & 0.028 & 0.020 & 0.028 & 0.032 & 0.028 & 0.024 & 0.006 & 0.014 & -- & -- & -- & -- & -- & -- & -- & -- & -- \\
screen=0.0 & shift-pers & 80 & 80 & 0.0 & 0.040 & 0.068 & 0.042 & 0.054 & 0.050 & 0.070 & 0.046 & 0.022 & -- & -- & -- & -- & -- & -- & -- & -- & -- \\
screen=0.0 & shift-pers & 80 & 80 & 0.1 & 0.028 & 0.020 & 0.028 & 0.036 & 0.028 & 0.024 & 0.006 & 0.014 & -- & -- & -- & -- & -- & -- & -- & -- & -- \\
screen=0.01 & shift-pers & 80 & 80 & 0.0 & 0.040 & 0.068 & 0.042 & 0.054 & 0.050 & 0.070 & 0.046 & 0.022 & -- & -- & -- & -- & -- & -- & -- & -- & -- \\
screen=0.01 & shift-pers & 80 & 80 & 0.1 & 0.028 & 0.020 & 0.028 & 0.036 & 0.028 & 0.024 & 0.006 & 0.014 & -- & -- & -- & -- & -- & -- & -- & -- & -- \\
zeta=0.005 & shift-pers & 80 & 80 & 0.0 & 0.040 & 0.068 & 0.042 & 0.046 & 0.060 & 0.070 & 0.046 & 0.022 & -- & -- & -- & -- & -- & -- & -- & -- & -- \\
zeta=0.005 & shift-pers & 80 & 80 & 0.1 & 0.028 & 0.020 & 0.028 & 0.028 & 0.030 & 0.024 & 0.006 & 0.014 & -- & -- & -- & -- & -- & -- & -- & -- & -- \\
zeta=0.01 & shift-pers & 80 & 80 & 0.0 & 0.040 & 0.068 & 0.042 & 0.036 & 0.042 & 0.070 & 0.046 & 0.022 & -- & -- & -- & -- & -- & -- & -- & -- & -- \\
zeta=0.01 & shift-pers & 80 & 80 & 0.1 & 0.028 & 0.020 & 0.028 & 0.036 & 0.028 & 0.024 & 0.006 & 0.014 & -- & -- & -- & -- & -- & -- & -- & -- & -- \\
grid den=10 & shift-pers & 80 & 80 & 0.0 & 0.048 & 0.068 & 0.038 & 0.054 & 0.052 & 0.060 & 0.046 & 0.022 & -- & -- & -- & -- & -- & -- & -- & -- & -- \\
grid den=10 & shift-pers & 80 & 80 & 0.1 & 0.028 & 0.020 & 0.028 & 0.036 & 0.028 & 0.020 & 0.006 & 0.014 & -- & -- & -- & -- & -- & -- & -- & -- & -- \\
grid den=20 & shift-pers & 80 & 80 & 0.0 & 0.046 & 0.058 & 0.030 & 0.054 & 0.052 & 0.062 & 0.046 & 0.022 & -- & -- & -- & -- & -- & -- & -- & -- & -- \\
grid den=20 & shift-pers & 80 & 80 & 0.1 & 0.014 & 0.026 & 0.020 & 0.036 & 0.028 & 0.026 & 0.006 & 0.014 & -- & -- & -- & -- & -- & -- & -- & -- & -- \\
image grid & shift-pers & 80 & 80 & 0.0 & 0.040 & 0.050 & 0.032 & 0.056 & 0.058 & 0.052 & -- & -- & 0.036 & -- & 0.050 & 0.044 & 0.026 & 0.040 & 0.042 & 0.042 & 0.020 \\
image grid & shift-pers & 80 & 80 & 0.1 & 0.030 & 0.036 & 0.026 & 0.024 & 0.032 & 0.036 & -- & -- & 0.038 & -- & 0.000 & 0.006 & 0.010 & 0.008 & 0.000 & 0.014 & 0.158 \\
offset a=(0.31,0.51) & shift-pers & 80 & 80 & 0.05 & 0.032 & 0.034 & 0.022 & 0.034 & 0.032 & 0.038 & 0.006 & 0.012 & -- & -- & -- & -- & -- & -- & -- & -- & -- \\
offset a=(0.31,0.51) & shift-pers & 80 & 80 & 0.1 & 0.048 & 0.024 & 0.022 & 0.010 & 0.014 & 0.026 & 0.004 & 0.018 & -- & -- & -- & -- & -- & -- & -- & -- & -- \\
offset a=(0.31,0.51) & shift-pers & 80 & 80 & 0.15 & 0.024 & 0.036 & 0.030 & 0.030 & 0.028 & 0.036 & 0.004 & 0.018 & -- & -- & -- & -- & -- & -- & -- & -- & -- \\
q=0 null & background-only & 80 & 80 & 0.0 & 0.048 & 0.052 & 0.036 & 0.040 & 0.028 & 0.054 & 0.042 & 0.022 & -- & -- & -- & -- & -- & -- & -- & -- & -- \\
\bottomrule
\end{tabular}

%% file: tables/cv_selection.tex
\begin{tabular}{lrrrr}
\toprule
Method & Null FWER & Power, $1$ & Power, $4$ & Coverage, $4$ \\
\midrule
All radii jointly & 5.5 & 25.9 & 69.8 & 92.7 \\
Pilot: radius & 8.2 & 10.3 & 41.6 & 93.9 \\
Pilot: region and radius & 5.1 & 3.2 & 70.8 & 94.7 \\
Cross-fitted: radius & 6.9 & 8.1 & 39.4 & 91.4 \\
Cross-fitted: region and radius & 6.5 & 4.5 & 82.6 & 93.3 \\
Oracle region & 5.4 & 58.2 & 98.1 & 93.9 \\
\bottomrule
\end{tabular}

%% file: tables/chem_primary.tex
\begin{tabular}{l l c r cc cc cc cc cc}
\toprule
 & & & & \multicolumn{2}{c}{F0} & \multicolumn{2}{c}{F1} & \multicolumn{2}{c}{F2} & \multicolumn{2}{c}{A1} & \multicolumn{2}{c}{A2} \\
dataset & filtration & deg. & $n_0$/$n_1$ & loc. & $p$ & loc. & $p$ & loc. & $p$ & loc. & $p$ & loc. & $p$ \\
\midrule
MUTAG & HKS $t=10$ & 1 & 63/112 & 1 & 0.024 & 3 & 0.005 & 1 & 0.001 & 58 & $<$0.001 & 26 & $<$0.001 \\
MUTAG & HKS $t=10$ & 0 & 63/112 & 2 & $<$0.001 & 7 & $<$0.001 & 2 & $<$0.001 & 116 & $<$0.001 & 58 & $<$0.001 \\
COX2 & HKS $t=10$ & 1 & 363/102 & 0 & 0.262 & 1 & 0.015 & 0 & 0.123 & 0 & 0.252 & 0 & 0.383 \\
DHFR & HKS $t=10$ & 1 & 269/449 & 1 & $<$0.001 & 4 & $<$0.001 & 1 & $<$0.001 & 20 & $<$0.001 & 7 & $<$0.001 \\
COX2 & HKS $t=10$ & 0 & 363/102 & 0 & 0.337 & 0 & 0.381 & 0 & 0.317 & 8 & 0.020 & 1 & 0.019 \\
COX2 & HKS $t=1$ & 1 & 363/102 & 0 & 0.166 & 0 & 0.216 & 0 & 0.191 & 0 & 0.054 & 4 & 0.040 \\
PTC & HKS $t=10$ & 1 & 183/142 & 0 & 0.805 & 0 & 0.390 & 0 & 0.593 & 1 & 0.039 & 0 & 0.340 \\
PTC & HKS $t=10$ & 0 & 183/142 & 0 & 0.141 & 2 & 0.033 & 0 & 0.206 & 0 & 0.618 & 0 & 0.567 \\
PTC & HKS $t=5$ & 1 & 183/142 & 0 & 0.312 & 0 & 0.267 & 0 & 0.493 & 0 & 0.505 & 0 & 0.521 \\
PTC & HKS $t=5$ & 0 & 183/142 & 0 & 0.549 & 0 & 0.252 & 0 & 0.401 & 0 & 0.874 & 0 & 0.826 \\
DHFR & HKS $t=10$ & 0 & 269/449 & 1 & $<$0.001 & 3 & $<$0.001 & 1 & 0.001 & 51 & $<$0.001 & 4 & $<$0.001 \\
COX2 & HKS $t=1$ & 0 & 363/102 & 0 & 0.167 & 0 & 0.201 & 0 & 0.063 & 23 & 0.005 & 5 & 0.038 \\
\bottomrule
\end{tabular}

%% file: tables/chem_stability.tex
\begin{tabular}{l l c l r c c r}
\toprule
dataset & filtration & deg. & method & empty & overlap & matched & loc. (median) \\
\midrule
MUTAG & HKS $t=10$ & 1 & F2 & 2/20 & 1.00 & 0.89 & 1.0 \\
MUTAG & HKS $t=10$ & 1 & A1 & 0/20 & 0.62 & 0.97 & 50.5 \\
MUTAG & HKS $t=10$ & 1 & A2 & 0/20 & 0.65 & 0.97 & 23.5 \\
MUTAG & HKS $t=10$ & 0 & F2 & 0/20 & 0.66 & 0.93 & 2.0 \\
MUTAG & HKS $t=10$ & 0 & A1 & 0/20 & 0.59 & 1.00 & 57.0 \\
MUTAG & HKS $t=10$ & 0 & A2 & 0/20 & 0.20 & 0.93 & 14.5 \\
DHFR & HKS $t=10$ & 1 & F2 & 1/20 & 1.00 & 0.95 & 1.0 \\
DHFR & HKS $t=10$ & 1 & A1 & 0/20 & 0.64 & 0.93 & 24.0 \\
DHFR & HKS $t=10$ & 1 & A2 & 0/20 & 0.36 & 0.87 & 8.0 \\
DHFR & HKS $t=10$ & 0 & F2 & 0/20 & 1.00 & 1.00 & 1.0 \\
DHFR & HKS $t=10$ & 0 & A1 & 0/20 & 0.94 & 1.00 & 53.0 \\
DHFR & HKS $t=10$ & 0 & A2 & 0/20 & 0.17 & 0.97 & 14.5 \\
COX2 & HKS $t=10$ & 1 & F2 & 10/20 & 1.00 & 0.47 & 0.5 \\
COX2 & HKS $t=10$ & 1 & A1 & 4/20 & 0.34 & 0.48 & 2.0 \\
COX2 & HKS $t=10$ & 1 & A2 & 4/20 & 0.26 & 0.48 & 2.0 \\
COX2 & HKS $t=10$ & 0 & F2 & 16/20 & 1.00 & 0.16 & 0.0 \\
COX2 & HKS $t=10$ & 0 & A1 & 13/20 & 0.48 & 0.32 & 0.0 \\
COX2 & HKS $t=10$ & 0 & A2 & 13/20 & 0.43 & 0.32 & 0.0 \\
COX2 & HKS $t=1$ & 1 & F2 & 17/20 & 0.50 & 0.11 & 0.0 \\
COX2 & HKS $t=1$ & 1 & A1 & 11/20 & 0.00 & 0.12 & 0.0 \\
COX2 & HKS $t=1$ & 1 & A2 & 12/20 & 0.31 & 0.19 & 0.0 \\
COX2 & HKS $t=1$ & 0 & F2 & 16/20 & 0.74 & 0.16 & 0.0 \\
COX2 & HKS $t=1$ & 0 & A1 & 10/20 & 0.24 & 0.29 & 1.5 \\
COX2 & HKS $t=1$ & 0 & A2 & 13/20 & 0.16 & 0.17 & 0.0 \\
PTC & HKS $t=10$ & 1 & F2 & 19/20 & -- & 0.00 & 0.0 \\
PTC & HKS $t=10$ & 1 & A1 & 20/20 & -- & -- & 0.0 \\
PTC & HKS $t=10$ & 1 & A2 & 20/20 & -- & -- & 0.0 \\
PTC & HKS $t=10$ & 0 & F2 & 16/20 & 0.12 & 0.16 & 0.0 \\
PTC & HKS $t=10$ & 0 & A1 & 17/20 & 0.80 & 0.11 & 0.0 \\
PTC & HKS $t=10$ & 0 & A2 & 19/20 & -- & 0.00 & 0.0 \\
PTC & HKS $t=5$ & 1 & F2 & 20/20 & -- & -- & 0.0 \\
PTC & HKS $t=5$ & 1 & A1 & 20/20 & -- & -- & 0.0 \\
PTC & HKS $t=5$ & 1 & A2 & 20/20 & -- & -- & 0.0 \\
PTC & HKS $t=5$ & 0 & F2 & 20/20 & -- & -- & 0.0 \\
PTC & HKS $t=5$ & 0 & A1 & 20/20 & -- & -- & 0.0 \\
PTC & HKS $t=5$ & 0 & A2 & 20/20 & -- & -- & 0.0 \\
\bottomrule
\end{tabular}

%% file: tables/chem_curves.tex
\begin{tabular}{l r l c r c}
\toprule
dataset & per class & method & any localized & loc. (median) & overlap with full map \\
\midrule
DHFR & 40 & A1 & 0.28 & 0.0 & 0.00 \\
DHFR & 40 & A2 & 0.28 & 0.0 & 0.00 \\
DHFR & 40 & F1 & 0.30 & 0.0 & 0.00 \\
DHFR & 80 & A1 & 0.52 & 1.0 & 0.07 \\
DHFR & 80 & A2 & 0.58 & 1.0 & 0.02 \\
DHFR & 80 & F1 & 0.48 & 0.0 & 0.00 \\
DHFR & 120 & A1 & 0.90 & 5.5 & 0.19 \\
DHFR & 120 & A2 & 0.84 & 3.0 & 0.09 \\
DHFR & 120 & F1 & 0.92 & 3.0 & 1.00 \\
MUTAG & 20 & A1 & 0.94 & 7.5 & 0.48 \\
MUTAG & 20 & A2 & 0.92 & 5.0 & 0.46 \\
MUTAG & 20 & F1 & 0.28 & 0.0 & 0.00 \\
MUTAG & 40 & A1 & 1.00 & 29.5 & 0.69 \\
MUTAG & 40 & A2 & 1.00 & 16.0 & 0.63 \\
MUTAG & 40 & F1 & 0.56 & 1.0 & 0.50 \\
MUTAG & 60 & A1 & 1.00 & 46.5 & 0.74 \\
MUTAG & 60 & A2 & 1.00 & 22.5 & 0.67 \\
MUTAG & 60 & F1 & 0.88 & 2.0 & 0.50 \\
\bottomrule
\end{tabular}

%% file: tables/mutag_structure.tex
\begin{tabular}{l l r c c}
\toprule
map & cycle basis & localized & rings in $\ge3$-fused systems in region & other rings in region \\
\midrule
A1 & minimum cycle basis & 58 & 331/346 (0.96) & 68/140 (0.49) \\
A1 & cycle basis & 58 & 297/312 (0.95) & 68/140 (0.49) \\
F1 & minimum cycle basis & 3 & 7/346 (0.02) & 24/140 (0.17) \\
F1 & cycle basis & 3 & 7/312 (0.02) & 24/140 (0.17) \\
\bottomrule
\end{tabular}

%% file: tables/chemgraph.tex
\begin{tabular}{llc rr c cc rrrcc}
\toprule
dataset & filtration & $H_d$ & $m{+}m'$ & $\ell$ & $\max_k|Z_k|$ & $\hat c_\alpha$ & $p$ & $|\hat R|$ & $|\hat R_{\mathrm{sd}}|$ & $|\hat R_{\mathrm{BH}}|$ & recovery & top agree \\
\midrule
COX2 & \texttt{hks\_t1} & $H_0$ & $243{+}68$ & 196 & 3.92 & 3.23 & 0.0076 & 14 & 14 & 26 & 0.42 & 0.80 \\
COX2 & \texttt{hks\_t1} & $H_1$ & $243{+}68$ & 196 & 3.11 & 3.32 & 0.0820 & 0 & 0 & 0 & 0.40 & 1.00 \\
COX2 & \texttt{hks\_t10} & $H_0$ & $243{+}68$ & 182 & 3.73 & 4.27 & 0.1144 & 0 & 0 & 2 & 0.12 & 0.00 \\
COX2 & \texttt{hks\_t10} & $H_1$ & $243{+}68$ & 168 & 7.83 & 6.78 & 0.0268 & 1 & 3 & 9 & 0.34 & 0.65 \\
COX2 & \texttt{ricci} & $H_0$ & $243{+}68$ & 164 & 4.19 & 4.07 & 0.0418 & 1 & 1 & 1 & 0.28 & 0.60 \\
COX2 & \texttt{ricci} & $H_1$ & $243{+}68$ & 192 & 3.96 & 3.18 & 0.0048 & 2 & 2 & 2 & 0.35 & 0.55 \\
DHFR & \texttt{hks\_t10} & $H_0$ & $197{+}307$ & 192 & 9.52 & 3.61 & 0.0002 & 55 & 56 & 82 & 0.97 & 1.00 \\
DHFR & \texttt{hks\_t10} & $H_1$ & $197{+}307$ & 178 & 7.69 & 3.93 & 0.0002 & 30 & 32 & 65 & 0.94 & 0.90 \\
MUTAG & \texttt{hks\_t10} & $H_0$ & $42{+}83$ & 187 & 7.71 & 3.89 & 0.0002 & 72 & 87 & 132 & 0.89 & 0.85 \\
MUTAG & \texttt{hks\_t10} & $H_1$ & $42{+}83$ & 145 & 8.80 & 5.07 & 0.0002 & 21 & 22 & 76 & 0.94 & 0.90 \\
MUTAG & \texttt{nodelabel} & $H_0$ & $42{+}83$ & 73 & 1.61 & 2.27 & 0.2012 & 0 & 0 & 0 & 0.00 & 0.30 \\
MUTAG & \texttt{nodelabel} & $H_1$ & $42{+}83$ & 164 & 3.58 & 2.30 & 0.0004 & 141 & 141 & 141 & 0.85 & 0.70 \\
PTC & \texttt{hks\_t10} & $H_0$ & $128{+}101$ & 177 & 2.92 & 4.36 & 0.4777 & 0 & 0 & 0 & 0.04 & 0.60 \\
PTC & \texttt{hks\_t10} & $H_1$ & $128{+}101$ & 145 & 3.17 & 4.17 & 0.3015 & 0 & 0 & 0 & -- & 0.40 \\
PTC & \texttt{hks\_t5} & $H_0$ & $128{+}101$ & 181 & 3.20 & 4.32 & 0.2478 & 0 & 0 & 0 & -- & 0.00 \\
PTC & \texttt{hks\_t5} & $H_1$ & $128{+}101$ & 163 & 3.88 & 3.82 & 0.0442 & 1 & 1 & 1 & -- & 0.30 \\
\bottomrule
\end{tabular}

%% file: tables/level.tex
\begin{tabular}{llrr ccc c}
\toprule
placement & $\ell$ & $m$ & & cover & FWER$_{\mathrm{boot}}$ & FWER$_{\mathrm{Bonf}}$ & $\hat c_\alpha$ / $z_{\mathrm{Bonf}}$ \\
\midrule
FPS & 50 & 20 & & 0.978 & 0.022 & 0.070 & 3.64 / 3.29 \\
FPS & 50 & 60 & & 0.970 & 0.030 & 0.054 & 3.47 / 3.29 \\
FPS & 100 & 20 & & 0.984 & 0.016 & 0.086 & 3.84 / 3.48 \\
FPS & 100 & 60 & & 0.952 & 0.048 & 0.082 & 3.62 / 3.48 \\
FPS & 200 & 20 & & 0.988 & 0.012 & 0.086 & 4.15 / 3.66 \\
FPS & 200 & 60 & & 0.968 & 0.032 & 0.056 & 3.86 / 3.66 \\
FPS & 395 & 20 & & 0.992 & 0.008 & 0.082 & 4.50 / 3.83 \\
FPS & 395 & 60 & & 0.966 & 0.034 & 0.066 & 4.12 / 3.83 \\
grid & 55 & 20 & & 0.966 & 0.034 & 0.088 & 3.67 / 3.32 \\
grid & 55 & 60 & & 0.946 & 0.054 & 0.084 & 3.50 / 3.32 \\
grid & 105 & 20 & & 0.986 & 0.014 & 0.106 & 4.06 / 3.49 \\
grid & 105 & 60 & & 0.968 & 0.032 & 0.074 & 3.77 / 3.49 \\
grid & 208 & 20 & & 0.982 & 0.018 & 0.252 & 4.98 / 3.67 \\
grid & 208 & 60 & & 0.978 & 0.022 & 0.198 & 4.41 / 3.67 \\
grid & 385 & 20 & & 0.990 & 0.010 & 0.482 & 5.99 / 3.83 \\
grid & 385 & 60 & & 0.994 & 0.006 & 0.332 & 5.16 / 3.83 \\
\bottomrule
\end{tabular}

%% file: tables/arms.tex
\begin{tabular}{l l r r cccc c c}
\toprule
arm & scale & $\ell$ & $m$ / $m'$ & cover & FWER$_{\mathrm{boot}}$ & FWER$_{\mathrm{Bonf}}$ & FWER$_{\mathrm{perm}}$ & power$_{0.2}$ & $\hat c_\alpha$ \\
\midrule
FPS in-sample & sample & 49 & 20 / 20 & -- & 0.042 & 0.036 & -- & 0.87 & 3.16 \\
FPS in-sample & sample & 49 & 60 / 60 & -- & 0.044 & 0.030 & -- & 0.98 & 3.23 \\
FPS in-sample & sample & 99 & 20 / 20 & -- & 0.068 & 0.040 & -- & 0.96 & 3.33 \\
FPS in-sample & sample & 99 & 60 / 60 & -- & 0.028 & 0.026 & -- & 1.00 & 3.41 \\
FPS in-sample & sample & 199 & 20 / 20 & -- & 0.060 & 0.032 & -- & 1.00 & 3.49 \\
FPS in-sample & sample & 199 & 60 / 60 & -- & 0.026 & 0.022 & -- & 1.00 & 3.57 \\
\addlinespace
grid, 20 vs 60 & pilot & 105 & 20 / 60 & 0.977 & 0.023 & 0.124 & 0.051 & 1.00 & 4.18 \\
grid, 20 vs 60 & pilot & 385 & 20 / 60 & 0.989 & 0.011 & 0.658 & 0.051 & -- & 6.64 \\
grid, 20 vs 60 & sample & 105 & 20 / 60 & 0.489 & 0.510 & 0.446 & 0.064 & 1.00 & 3.39 \\
grid, 20 vs 60 & sample & 385 & 20 / 60 & 0.969 & 0.031 & 0.019 & 0.044 & -- & 3.70 \\
\addlinespace
grid, balanced & pilot & 105 & 20 / 20 & 0.979 & 0.021 & 0.111 & 0.052 & 1.00 & 4.06 \\
grid, balanced & pilot & 385 & 20 / 20 & 0.992 & 0.008 & 0.503 & 0.036 & -- & 6.01 \\
grid, balanced & sample & 105 & 20 / 20 & 0.945 & 0.056 & 0.036 & 0.057 & 1.00 & 3.37 \\
grid, balanced & sample & 385 & 20 / 20 & 0.997 & 0.003 & 0.002 & 0.046 & -- & 3.61 \\
\addlinespace
pilot FPS & pilot & 50 & 20 / 20 & 0.978 & 0.022 & 0.070 & -- & 0.97 & 3.64 \\
pilot FPS & pilot & 50 & 60 / 60 & 0.970 & 0.030 & 0.054 & -- & 1.00 & 3.47 \\
pilot FPS & pilot & 100 & 20 / 20 & 0.984 & 0.016 & 0.086 & -- & 0.81 & 3.84 \\
pilot FPS & pilot & 100 & 60 / 60 & 0.952 & 0.048 & 0.082 & -- & 1.00 & 3.62 \\
pilot FPS & pilot & 200 & 20 / 20 & 0.988 & 0.012 & 0.086 & -- & 0.94 & 4.15 \\
pilot FPS & pilot & 200 & 60 / 60 & 0.968 & 0.032 & 0.056 & -- & 1.00 & 3.86 \\
pilot FPS & pilot & 395 & 20 / 20 & 0.992 & 0.008 & 0.082 & -- & -- & 4.50 \\
pilot FPS & pilot & 395 & 60 / 60 & 0.966 & 0.034 & 0.066 & -- & -- & 4.12 \\
\addlinespace
selection in-sample & sample & 50 & 20 / 20 & -- & 0.142 & 0.108 & -- & 1.00 & 3.15 \\
selection in-sample & sample & 50 & 60 / 60 & -- & 0.154 & 0.120 & -- & 1.00 & 3.21 \\
selection in-sample & sample & 100 & 20 / 20 & -- & 0.218 & 0.124 & -- & 1.00 & 3.31 \\
selection in-sample & sample & 100 & 60 / 60 & -- & 0.136 & 0.092 & -- & 1.00 & 3.38 \\
selection in-sample & sample & 200 & 20 / 20 & -- & 0.242 & 0.108 & -- & 1.00 & 3.40 \\
selection in-sample & sample & 200 & 60 / 60 & -- & 0.128 & 0.070 & -- & 1.00 & 3.50 \\
\addlinespace
selection on pilot & pilot & 50 & 20 / 20 & 0.984 & 0.016 & 0.080 & -- & 0.99 & 3.66 \\
selection on pilot & pilot & 50 & 60 / 60 & 0.966 & 0.034 & 0.062 & -- & 1.00 & 3.47 \\
selection on pilot & pilot & 100 & 20 / 20 & 0.980 & 0.020 & 0.080 & -- & 1.00 & 3.92 \\
selection on pilot & pilot & 100 & 60 / 60 & 0.968 & 0.032 & 0.074 & -- & 1.00 & 3.65 \\
selection on pilot & pilot & 200 & 20 / 20 & 0.980 & 0.020 & 0.068 & -- & 1.00 & 4.15 \\
selection on pilot & pilot & 200 & 60 / 60 & 0.974 & 0.026 & 0.044 & -- & 1.00 & 3.81 \\
\addlinespace
weak null, FPS & pilot & 100 & 20 / 60 & 0.968 & 0.032 & 0.104 & 0.048 & -- & 3.98 \\
weak null, FPS & pilot & 396 & 20 / 60 & 0.981 & 0.019 & 0.174 & 0.049 & -- & 4.83 \\
weak null, FPS & sample & 100 & 20 / 60 & 0.589 & 0.411 & 0.327 & 0.058 & -- & 3.35 \\
weak null, FPS & sample & 396 & 20 / 60 & 0.482 & 0.518 & 0.386 & 0.053 & -- & 3.64 \\
\addlinespace
weak null, grid & pilot & 105 & 20 / 60 & 0.980 & 0.020 & 0.150 & 0.051 & -- & 4.21 \\
weak null, grid & pilot & 382 & 20 / 60 & 0.993 & 0.007 & 0.626 & 0.045 & -- & 6.45 \\
weak null, grid & sample & 105 & 20 / 60 & 0.472 & 0.528 & 0.446 & 0.064 & -- & 3.39 \\
weak null, grid & sample & 382 & 20 / 60 & 0.973 & 0.027 & 0.015 & 0.040 & -- & 3.70 \\
\bottomrule
\end{tabular}

%% file: tables/localization.tex
\begin{tabular}{llrr c ccc cc c}
\toprule
placement & $\ell$ & $\Delta$ & $m$ & cover & power$_{\mathrm{boot}}$ & power$_{\mathrm{Bonf}}$ & power$_{\mathrm{BH}}$ & FDR$_{\mathrm{BH}}$ & FDR$_{\mathrm{boot}}$ & loc.\ err.\ \\
\midrule
FPS & 50 & 0.10 & 20 & 0.970 & 0.46 & 0.54 & 0.55 & 0.025 & 0.006 & 0.073 \\
FPS & 50 & 0.10 & 60 & 0.976 & 0.99 & 0.99 & 1.00 & 0.020 & 0.005 & 0.043 \\
FPS & 50 & 0.30 & 20 & 0.974 & 0.08 & 0.15 & 0.15 & 0.021 & 0.013 & 0.108 \\
FPS & 50 & 0.30 & 60 & 0.966 & 0.49 & 0.56 & 0.58 & 0.027 & 0.013 & 0.075 \\
FPS & 399 & 0.10 & 20 & 0.994 & 0.95 & 0.99 & 0.99 & 0.040 & 0.000 & 0.007 \\
FPS & 399 & 0.10 & 60 & 0.974 & 1.00 & 1.00 & 1.00 & 0.034 & 0.002 & 0.006 \\
FPS & 395 & 0.30 & 20 & 0.988 & 0.96 & 1.00 & 1.00 & 0.019 & 0.000 & 0.016 \\
FPS & 395 & 0.30 & 60 & 0.972 & 1.00 & 1.00 & 1.00 & 0.013 & 0.001 & 0.013 \\
grid & 55 & 0.10 & 20 & 0.958 & 0.99 & 1.00 & 1.00 & 0.033 & 0.006 & 0.016 \\
grid & 55 & 0.10 & 60 & 0.966 & 1.00 & 1.00 & 1.00 & 0.027 & 0.007 & 0.016 \\
grid & 55 & 0.30 & 20 & 0.970 & 0.98 & 1.00 & 1.00 & 0.025 & 0.004 & 0.019 \\
grid & 55 & 0.30 & 60 & 0.954 & 1.00 & 1.00 & 1.00 & 0.026 & 0.006 & 0.018 \\
grid & 389 & 0.10 & 20 & 0.998 & 0.01 & 0.64 & 0.71 & 0.237 & 0.000 & 0.078 \\
grid & 389 & 0.10 & 60 & 0.990 & 0.89 & 1.00 & 1.00 & 0.125 & 0.002 & 0.005 \\
grid & 385 & 0.30 & 20 & 0.994 & 0.01 & 0.67 & 0.75 & 0.164 & 0.000 & 0.068 \\
grid & 385 & 0.30 & 60 & 0.988 & 0.89 & 1.00 & 1.00 & 0.085 & 0.003 & 0.006 \\
\bottomrule
\end{tabular}

%% file: tables/baseline.tex
\begin{tabular}{l r r cc cc cc c}
\toprule
 & & & \multicolumn{2}{c}{FWER, $\Delta=0$} & \multicolumn{2}{c}{power, $\Delta=0.10$} & \multicolumn{2}{c}{power, $\Delta=0.20$} & \\
map & $\ell$ & $m$ & boot & perm & boot & perm & boot & perm & loc.\ err.\ \\
\midrule
image pixels & 63 & 20 & 0.044 & 0.052 & 0.09 & 0.10 & 0.33 & 0.34 & 0.118 \\
image pixels & 63 & 60 & 0.042 & 0.042 & 0.35 & 0.35 & 0.90 & 0.90 & 0.084 \\
image pixels & 111 & 20 & 0.066 & 0.084 & 0.15 & 0.19 & 0.52 & 0.57 & 0.065 \\
image pixels & 111 & 60 & 0.046 & 0.050 & 0.67 & 0.70 & 0.99 & 0.99 & 0.039 \\
image pixels & 189 & 20 & 0.018 & 0.052 & 0.23 & 0.30 & 0.66 & 0.74 & 0.052 \\
image pixels & 189 & 60 & 0.054 & 0.066 & 0.97 & 0.98 & 1.00 & 1.00 & 0.033 \\
image pixels & 325 & 20 & 0.024 & 0.076 & 0.63 & 0.76 & 0.89 & 0.94 & 0.022 \\
image pixels & 325 & 60 & 0.034 & 0.054 & 1.00 & 1.00 & 1.00 & 1.00 & 0.019 \\
landmarks (FPS) & 100 & 20 & 0.016 & 0.064 & 0.70 & 0.82 & 0.81 & 0.88 & 0.036 \\
landmarks (FPS) & 100 & 60 & 0.048 & 0.076 & 1.00 & 1.00 & 1.00 & 1.00 & 0.034 \\
landmarks (FPS) & 200 & 20 & 0.012 & 0.050 & 0.94 & 0.97 & 0.94 & 0.97 & 0.020 \\
landmarks (FPS) & 200 & 60 & 0.032 & 0.040 & 1.00 & 1.00 & 1.00 & 1.00 & 0.016 \\
\bottomrule
\end{tabular}

%% file: tables/comparative.tex
\begin{tabular}{lcccccccc}
\toprule
$\Delta$ & 0 & 0.025 & 0.05 & 0.075 & 0.1 & 0.15 & 0.2 & 0.3 \\
\midrule
\multicolumn{9}{l}{\emph{persistence regime}} \\
band (max-t, bootstrap) & 0.02 & 0.05 & 0.40 & 0.66 & 0.77 & 0.82 & 0.87 & 0.81 \\
embedding MMD (permutation) & 0.05 & 0.09 & 0.43 & 0.69 & 0.76 & 0.83 & 0.88 & 0.87 \\
STRAND (diagram perm.) & 0.05 & 0.05 & 0.05 & 0.06 & 0.08 & 0.06 & 0.08 & 0.13 \\
STRAND (pooled asympt.) & 0.05 & 0.05 & 0.05 & 0.06 & 0.07 & 0.07 & 0.09 & 0.15 \\
landscape MMD & 0.07 & 0.06 & 0.03 & 0.04 & 0.07 & 0.03 & 0.07 & 0.07 \\
persistence-image MMD & 0.06 & 0.06 & 0.05 & 0.07 & 0.09 & 0.10 & 0.15 & 0.33 \\
\midrule
\multicolumn{9}{l}{\emph{location regime}} \\
band (max-t, bootstrap) & 0.03 & 0.23 & 0.63 & 0.70 & 0.71 & 0.85 & 0.79 & 0.86 \\
embedding MMD (permutation) & 0.02 & 0.28 & 0.67 & 0.71 & 0.76 & 0.89 & 0.82 & 0.91 \\
STRAND (diagram perm.) & 0.04 & 0.05 & 0.05 & 0.05 & 0.03 & 0.05 & 0.03 & 0.03 \\
STRAND (pooled asympt.) & 0.04 & 0.05 & 0.05 & 0.04 & 0.04 & 0.05 & 0.04 & 0.04 \\
landscape MMD & 0.05 & 0.05 & 0.05 & 0.04 & 0.05 & 0.06 & 0.05 & 0.05 \\
persistence-image MMD & 0.03 & 0.05 & 0.06 & 0.05 & 0.03 & 0.06 & 0.03 & 0.04 \\
\bottomrule
\end{tabular}

%% file: tables/weaknull_var.tex
\begin{tabular}{l l r r ccc cc}
\toprule
 & & & & \multicolumn{3}{c}{pilot scale} & \multicolumn{2}{c}{sample scale} \\
\cmidrule(lr){5-7}\cmidrule(lr){8-9}
design & placement & $\ell$ & $m$ / $m'$ & band & Bonf. & perm & band & perm \\
\midrule
clustered & FPS & 100 & 20 / 20 & 0.032 & 0.024 & 0.042 & 0.252 & 0.201 \\
clustered & FPS & 100 & 20 / 60 & 0.041 & 0.028 & 0.000 & 0.090 & 0.024 \\
clustered & FPS & 100 & 60 / 20 & 0.048 & 0.035 & 0.178 & 0.343 & 0.184 \\
clustered & FPS & 400 & 20 / 20 & 0.039 & 0.014 & 0.048 & 0.283 & 0.219 \\
clustered & FPS & 400 & 20 / 60 & 0.057 & 0.010 & 0.001 & 0.121 & 0.024 \\
clustered & FPS & 400 & 60 / 20 & 0.026 & 0.021 & 0.135 & 0.352 & 0.187 \\
\addlinespace
clustered & grid & 59 & 20 / 20 & 0.027 & 0.107 & 0.041 & 0.245 & 0.239 \\
clustered & grid & 59 & 20 / 60 & 0.024 & 0.099 & 0.007 & 0.121 & 0.010 \\
clustered & grid & 59 & 60 / 20 & 0.021 & 0.132 & 0.125 & 0.649 & 0.333 \\
clustered & grid & 183 & 20 / 20 & 0.012 & 0.244 & 0.037 & 0.049 & 0.115 \\
clustered & grid & 183 & 20 / 60 & 0.011 & 0.276 & 0.015 & 0.163 & 0.016 \\
clustered & grid & 183 & 60 / 20 & 0.006 & 0.367 & 0.092 & 0.826 & 0.233 \\
\addlinespace
counts & FPS & 100 & 60 / 20 & 0.026 & 0.102 & 0.092 & 0.446 & 0.125 \\
counts & FPS & 392 & 60 / 20 & 0.005 & 0.135 & 0.065 & 0.548 & 0.135 \\
\addlinespace
counts & grid & 105 & 60 / 20 & 0.014 & 0.136 & 0.065 & 0.547 & 0.122 \\
counts & grid & 384 & 60 / 20 & 0.005 & 0.533 & 0.038 & 0.033 & 0.058 \\
\bottomrule
\end{tabular}

%% file: tables/level_full.tex
\begin{tabular}{llrr ccc c}
\toprule
placement & $\ell$ & $m$ & & cover & FWER$_{\mathrm{boot}}$ & FWER$_{\mathrm{Bonf}}$ & $\hat c_\alpha$ / $z_{\mathrm{Bonf}}$ \\
\midrule
FPS & 50 & 20 & & 0.978 & 0.022 & 0.070 & 3.64 / 3.29 \\
FPS & 50 & 40 & & 0.958 & 0.042 & 0.088 & 3.52 / 3.29 \\
FPS & 50 & 60 & & 0.970 & 0.030 & 0.054 & 3.47 / 3.29 \\
FPS & 100 & 20 & & 0.984 & 0.016 & 0.086 & 3.84 / 3.48 \\
FPS & 100 & 40 & & 0.962 & 0.038 & 0.062 & 3.69 / 3.48 \\
FPS & 100 & 60 & & 0.952 & 0.048 & 0.082 & 3.62 / 3.48 \\
FPS & 200 & 20 & & 0.988 & 0.012 & 0.086 & 4.15 / 3.66 \\
FPS & 200 & 40 & & 0.966 & 0.034 & 0.064 & 3.94 / 3.66 \\
FPS & 200 & 60 & & 0.968 & 0.032 & 0.056 & 3.86 / 3.66 \\
FPS & 395 & 20 & & 0.992 & 0.008 & 0.082 & 4.50 / 3.83 \\
FPS & 395 & 40 & & 0.982 & 0.018 & 0.088 & 4.23 / 3.83 \\
FPS & 395 & 60 & & 0.966 & 0.034 & 0.066 & 4.12 / 3.83 \\
grid & 55 & 20 & & 0.966 & 0.034 & 0.088 & 3.67 / 3.32 \\
grid & 55 & 40 & & 0.956 & 0.044 & 0.076 & 3.55 / 3.32 \\
grid & 55 & 60 & & 0.946 & 0.054 & 0.084 & 3.50 / 3.32 \\
grid & 105 & 20 & & 0.986 & 0.014 & 0.106 & 4.06 / 3.49 \\
grid & 105 & 40 & & 0.960 & 0.040 & 0.080 & 3.85 / 3.49 \\
grid & 105 & 60 & & 0.968 & 0.032 & 0.074 & 3.77 / 3.49 \\
grid & 208 & 20 & & 0.982 & 0.018 & 0.252 & 4.98 / 3.67 \\
grid & 208 & 40 & & 0.990 & 0.010 & 0.180 & 4.58 / 3.67 \\
grid & 208 & 60 & & 0.978 & 0.022 & 0.198 & 4.41 / 3.67 \\
grid & 385 & 20 & & 0.990 & 0.010 & 0.482 & 5.99 / 3.83 \\
grid & 385 & 40 & & 0.988 & 0.012 & 0.374 & 5.42 / 3.83 \\
grid & 385 & 60 & & 0.994 & 0.006 & 0.332 & 5.16 / 3.83 \\
\bottomrule
\end{tabular}

%% file: tables/localization_full_grid.tex
\begin{tabular}{llrr c ccc cc c}
\toprule
placement & $\ell$ & $\Delta$ & $m$ & cover & power$_{\mathrm{boot}}$ & power$_{\mathrm{Bonf}}$ & power$_{\mathrm{BH}}$ & FDR$_{\mathrm{BH}}$ & FDR$_{\mathrm{boot}}$ & loc.\ err.\ \\
\midrule
grid & 55 & 0.10 & 20 & 0.958 & 0.99 & 1.00 & 1.00 & 0.033 & 0.006 & 0.016 \\
grid & 55 & 0.10 & 40 & 0.960 & 1.00 & 1.00 & 1.00 & 0.029 & 0.007 & 0.016 \\
grid & 55 & 0.10 & 60 & 0.966 & 1.00 & 1.00 & 1.00 & 0.027 & 0.007 & 0.016 \\
grid & 55 & 0.15 & 20 & 0.972 & 0.97 & 0.99 & 0.99 & 0.028 & 0.003 & 0.019 \\
grid & 55 & 0.15 & 40 & 0.950 & 1.00 & 1.00 & 1.00 & 0.034 & 0.009 & 0.017 \\
grid & 55 & 0.15 & 60 & 0.952 & 1.00 & 1.00 & 1.00 & 0.024 & 0.004 & 0.016 \\
grid & 55 & 0.20 & 20 & 0.972 & 1.00 & 1.00 & 1.00 & 0.017 & 0.003 & 0.016 \\
grid & 55 & 0.20 & 40 & 0.968 & 1.00 & 1.00 & 1.00 & 0.023 & 0.004 & 0.016 \\
grid & 55 & 0.20 & 60 & 0.970 & 1.00 & 1.00 & 1.00 & 0.014 & 0.002 & 0.016 \\
grid & 55 & 0.30 & 20 & 0.970 & 0.98 & 1.00 & 1.00 & 0.025 & 0.004 & 0.019 \\
grid & 55 & 0.30 & 40 & 0.952 & 1.00 & 1.00 & 1.00 & 0.027 & 0.003 & 0.018 \\
grid & 55 & 0.30 & 60 & 0.954 & 1.00 & 1.00 & 1.00 & 0.026 & 0.006 & 0.018 \\
grid & 105 & 0.10 & 20 & 0.964 & 0.92 & 0.99 & 0.99 & 0.054 & 0.007 & 0.005 \\
grid & 105 & 0.10 & 40 & 0.974 & 1.00 & 1.00 & 1.00 & 0.038 & 0.003 & 0.003 \\
grid & 105 & 0.10 & 60 & 0.970 & 1.00 & 1.00 & 1.00 & 0.035 & 0.003 & 0.003 \\
grid & 105 & 0.15 & 20 & 0.988 & 0.99 & 1.00 & 1.00 & 0.028 & 0.003 & 0.006 \\
grid & 105 & 0.15 & 40 & 0.976 & 1.00 & 1.00 & 1.00 & 0.033 & 0.003 & 0.005 \\
grid & 105 & 0.15 & 60 & 0.968 & 1.00 & 1.00 & 1.00 & 0.032 & 0.004 & 0.004 \\
grid & 105 & 0.20 & 20 & 0.990 & 1.00 & 1.00 & 1.00 & 0.034 & 0.001 & 0.003 \\
grid & 105 & 0.20 & 40 & 0.976 & 1.00 & 1.00 & 1.00 & 0.025 & 0.001 & 0.003 \\
grid & 105 & 0.20 & 60 & 0.974 & 1.00 & 1.00 & 1.00 & 0.023 & 0.001 & 0.003 \\
grid & 105 & 0.30 & 20 & 0.976 & 0.97 & 1.00 & 1.00 & 0.032 & 0.004 & 0.006 \\
grid & 105 & 0.30 & 40 & 0.974 & 1.00 & 1.00 & 1.00 & 0.035 & 0.003 & 0.003 \\
grid & 105 & 0.30 & 60 & 0.978 & 1.00 & 1.00 & 1.00 & 0.026 & 0.002 & 0.003 \\
grid & 210 & 0.10 & 20 & 0.992 & 0.22 & 0.87 & 0.88 & 0.093 & 0.004 & 0.034 \\
grid & 210 & 0.10 & 40 & 0.986 & 0.93 & 1.00 & 1.00 & 0.068 & 0.002 & 0.017 \\
grid & 210 & 0.10 & 60 & 0.980 & 1.00 & 1.00 & 1.00 & 0.063 & 0.001 & 0.016 \\
grid & 210 & 0.15 & 20 & 0.992 & 0.15 & 0.79 & 0.84 & 0.089 & 0.002 & 0.037 \\
grid & 210 & 0.15 & 40 & 0.984 & 0.92 & 1.00 & 1.00 & 0.070 & 0.003 & 0.019 \\
grid & 210 & 0.15 & 60 & 0.984 & 1.00 & 1.00 & 1.00 & 0.055 & 0.003 & 0.018 \\
grid & 210 & 0.20 & 20 & 0.992 & 0.29 & 0.86 & 0.89 & 0.083 & 0.006 & 0.036 \\
grid & 210 & 0.20 & 40 & 0.988 & 0.94 & 1.00 & 1.00 & 0.058 & 0.002 & 0.017 \\
grid & 210 & 0.20 & 60 & 0.982 & 1.00 & 1.00 & 1.00 & 0.050 & 0.002 & 0.017 \\
grid & 210 & 0.30 & 20 & 0.996 & 0.23 & 0.82 & 0.84 & 0.074 & 0.000 & 0.037 \\
grid & 210 & 0.30 & 40 & 0.978 & 0.96 & 1.00 & 1.00 & 0.058 & 0.003 & 0.020 \\
grid & 210 & 0.30 & 60 & 0.978 & 1.00 & 1.00 & 1.00 & 0.049 & 0.003 & 0.019 \\
grid & 389 & 0.10 & 20 & 0.998 & 0.01 & 0.64 & 0.71 & 0.237 & 0.000 & 0.078 \\
grid & 389 & 0.10 & 40 & 0.984 & 0.39 & 0.98 & 0.99 & 0.153 & 0.008 & 0.015 \\
grid & 389 & 0.10 & 60 & 0.990 & 0.89 & 1.00 & 1.00 & 0.125 & 0.002 & 0.005 \\
grid & 386 & 0.15 & 20 & 0.988 & 0.02 & 0.64 & 0.74 & 0.215 & 0.001 & 0.080 \\
grid & 386 & 0.15 & 40 & 0.982 & 0.25 & 0.98 & 0.99 & 0.130 & 0.006 & 0.023 \\
grid & 386 & 0.15 & 60 & 0.980 & 0.78 & 1.00 & 1.00 & 0.094 & 0.000 & 0.012 \\
grid & 384 & 0.20 & 20 & 0.996 & 0.03 & 0.70 & 0.75 & 0.172 & 0.003 & 0.055 \\
grid & 384 & 0.20 & 40 & 0.992 & 0.53 & 0.99 & 0.99 & 0.099 & 0.002 & 0.009 \\
grid & 384 & 0.20 & 60 & 0.978 & 0.96 & 1.00 & 1.00 & 0.083 & 0.004 & 0.004 \\
grid & 385 & 0.30 & 20 & 0.994 & 0.01 & 0.67 & 0.75 & 0.164 & 0.000 & 0.068 \\
grid & 385 & 0.30 & 40 & 0.986 & 0.36 & 0.97 & 0.99 & 0.111 & 0.005 & 0.017 \\
grid & 385 & 0.30 & 60 & 0.988 & 0.89 & 1.00 & 1.00 & 0.085 & 0.003 & 0.006 \\
\bottomrule
\end{tabular}

%% file: tables/localization_full_fps.tex
\begin{tabular}{llrr c ccc cc c}
\toprule
placement & $\ell$ & $\Delta$ & $m$ & cover & power$_{\mathrm{boot}}$ & power$_{\mathrm{Bonf}}$ & power$_{\mathrm{BH}}$ & FDR$_{\mathrm{BH}}$ & FDR$_{\mathrm{boot}}$ & loc.\ err.\ \\
\midrule
FPS & 50 & 0.10 & 20 & 0.970 & 0.46 & 0.54 & 0.55 & 0.025 & 0.006 & 0.073 \\
FPS & 50 & 0.10 & 40 & 0.966 & 0.89 & 0.91 & 0.92 & 0.017 & 0.005 & 0.045 \\
FPS & 50 & 0.10 & 60 & 0.976 & 0.99 & 0.99 & 1.00 & 0.020 & 0.005 & 0.043 \\
FPS & 49 & 0.15 & 20 & 0.966 & 0.95 & 0.99 & 0.99 & 0.042 & 0.009 & 0.033 \\
FPS & 49 & 0.15 & 40 & 0.966 & 1.00 & 1.00 & 1.00 & 0.033 & 0.005 & 0.032 \\
FPS & 49 & 0.15 & 60 & 0.964 & 1.00 & 1.00 & 1.00 & 0.030 & 0.004 & 0.031 \\
FPS & 50 & 0.20 & 20 & 0.968 & 0.97 & 0.99 & 0.99 & 0.028 & 0.005 & 0.015 \\
FPS & 50 & 0.20 & 40 & 0.966 & 1.00 & 1.00 & 1.00 & 0.029 & 0.003 & 0.013 \\
FPS & 50 & 0.20 & 60 & 0.960 & 1.00 & 1.00 & 1.00 & 0.027 & 0.004 & 0.012 \\
FPS & 50 & 0.30 & 20 & 0.974 & 0.08 & 0.15 & 0.15 & 0.021 & 0.013 & 0.108 \\
FPS & 50 & 0.30 & 40 & 0.982 & 0.28 & 0.35 & 0.36 & 0.013 & 0.002 & 0.087 \\
FPS & 50 & 0.30 & 60 & 0.966 & 0.49 & 0.56 & 0.58 & 0.027 & 0.013 & 0.075 \\
FPS & 100 & 0.10 & 20 & 0.974 & 0.70 & 0.87 & 0.89 & 0.028 & 0.011 & 0.034 \\
FPS & 100 & 0.10 & 40 & 0.968 & 0.99 & 1.00 & 1.00 & 0.031 & 0.003 & 0.028 \\
FPS & 100 & 0.10 & 60 & 0.962 & 1.00 & 1.00 & 1.00 & 0.035 & 0.002 & 0.028 \\
FPS & 100 & 0.15 & 20 & 0.974 & 0.97 & 0.99 & 0.99 & 0.029 & 0.004 & 0.011 \\
FPS & 100 & 0.15 & 40 & 0.972 & 1.00 & 1.00 & 1.00 & 0.022 & 0.002 & 0.010 \\
FPS & 100 & 0.15 & 60 & 0.982 & 1.00 & 1.00 & 1.00 & 0.023 & 0.002 & 0.009 \\
FPS & 100 & 0.20 & 20 & 0.982 & 0.81 & 0.94 & 0.95 & 0.022 & 0.002 & 0.036 \\
FPS & 100 & 0.20 & 40 & 0.956 & 1.00 & 1.00 & 1.00 & 0.023 & 0.004 & 0.034 \\
FPS & 100 & 0.20 & 60 & 0.962 & 1.00 & 1.00 & 1.00 & 0.022 & 0.003 & 0.034 \\
FPS & 99 & 0.30 & 20 & 0.966 & 0.75 & 0.88 & 0.89 & 0.027 & 0.007 & 0.028 \\
FPS & 99 & 0.30 & 40 & 0.958 & 1.00 & 1.00 & 1.00 & 0.021 & 0.005 & 0.021 \\
FPS & 99 & 0.30 & 60 & 0.948 & 1.00 & 1.00 & 1.00 & 0.017 & 0.004 & 0.021 \\
FPS & 200 & 0.10 & 20 & 0.986 & 0.94 & 1.00 & 1.00 & 0.037 & 0.004 & 0.016 \\
FPS & 200 & 0.10 & 40 & 0.978 & 1.00 & 1.00 & 1.00 & 0.025 & 0.001 & 0.015 \\
FPS & 200 & 0.10 & 60 & 0.978 & 1.00 & 1.00 & 1.00 & 0.029 & 0.001 & 0.014 \\
FPS & 200 & 0.15 & 20 & 0.972 & 0.94 & 0.99 & 1.00 & 0.020 & 0.001 & 0.022 \\
FPS & 200 & 0.15 & 40 & 0.978 & 1.00 & 1.00 & 1.00 & 0.024 & 0.000 & 0.021 \\
FPS & 200 & 0.15 & 60 & 0.980 & 1.00 & 1.00 & 1.00 & 0.018 & 0.001 & 0.020 \\
FPS & 200 & 0.20 & 20 & 0.970 & 0.94 & 0.99 & 0.99 & 0.031 & 0.001 & 0.020 \\
FPS & 200 & 0.20 & 40 & 0.972 & 1.00 & 1.00 & 1.00 & 0.023 & 0.001 & 0.016 \\
FPS & 200 & 0.20 & 60 & 0.962 & 1.00 & 1.00 & 1.00 & 0.021 & 0.001 & 0.016 \\
FPS & 199 & 0.30 & 20 & 0.974 & 0.98 & 1.00 & 1.00 & 0.017 & 0.001 & 0.016 \\
FPS & 199 & 0.30 & 40 & 0.976 & 1.00 & 1.00 & 1.00 & 0.017 & 0.002 & 0.015 \\
FPS & 199 & 0.30 & 60 & 0.970 & 1.00 & 1.00 & 1.00 & 0.017 & 0.000 & 0.015 \\
FPS & 399 & 0.10 & 20 & 0.994 & 0.95 & 0.99 & 0.99 & 0.040 & 0.000 & 0.007 \\
FPS & 399 & 0.10 & 40 & 0.982 & 1.00 & 1.00 & 1.00 & 0.032 & 0.002 & 0.006 \\
FPS & 399 & 0.10 & 60 & 0.974 & 1.00 & 1.00 & 1.00 & 0.034 & 0.002 & 0.006 \\
FPS & 394 & 0.15 & 20 & 0.988 & 0.97 & 1.00 & 1.00 & 0.026 & 0.000 & 0.010 \\
FPS & 394 & 0.15 & 40 & 0.970 & 1.00 & 1.00 & 1.00 & 0.022 & 0.001 & 0.009 \\
FPS & 394 & 0.15 & 60 & 0.974 & 1.00 & 1.00 & 1.00 & 0.026 & 0.001 & 0.009 \\
FPS & 398 & 0.20 & 20 & 0.988 & 0.92 & 0.99 & 1.00 & 0.031 & 0.001 & 0.023 \\
FPS & 398 & 0.20 & 40 & 0.984 & 1.00 & 1.00 & 1.00 & 0.030 & 0.000 & 0.020 \\
FPS & 398 & 0.20 & 60 & 0.974 & 1.00 & 1.00 & 1.00 & 0.028 & 0.000 & 0.019 \\
FPS & 395 & 0.30 & 20 & 0.988 & 0.96 & 1.00 & 1.00 & 0.019 & 0.000 & 0.016 \\
FPS & 395 & 0.30 & 40 & 0.974 & 1.00 & 1.00 & 1.00 & 0.015 & 0.001 & 0.014 \\
FPS & 395 & 0.30 & 60 & 0.972 & 1.00 & 1.00 & 1.00 & 0.013 & 0.001 & 0.013 \\
\bottomrule
\end{tabular}